\newcommand{\defeq}{\coloneqq}
\newcommand{\Set}[1]{\{#1\}}
\newcommand{\UnitSub}[1]{\mbox{\it unit}_{#1}}
\newcommand{\Unit}[1]{[#1]}
\newcommand{\Bind}{\star}
\newcommand{\Subst}[2]{[ #1 / #2]}
\newcommand{\ValTerm}{\textit{Val}}
\newcommand{\ComTerm}{\textit{Com}}
\newcommand{\Red}{\red}
\newcommand{\RedStar}{\red^*}
\newcommand{\LeftUnit}{\textit{Left unit}}
\newcommand{\RightUnit}{\textit{Right unit}}
\newcommand{\Assoc}{\textit{Assoc}}
\newcommand{\lc}{\lambda_{\normalfont\scalebox{.4}{\copyright}}}
\newcommand{\II}{\mathbf{I}}
\newcommand{\betac}{\beta_c}
\newcommand{\bigRVal}{(\textit{Val-conv})}
\newcommand{\bigRBind}{(\Bind\textit{-conv})}
\newcommand{\bigRSet}{(\textit{set-conv})}
\newcommand{\bigRGet}{(\textit{get-conv})}
\newcommand{\letsf}{\mathsf{let}}
\newcommand{\insf}{\mathsf{in}}
\newcommand{\Let}[3]{\letsf \, #1 \!\defeq\! #2 \, \insf \, #3}
\newcommand{\Var}{\textit{Var}}
\newcommand{\Sem}[1]{[\hspace{-0.6mm}[ #1 ]\hspace{-0.6mm}]}
\newcommand{\circlearrow}{}
\DeclareRobustCommand{\circlearrow}{%
	\mathrel{\vphantom{\rightarrow}\mathpalette\circle@arrow\relax}%
}
\newcommand{\circle@arrow}[2]{%
	\m@th
	\ooalign{%
		\hidewidth$#1\circ\mkern1mu$\hidewidth\cr
		$#1\longrightarrow$\cr}%
}
\renewcommand{\int}{\mathsf{i}}
\newcommand{\Inter}{\wedge}
\newcommand{\Th}{\textit{Th}}
\newcommand{\der}{\vdash}
\newcommand{\Env}{\mbox{\it Env}}
\newcommand{\metalambda}{%
	\mathop{%
		\rlap{$\lambda$}%
		\mkern2mu
		\raisebox{.275ex}{$\lambda$}%
	}%
}
\newcommand{\CatC}{{\cal C}}
\newcommand{\CatDom}{\textbf{Dom}}
\newcommand{\RED}[1]{{\color{red}{#1}}}
\newcommand{\violet}[1]{{\color{violet}{#1}}}
\renewcommand{\to}{\xrightarrow{}}
\newcommand{\red}{\rightarrow}
\newcommand{\lam}{\lambda}
\newcommand{\uset}[3][0ex]{%
	\mathrel{\mathop{#3}\limits_{
			\vbox to#1{\kern-6\ex@
				\hbox{$\scriptstyle#2$}\vss}}}}
\newcommand{\version}{0}
\newcommand{\commenti}{0} 
\newcommand{\condinc}[2]{\ifthenelse{\equal{\commenti}{0}}{#1}{**\violet{#2}} }
\newcommand{\SLV}[2]{\ifthenelse{\equal{\version}{0}}{#1}{ \RED{#2}}}
\newcommand{\unit}[1]{[ #1 ]}
\newcommand{\get}[2]{\textit{get}_{#1} (#2)}
\newcommand{\set}[3]{\textit{set}_{#1} (#2, #3)}
\newcommand{\Aget}[1]{\textit{lkp}_{#1}}
\newcommand{\Aset}[1]{\textit{upd}_{#1}}
\newcommand{\StoreSort}{\textit{Store}}
\newcommand{\emp}{\textit{emp}}
\newcommand{\SemD}[1]{\Sem{#1}^D}
\newcommand{\SemSD}[1]{\Sem{#1}^{\GS D}}
\newcommand{\BigStep}[4]{(#1 , #2) \Downarrow (#3 , #4)}
\newcommand{\SmallStep}[4]{(#1 , #2) \Red (#3 , #4)}
\newcommand{\SmallStepStar}[4]{(#1 , #2) \RedStar (#3 , #4)}
\newcommand{\tuple}[1]{\langle #1 \rangle}
\newcommand{\Label}{\textbf{L}}
\newcommand{\dom}[1]{\textit{dom}( #1 )}
\newcommand{\mutelambda}{\lambda \underbar{~} \,}
\newcommand{\storeMap}{\varsigma}
\newcommand{\finMap}{\varsigma}
\newcommand{\omegaD}{\omega_D}
\newcommand{\omegaS}{\omega_{S}}
\newcommand{\omegaC}{\omega_{C}}
\newcommand{\omegaSD}{\omega_{\GS D}}
\newcommand{\leqD}{\leq_D}
\newcommand{\leqS}{\leq_S}
\newcommand{\leqC}{\leq_{C}}
\newcommand{\leqSD}{\leq_{\GS D}}
\newcommand{\Lang}{\mathcal{L}}
\newcommand{\LangD}{\Lang_{D}}
\newcommand{\LangS}{\Lang_{S}}
\newcommand{\LangC}{\Lang_{C}}
\newcommand{\LangSD}{\Lang_{\GS D}}
\newcommand{\SemS}[1]{\Sem{#1}^{\Store}}
\renewcommand{\II}{\mathcal{I}}
\newcommand{\sat}[2]{| #1 |_{#2}}
\newcommand{\seqR}{(\, ;  )}
\renewcommand{\II}{{\cal I}}
\newcommand{\omegaR}{(\omega)}
\newcommand{\interR}{(\Inter)}
\newcommand{\leqR}{(\leq)}
\newcommand{\varR}{(\textit{var})}
\newcommand{\lambdaR}{(\lambda)}
\newcommand{\unitR}{(\textit{unit})}
\newcommand{\bindR}{(\Bind)}
\newcommand{\getR}{(\textit{get})}
\newcommand{\setR}{(\textit{set})}
\newcommand{\lkpR}{(\Aget{})}
\newcommand{\updRa}{(\textit{upd}_1)}
\newcommand{\updRb}{(\textit{upd}_2)}
\newcommand{\confR}{(\textit{conf})}
\newcommand{\betaR}{(\betac)}
\newcommand{\BindR}{(\Bind\textit{-red})}
\newcommand{\getredR}{(\textit{get-red})}
\newcommand{\setredR}{(\textit{set-red})}
\newcommand{\myparagraph}[1]{\medskip \noindent {\bf #1}.}
\newcommand{\lamImp}{\lambda_{imp}}
\newcommand{\Dcat}{{\cal D}}
\newcommand{\GS}{\mathbb{S}}
\newcommand{\GSext}[1]{#1^{\GS}}
\newcommand{\Store}{S}
\newcommand{\SemC}[1]{\Sem{#1}^C}
\renewcommand{\II}{ }
\newcommand{\BotC}{\spadesuit}
\newcommand{\Iff}{\Leftrightarrow}
\newcommand{\Then}{\Rightarrow}
\newdimen\proofrulebreadth \proofrulebreadth=.05em
\newdimen\proofdotseparation \proofdotseparation=1.25ex
\newdimen\proofrulebaseline \proofrulebaseline=2ex
\let\then\relax
\def\hfi{\hskip0pt plus.0001fil}
\mathchardef\squigto="3A3B
\newif\ifinsideprooftree\insideprooftreefalse
\newif\ifonleftofproofrule\onleftofproofrulefalse
\newif\ifproofdots\proofdotsfalse
\newif\ifdoubleproof\doubleprooffalse
\let\wereinproofbit\relax
\newdimen\shortenproofleft
\newdimen\shortenproofright
\newdimen\proofbelowshift
\newbox\proofabove
\newbox\proofbelow
\newbox\proofrulename
\def\shiftproofbelow{\let\next\relax\afterassignment\setshiftproofbelow\dimen0 }
\def\shiftproofbelowneg{\def\next{\multiply\dimen0 by-1 }%
\afterassignment\setshiftproofbelow\dimen0 }
\def\setshiftproofbelow{\next\proofbelowshift=\dimen0 }
\def\setproofrulebreadth{\proofrulebreadth}
\def\prooftree{
%
\ifnum  \lastpenalty=1
\then   \unpenalty
\else   \onleftofproofrulefalse
\fi
%
\ifonleftofproofrule
\else   \ifinsideprooftree
        \then   \hskip.5em plus1fil
        \fi
\fi
%
\bgroup
\setbox\proofbelow=\hbox{}\setbox\proofrulename=\hbox{}%
\let\justifies\proofover\let\leadsto\proofoverdots\let\Justifies\proofoverdbl
\let\using\proofusing\let\[\prooftree
\ifinsideprooftree\let\]\endprooftree\fi
\proofdotsfalse\doubleprooffalse
\let\thickness\setproofrulebreadth
\let\shiftright\shiftproofbelow \let\shift\shiftproofbelow
\let\shiftleft\shiftproofbelowneg
\let\ifwasinsideprooftree\ifinsideprooftree
\insideprooftreetrue
%
\setbox\proofabove=\hbox\bgroup$\displaystyle 
\let\wereinproofbit\prooftree
%
\shortenproofleft=0pt \shortenproofright=0pt \proofbelowshift=0pt
%
\onleftofproofruletrue\penalty1
}
\def\eproofbit{
%
\ifx    \wereinproofbit\prooftree
\then   \ifcase \lastpenalty
        \then   \shortenproofright=0pt  
        \or     \unpenalty\hfil         
        \or     \unpenalty\unskip       
        \else   \shortenproofright=0pt  
        \fi
\fi
%
\global\dimen0=\shortenproofleft
\global\dimen1=\shortenproofright
\global\dimen2=\proofrulebreadth
\global\dimen3=\proofbelowshift
\global\dimen4=\proofdotseparation
\global\count255=\proofdotnumber
%
$\egroup  
%
\shortenproofleft=\dimen0
\shortenproofright=\dimen1
\proofrulebreadth=\dimen2
\proofbelowshift=\dimen3
\proofdotseparation=\dimen4
\proofdotnumber=\count255
}
\def\proofover{
\eproofbit 
\setbox\proofbelow=\hbox\bgroup 
\let\wereinproofbit\proofover
$\displaystyle
}%
\def\proofoverdbl{
\eproofbit 
\doubleprooftrue
\setbox\proofbelow=\hbox\bgroup 
\let\wereinproofbit\proofoverdbl
$\displaystyle
}%
\def\proofoverdots{
\eproofbit 
\proofdotstrue
\setbox\proofbelow=\hbox\bgroup 
\let\wereinproofbit\proofoverdots
$\displaystyle
}%
\def\proofusing{
\eproofbit 
\setbox\proofrulename=\hbox\bgroup 
\let\wereinproofbit\proofusing
\kern0.3em$
}
\def\endprooftree{
\eproofbit 
  \dimen5 =0pt
%
\dimen0=\wd\proofabove \advance\dimen0-\shortenproofleft
\advance\dimen0-\shortenproofright
%
\dimen1=.5\dimen0 \advance\dimen1-.5\wd\proofbelow
\dimen4=\dimen1
\advance\dimen1\proofbelowshift \advance\dimen4-\proofbelowshift
%
\ifdim  \dimen1<0pt
\then   \advance\shortenproofleft\dimen1
        \advance\dimen0-\dimen1
        \dimen1=0pt
        \ifdim  \shortenproofleft<0pt
        \then   \setbox\proofabove=\hbox{%
                        \kern-\shortenproofleft\unhbox\proofabove}%
                \shortenproofleft=0pt
        \fi
\fi
%
\ifdim  \dimen4<0pt
\then   \advance\shortenproofright\dimen4
        \advance\dimen0-\dimen4
        \dimen4=0pt
\fi
%
\ifdim  \shortenproofright<\wd\proofrulename
\then   \shortenproofright=\wd\proofrulename
\fi
%
\dimen2=\shortenproofleft \advance\dimen2 by\dimen1
\dimen3=\shortenproofright\advance\dimen3 by\dimen4
%
\ifproofdots
\then
        \dimen6=\shortenproofleft \advance\dimen6 .5\dimen0
        \setbox1=\vbox to\proofdotseparation{\vss\hbox{$\cdot$}\vss}%
        \setbox0=\hbox{%
                \advance\dimen6-.5\wd1
                \kern\dimen6
                $\vcenter to\proofdotnumber\proofdotseparation
                        {\leaders\box1\vfill}$%
                \unhbox\proofrulename}%
\else   \dimen6=\fontdimen22\the\textfont2 
        \dimen7=\dimen6
        \advance\dimen6by.5\proofrulebreadth
        \advance\dimen7by-.5\proofrulebreadth
        \setbox0=\hbox{%
                \kern\shortenproofleft
                \ifdoubleproof
                \then   \hbox to\dimen0{%
                        $\mathsurround0pt\mathord=\mkern-6mu%
                        \cleaders\hbox{$\mkern-2mu=\mkern-2mu$}\hfill
                        \mkern-6mu\mathord=$}%
                \else   \vrule height\dimen6 depth-\dimen7 width\dimen0
                \fi
                \unhbox\proofrulename}%
        \ht0=\dimen6 \dp0=-\dimen7
\fi
%
\let\doll\relax
\ifwasinsideprooftree
\then   \let\VBOX\vbox
\else   \ifmmode\else$\let\doll=$\fi
        \let\VBOX\vcenter
\fi
\VBOX   {\baselineskip\proofrulebaseline \lineskip.2ex
        \expandafter\lineskiplimit\ifproofdots0ex\else-0.6ex\fi
        \hbox   spread\dimen5   {\hfi\unhbox\proofabove\hfi}%
        \hbox{\box0}%
        \hbox   {\kern\dimen2 \box\proofbelow}}\doll%
%
\global\dimen2=\dimen2
\global\dimen3=\dimen3
\egroup 
\ifonleftofproofrule
\then   \shortenproofleft=\dimen2
\fi
\shortenproofright=\dimen3
%
\onleftofproofrulefalse
\ifinsideprooftree
\then   \hskip.5em plus 1fil \penalty2
\fi
}
\begin{document}

		\title{Intersection Types for a Computational lambda-Calculus with Global Store}

\address{riccardo.treglia@gmail.com}

\author{Ugo de'Liguoro \\
	Dipartimento di Informatica,
	Universit\`a di Torino \\ C.so Svizzera 185, Torino, Italy\\
	ugo.deliguoro@unito.it
	\and Riccardo Treglia\\
	Universitas Mercatorum\\
	riccardo.treglia@gmail.com } 
		
\maketitle

\runninghead{U. de'Liguoro, R. Treglia}{Intersection Types for a Computational lambda-Calculus with Global Store}		
		
\begin{abstract}
We study the semantics of an untyped lambda-calculus equipped with operators representing read and write operations from and to a global store. We adopt the monadic approach to model side-effects and treat read and write as algebraic operations over a monad. We introduce operational and denotational semantics and a type assignment system of intersection types and
prove that types are invariant under the reduction and expansion
of term and state configurations. Finally, we characterize convergent terms via their typings, establishing the adequacy of the denotational semantics
w.r.t. the operational semantics. 

\end{abstract}

%
%

\begin{keywords}

	State Monad, Imperative lambda calculus, Type assignment systems
	
\end{keywords}


		

\section{Introduction}
\label{sec:Introduction}

The problem of integrating non-functional aspects into functional programming languages goes back to the early days of functional programming;
nowadays, even procedural and object-oriented languages embody more and more features from declarative languages, renewing interest and
motivations in the investigation of higher-order effectful computations.

Focusing on side-effects, the methods used in the early studies to treat the store, 
e.g.   \cite{FelleisenFriedman89,Tofte90,WrightFelleisen94} and \cite{SwarupReddy91}, are essentially additive:
update and dereferentiation  primitives are added to a (often typed) $\lambda$-calculus, possibly with constructs to dynamically create and initialize
new locations. 
Then, a posteriori, tools to reason about such calculi are built either by extending the type discipline or by means of denotational and operational semantics, or both. Although our main concern here is the study of an imperative $\lambda$-calculus, we are not following the same path of previous treatments
of this topic in the literature, rather we insist on the construction of the type assignment system from first principles and denotational semantics of effectuful calculi in general, as in the companion paper \cite{deLiguoroT23}; differently than in our previous work we now consider the operational semantics of the imperative calculus.

A turning point in the study of computational effects has been Moggi's idea to model effectful computations as morphisms of the Kleisli category of a monad in \cite{Moggi'91}, starting a very rich thread in investigating programming language foundations based on categorical semantics which is still flourishing.
The methodological advantage is that we have a uniform and abstract
way of speaking of various kinds of effects, and the definition of equational logics to reason about them. At the same time computational effects 
including side effects can be safely embodied into purely functional languages without disrupting their declarative nature, as shown by 
Wadler's \cite{Wadler92,Wadler-Monads} together with a long series of papers
by the same author and others.

Initiating with \cite{deLiguoroTreglia20}, we have approached the issue of modelling effects in a computational 
$\lambda$-calculus using domain logic and intersection types. Since \cite{BCD'83}, intersection types have been understood
as a type theoretic tool to construct $\lambda$-models, which has systematically related to domain theory in \cite{Abramsky'91}.
As such, intersection types can serve as an intermediate layer between the abstract denotational semantics and 
the concrete syntax of the calculus, which we want to extend to the case of Moggi's computational calculi.

The idea is to model untyped computational $\lambda$-calculi into  domains satisfying the equation $D = D \to TD$, clearly
reminiscent of Scott's $D = D \to D$ reflexive object, where $T$ is a monad (see \cite{Moggi'88}, \S 5 and Section \ref{sec:denotational} below). Since intersection
types denote compact points in such a domain, we can recover from the domain theoretic definition of the monad 
all the information needed to build a sound and complete type assignment system.

The study in \cite{deLiguoroTreglia20} was confined to a generic monad represented by an uninterpreted symbol $T$ in the type syntax, 
without considering algebraic operators. In the
present paper, we address the issue of modelling an imperative $\lambda$-calculus equipped with a global store, and
read and write operations. In this case, the monad $T$ is instantiated to a variant of the state monad in \cite{Moggi'91}, called the
partiality and state monad in \cite{LagoGL17}, which we denote by $\GS$. To the calculus syntax, we add denumerably many
operations $\textit{get}_{\ell}$ and $\textit{set}_{\ell}$, indexed over an infinite set of locations, and define an operational semantics in SOS style, which turns out to be the small-step correspondent to the big-step operational semantics
proposed in \cite{amsdottorato9075}, chap. 3. We call $\lamImp$ the resulting calculus.

By solving the domain equation $D = D \to \GS D$ we get a model of $\lamImp$, whose theory includes
the convertibility relation induced by the reduction and the monadic laws. Such a model, when
constructed in the category of $\omega$-algebraic lattices provides the basis to define an intersection
type system, which is an instance of the generic one in  \cite{deLiguoroTreglia20}. Without delving
into the details of the process for obtaining such a system from the semantics, which has been treated in 
\cite{deLiguoroT23}, we provide an interpretation of types in the model, and prove type invariance under subject reduction and expansion, culminating with the characterization theorem of convergent terms via their typings in the system.

\paragraph{Contents}
After introducing the syntax of $\lamImp$ in Section \ref{sec:imp-calculus}, we define its operational semantics in Section \ref{sec:imp-operational} and the convergence relation in  
Section \ref{sec:imp-convergence}. Denotational semantics is treated in Section \ref{sec:denotational} to provide the mathematical background and motivations for constructing the intersection type theories and the type system in  Section \ref{sec:imp-intersection}. In  Section \ref{sec:type-interpretation} we show the soundness of the natural interpretation of typing judgments w.r.t. the denotational model; then, in 
Section \ref{sec:imp-invariance} we prove the type invariance property w.r.t. the operational semantics; eventually, 
in Section \ref{sec:imp-characterization} we establish
the main result of the paper, namely the characterization of convergent terms via their typing.
Section \ref{sec:imp-Related} is devoted to related papers and to the discussion of some issues. Finally, we conclude.

\medskip
We assume familiarity with $\lambda$-calculus and intersection types; a comprehensive reference is \cite{BarendregtDS2013} Part III. 
Concerning domain theory and the relation of intersection types to the category of $\omega$-algebraic lattices useful references are 
\cite{AbramskyJung94, Amadio-Curien'98}. Furthermore,
some basic concepts from category theory are used, for which the introductory text \cite{PierceCatTheory-91} should suffice.
We have kept the paper self-contained as much as possible, but some previous knowledge about computational monads and algebraic effects is recommended.
The interested reader might consult \cite{BentonHM00} and \cite{amsdottorato9075} chap. 3. 

\medskip
This paper is a largely revised and extended version of the conference paper \cite{deLiguoroT21a}.
 


\section{An untyped imperative $\lambda$-calculus}
\label{sec:imp-calculus}

Imperative extensions of the $\lambda$-calculus, both typed and type-free, are usually
based on the call-by-value $\lambda$-calculus, enriched with constructs for reading and writing to
the store. Aiming at exploring the semantics of side effects in computational calculi, where 
``impure'' functions are modelled by pure ones sending values to computations in the sense of 
\cite{Moggi'91}, we consider the computational core $\lc$ from \cite{deLiguoroTreglia20}, to which
we add syntax denoting algebraic effect operations \`{a} la Plotkin and Power \cite{PlotkinP02,PlotkinP03,Power06}
over a suitable state monad. 

Let $\Label = \Set{\ell_0, \ell_1, \ldots}$ be a denumerable set of abstract {\em locations}. Borrowing notation from 
\cite{amsdottorato9075}, we consider denumerably many operator symbols
 $\textit{get}_{\ell}$ and $\textit{set}_{\ell}$, obtaining:

\begin{definition}[Term syntax]\label{def:syntax}
\[\begin{array}{rrcll}
\ValTerm ~ \ni & V,W & ::= & x \mid \lambda x.M \\ 
\ComTerm ~ \ni & M,N & ::= & \unit{V} \mid M \Bind V  \\
                   &        & \mid & \get{\ell}{\lambda x.M} \mid \set{\ell}{V}{M} & (\ell \in \Label) \\ 
\end{array}\]
\end{definition}

\noindent
As for $\lc$, terms are of either sorts $\ValTerm$ and $\ComTerm$, representing values and computations, respectively.
The new constructs are $\get{\ell}{\lambda x.M} $ and $\set{\ell}{V}{M}$.
The variable $x$ is bound in $\lambda x.M$ and $\get{\ell}{\lambda x.M}$; 
terms are identified up to renaming of bound variables so that the capture avoiding substitution $M\Subst{V}{x}$ is always well defined;
$FV(M)$ denotes the set of free variables in $M$.
We call $\ValTerm^{\,0}$ the subset of closed $V \in \ValTerm$; similarly for $\ComTerm^0$.

With respect to the syntax of the ``imperative $\lambda$-calculus'' in \cite{amsdottorato9075}, 
we do not have the {\it let} construct, nor the application $VW$ among values. The justification is the same as for the computational core, indeed.
These constructs are definable: 
\[\Let{x}{M}{N} \;\equiv\; M \Bind (\lambda x.N) \qquad VW \;\equiv\; \unit{W} \Bind V\] 
where $\equiv$ is syntactic identity.
In general, application among computations can be encoded by $MN \equiv M \Bind (\lambda z. N \Bind z)$, where $z$ is  fresh.

In a sugared notation from functional programming languages, we could have written:
\[ \Let{x}{\; !\ell}{M} ~\equiv~ \get{\ell}{\lambda x.M}  \qquad 
 \ell := V; M ~\equiv~ \set{\ell}{V}{M} \]
representing location dereferentiation and assignment. Observe that we do not consider locations as
values; consequently they cannot be dynamically created like with the \textbf{ref} operator from ML,  nor
is it possible to model aliasing. On the other hand, since the calculus is untyped, ``strong updates'' are allowed. In fact,  when evaluating $\set{\ell}{V}{M}$,
the value $V$ to which the location $\ell$ is updated bears no relation
to the previous value, say $W$, of $\ell$ in the store: indeed, in our type assignment system $W$ and $V$ may well have completely different types. This will be, of course, a major challenge when designing the type assignment
 system in the next sections.


\section{Operational semantics}
\label{sec:imp-operational}

\newcommand{\slist}{\textit{sl}}
\newcommand{\nf}{\textrm{\rm nf}}
\newcommand{\fun}[1]{\overline{#1}}

\newcommand{\bfS}{\textbf{S}}
\newcommand{\bfV}{\textbf{V}}

\newcommand{\upd}[3]{\Aset{#1}(#2, #3)}
\newcommand{\lkp}[2]{\Aget{#1}(#2)}

\newcommand{\LkpSort}{\textit{Lkp}}

We define the operational semantics of our calculus 
via a reduction relation $\SmallStep{M}{s}{N}{t}$, where $M,N \in \ComTerm$ and $s,t$
are {\em store} terms, which are defined below.

\begin{definition}[Store and Lookup terms]\label{def:store-terms}
Let $V$ vary over $\ValTerm$; then define:

\[\begin{array}{rcll}
\StoreSort \ni s & ::= & \emp \mid \upd{\ell}{u}{s} \\ [1mm]
\LkpSort \ni u & ::= & V \mid \lkp{\ell}{s} & \ell \in \dom{s} \\
~ \\
\dom{\emp} & = & \emptyset \\ [1mm]
\dom{\upd{\ell}{u}{s}} & = & \Set{\ell} \cup \dom{s}
\end{array}\]
\end{definition}

Store terms represent finite mappings from $\Label$ to $\ValTerm$.
$\emp$ is the {\em empty store}, that is the everywhere undefined map; 
$\upd{\ell}{V}{s}$ is the {\em update} of $s$, representing the same map as $s$, 
but for $\ell$ where it holds $V$. 

To compute the value of the store $s$ at location $\ell$ we add the expressions 
$\lkp{\ell}{s} \in \LkpSort$, whose intended meaning  is the {\em lookup} (partial) function searching 
the value of $\ell$ in the store $s$; more precisely $\lkp{\ell}{s}$ picks the value $V$ from the leftmost
outermost occurrence in $s$ of a subterm of the shape $\upd{\ell}{V}{s'}$, if any.

To avoid dealing with undefined expressions like $\lkp{\ell}{\emp}$, 
we have asked that $\ell \in \dom{s}$ for the lookup expression $\lkp{\ell}{s}$ to be well-formed. Then
the meaning of store and lookup terms can be defined axiomatically as follows. 

\begin{definition}[Store and Lookup Axioms]\label{def:store-terms-axioms}
The axioms of the algebra of store terms and well-formed lookup expressions are the following:
\begin{enumerate}
\item \label{def:store-terms-axioms-a} $\lkp{\ell}{\upd{\ell}{u}{s}} = u$
\item \label{def:store-terms-axioms-b} $\lkp{\ell}{\upd{\ell'}{u}{s}} = \lkp{\ell}{s}$ if $\ell \neq \ell'$
\item \label{def:store-terms-axioms-c} $\upd{\ell}{\lkp{\ell}{s}}{s} = s$
\item \label{def:store-terms-axioms-d} $\Aset{\ell}(U, \Aset{\ell}(W, s)) = \Aset{\ell}(U, s)$
\item \label{def:store-terms-axioms-e} $\Aset{\ell}(U, \Aset{\ell'}(W, s)) =\Aset{\ell'}(W, \Aset{\ell}(U, s))$ if $\ell \neq \ell'$
\end{enumerate}

\end{definition}

We write $\der s = t$ to stress that the equality $s = t$ has been derived from the above axioms using reflexivity,
symmetry, transitivity, and congruence of equational logic.

The equalities in Definition \ref{def:store-terms-axioms} are folklore for terms representing the store in the literature:
see e.g. \cite{Mitchell'96}, chap. 6; also they are essentially, albeit not literally, 
the same as those ones for global state in \cite{PlotkinP02}. This is a non-trivial and decidable theory;
since we have not found any good reference to establish the properties we shall use in the subsequent sections, 
we devote the next paragraphs to the study of this theory. 
The main results are Theorem \ref{thr:store-axioms-completeness} and Corollary \ref{cor:s-nf-decidable}.

\begin{lemma}\label{lem:lkp(s)=V}
$ \ell \in \dom{s} \Then \exists V \in \ValTerm.\; \lkp{\ell}{s} = V$
\end{lemma}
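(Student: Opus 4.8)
The plan is to prove this by induction on the size of the store term $s$ (the total number of symbol occurrences), rather than by plain structural induction on the spine of $s$. The reason is that the grammar of \Cref{def:store-terms} is mutually recursive: a store term $\upd{\ell'}{u}{s'}$ may carry a lookup subterm $u = \lkp{\ell''}{s''}$ in which a further store term $s''$ is nested, and when we unfold such a lookup we will need the statement for $s''$, which is not a sub-store-term of $s$ on its top-level spine but is strictly smaller in size. Reasoning by size (equivalently, structural induction over the whole mutually recursive term, not just its outermost update layers) handles this uniformly.

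For the base case $s = \emp$ we have $\dom{\emp} = \emptyset$, so the hypothesis $\ell \in \dom{s}$ is vacuously false. For the inductive step, let $s = \upd{\ell'}{u}{s'}$, so $\dom{s} = \Set{\ell'} \cup \dom{s'}$, and assume $\ell \in \dom{s}$. If $\ell = \ell'$, the first axiom of \Cref{def:store-terms-axioms} gives $\der \lkp{\ell}{s} = u$; if $u = V$ for some value we are done, and otherwise $u = \lkp{\ell''}{s''}$ with $\ell'' \in \dom{s''}$ by well-formedness of lookup expressions, where $s''$ is strictly smaller than $s$, so the induction hypothesis yields $V \in \ValTerm$ with $\der \lkp{\ell''}{s''} = V$, whence $\der \lkp{\ell}{s} = V$ by transitivity. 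If instead $\ell \neq \ell'$, then $\ell \in \dom{s'}$, and the second axiom of \Cref{def:store-terms-axioms} gives $\der \lkp{\ell}{s} = \lkp{\ell}{s'}$; since $s'$ is a proper subterm of $s$, the induction hypothesis applied to it produces the required value $V$, and transitivity concludes.

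There is no substantial obstacle: the only delicate point is picking the induction measure correctly, precisely because nested lookups prevent the naive structural induction on the outer store from closing the argument. Note also that only the first two store axioms are needed here; idempotency and commutation of updates and the ``update-by-own-lookup'' law (axioms~\ref{def:store-terms-axioms-c}--\ref{def:store-terms-axioms-e}) play no role in this statement.
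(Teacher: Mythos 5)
Your proof is correct and follows essentially the same route as the paper: the same case split on whether the outermost update binds $\ell$, the same use of only the first two lookup axioms, and the same recursion into a nested store $s''$ when the looked-up entry is itself a lookup term. Your explicit choice of size as the induction measure is precisely what the paper's proof implicitly relies on when it observes that ``the size of $s''$ is strictly smaller than that of $s$'', so the two arguments coincide (you are in fact slightly more careful in allowing the nested lookup to mention a location $\ell''$ possibly different from $\ell$).
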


\begin{proof}
By induction over $s$; since $\dom{s} \neq \emptyset$ we have that $s \not\equiv \emp$ and the lookup expression $\lkp{\ell}{s}$ is well-formed,
so that there are two cases to consider:
\begin{description}
\item $s \equiv \upd{\ell}{u}{s'}$: then by axiom \ref{def:store-terms-axioms}.\ref{def:store-terms-axioms-a} we have 
	$\lkp{\ell}{\upd{\ell}{u}{s'}} = u$. If $u \equiv V \in \ValTerm$ then we are done; otherwise $u \equiv \lkp{\ell}{s''}$, and
	the size of $s''$ is strictly smaller than that of $s$, so that the thesis follows by induction.
	
\item $s \equiv \upd{\ell'}{u}{s'}$, with $\ell' \neq \ell$: then $\lkp{\ell}{\upd{\ell'}{u}{s'}} = \lkp{\ell}{s'}$ 
	by axiom \ref{def:store-terms-axioms}.\ref{def:store-terms-axioms-b}, and the thesis follows by induction.
\end{description}
\vspace{-2ex}
\end{proof}

As expected, the relation $\der \lkp{\ell}{s} = V$ is functional:

\begin{lemma}\label{lem:store-unicity}
If $\der \lkp{\ell}{s} = V$ and $\der \lkp{\ell}{s} = W$ then $V \equiv W$.
\end{lemma}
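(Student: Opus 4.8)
The plan is to produce a concrete set-theoretic model of the store algebra in which distinct values denote distinct elements, and then conclude by soundness of $\der$ with respect to that model. Interpret each store term $s$ as the finite partial map $\mu(s)\colon\Label\rightharpoonup\ValTerm$ it is meant to denote, each well-formed lookup term $u$ as a value $\mu(u)\in\ValTerm$, and each value as itself, by mutual structural recursion: $\mu(\emp)$ is the everywhere-undefined map; $\mu(\upd{\ell}{u}{s})\deff\mu(s)[\ell\mapsto\mu(u)]$; $\mu(V)\deff V$ for $V\in\ValTerm$; and $\mu(\lkp{\ell}{s})\deff\mu(s)(\ell)$. The recursion is well-founded on the size of terms, so $\mu$ is total on all well-formed terms; the only non-trivial point for well-definedness is that the application $\mu(s)(\ell)$ in the last clause is defined, which follows from the easy invariant $\dom{\mu(s)}=\dom{s}$ (immediate induction on $s$) together with the side condition $\ell\in\dom{s}$ imposed on well-formed lookup terms in \Cref{def:store-terms}. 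In particular $\mu(\lkp{\ell}{s})$ is always a genuine value, which incidentally re-proves \Cref{lem:lkp(s)=V}.

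Next I would record that $\mu$ is \emph{compositional}: the interpretation of a compound term is a fixed function of the interpretations of its immediate subterms. Compositionality is exactly what makes $\mu$ respect the congruence rule of equational logic, so in order to establish the soundness statement
\[ \der s = t \quad\Then\quad \mu(s)=\mu(t) \qquad\text{(and similarly for lookup terms),}\]
it suffices to verify that each of the five axioms of \Cref{def:store-terms-axioms} is validated under $\mu$. Each check is a one-liner about partial maps: for the first axiom, $\mu(\lkp{\ell}{\upd{\ell}{u}{s}})=(\mu(s)[\ell\mapsto\mu(u)])(\ell)=\mu(u)$; for the second, an update at $\ell'\neq\ell$ leaves the value at $\ell$ unchanged; for the third, $\mu(s)[\ell\mapsto\mu(s)(\ell)]=\mu(s)$ since $\ell\in\dom{s}$; the fourth and fifth are the familiar ``overwrite'' and ``commuting-updates'' identities for finite maps. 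Soundness then follows by a routine induction on the derivation of the equation.

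Granting soundness, the lemma is immediate: from $\der\lkp{\ell}{s}=V$ we get $\mu(\lkp{\ell}{s})=\mu(V)=V$, and from $\der\lkp{\ell}{s}=W$ we get $\mu(\lkp{\ell}{s})=\mu(W)=W$; hence $V=W$ as elements of the model, that is $V\equiv W$, since values are interpreted by themselves (modulo the renaming of bound variables already absorbed into $\equiv$).

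I do not expect a real obstacle. The work is all in setting the model up carefully, the two delicate points being (i) the mutual recursion between store terms and lookup terms, which forces one to carry the invariant $\dom{\mu(s)}=\dom{s}$ so that $\mu(\lkp{\ell}{s})$ is always defined, and (ii) making sure the congruence rule of the equational logic defining $\der$ is sound for $\mu$, which is precisely what compositionality delivers. A purely syntactic alternative would orient the axioms into a rewriting system and prove confluence and termination to obtain unique normal forms, but handling the third and fifth axioms as rewrite rules is awkward, so the semantic route is preferable. This same model is, in any case, the obvious candidate for the completeness result announced as \Cref{thr:store-axioms-completeness}, of which the present lemma is essentially the easy, soundness half.
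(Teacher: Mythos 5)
Your proof is correct, but it takes a genuinely different route from the paper's. The paper disposes of this lemma in one line, ``by induction over the derivations of $\lkp{\ell}{s} = V$'', and postpones the semantic machinery to \Cref{def:extensional-store-terms} and \Cref{lem:store-soundness}, where an extensional equivalence $\simeq$ on store terms plays essentially the role of your model $\mu$. You instead build the set-theoretic model up front and derive unicity as a corollary of soundness plus the fact that values are interpreted injectively (indeed, as themselves). Your route costs more setup but is arguably the more robust argument: a naive induction over derivations of $\lkp{\ell}{s}=V$ has to contend with transitivity steps passing through arbitrary intermediate terms and with symmetry, and the clean way to handle those is precisely to exhibit a compositional invariant preserved by all the rules of equational logic --- which is what your $\mu$ is. The paper effectively pays this debt one lemma later when it proves $\der s = t \Then s \simeq t$; you pay it immediately and get both \Cref{lem:store-unicity} and the substance of \Cref{lem:store-soundness} out of the same construction. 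One small caveat: your parenthetical claim that the well-definedness of $\mu$ ``re-proves'' \Cref{lem:lkp(s)=V} overstates things slightly. That lemma asserts the \emph{derivable} equality $\der \lkp{\ell}{s} = V$ for some value $V$, whereas your argument only shows that $\mu(\lkp{\ell}{s})$ is a value; to recover the derivable equality you still need the paper's induction on $s$ (or, equivalently, a completeness-flavoured step showing every lookup term is provably equal to its denotation). This does not affect the main argument, which stands.
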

\vspace{-2ex}

\begin{proof}
By induction over the derivations of $\lkp{\ell}{s} = V$.
\end{proof}

\begin{definition}\label{def:extensional-store-terms}
We say that $s, t \in \StoreSort$ are {\em extensionally equivalent}, written $s \simeq t$, if there exists $L \subseteq \Label$ such that:
\begin{enumerate}
\item $\dom{s} = L = \dom{t}$
\item $\forall \ell \in L.\; \lkp{\ell}{s} = \lkp{\ell}{t}$
\end{enumerate}
\end{definition}

\begin{lemma}\label{lem:store-soundness}
$\der s = t \Then s \simeq t$
\end{lemma}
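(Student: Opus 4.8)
The plan is to argue by induction on the derivation of $\der s = t$, leaning on the two preceding lemmas. Since $\simeq$ is patently reflexive, symmetric, and transitive, the corresponding equational rules pose no problem, and the work concentrates on the axiom instances and on the congruence rule. It is convenient to isolate two auxiliary facts: (a) $\der s = t$ implies $\dom s = \dom t$; and (b) if $\der s = t$ and $\ell \in \dom s$, then the value $V$ with $\der \lkp{\ell}{s} = V$ (unique by \Cref{lem:store-unicity}) is syntactically identical to the value $W$ with $\der \lkp{\ell}{t} = W$. Granting (a) and (b), the set $L := \dom s$ witnesses $s \simeq t$ in the sense of \Cref{def:extensional-store-terms}, and we are done.

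For (a) I would induct on the derivation, using that the equational rules relate only terms of the same sort: hence a store equality $\der s = t$ is obtained either by reflexivity, symmetry, or transitivity from store equalities (with a store-typed intermediate term in the transitivity case), or by a congruence step whose outermost operation must be $\upd{\ell}{-}{-}$ — since $\lkp{\ell}{-}$ and $\emp$ cannot produce a store term as output — or it is one of the axioms \ref{def:store-terms-axioms}.\ref{def:store-terms-axioms-c}, \ref{def:store-terms-axioms}.\ref{def:store-terms-axioms-d}, \ref{def:store-terms-axioms}.\ref{def:store-terms-axioms-e}, the only ones equating store terms. The reflexivity/symmetry/transitivity cases are trivial; for congruence, $\dom{\upd{\ell}{u}{s}} = \Set{\ell} \cup \dom s$ depends on the store argument only through its domain, so the inductive hypothesis on the store-typed premise suffices; and for each of the three relevant axioms the two sides visibly have the same domain — in \ref{def:store-terms-axioms}.\ref{def:store-terms-axioms-c} because $\ell \in \dom s$ is precisely the well-formedness proviso for $\lkp{\ell}{s}$, so $\Set{\ell} \cup \dom s = \dom s$.

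Fact (b) then follows quickly: if $\der s = t$ and $\ell \in \dom s$, then $\ell \in \dom t$ by (a), both $\lkp{\ell}{s}$ and $\lkp{\ell}{t}$ are well-formed, and \Cref{lem:lkp(s)=V} yields values $V, W$ with $\der \lkp{\ell}{s} = V$ and $\der \lkp{\ell}{t} = W$. A congruence step applied to $\der s = t$ under the (well-formed) context $\lkp{\ell}{\ctxhole}$ gives $\der \lkp{\ell}{s} = \lkp{\ell}{t}$, hence $\der \lkp{\ell}{s} = W$ by transitivity, and then $V \equiv W$ by \Cref{lem:store-unicity}.

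I do not expect a genuinely hard step: morally this is the soundness half of a completeness theorem, and an equivalent route would be to interpret store terms as finite partial maps $\Label \rightharpoonup \ValTerm$ and lookup terms as values, and to verify that the five axioms hold under that interpretation. The only delicate points are bookkeeping ones: one must ensure that the whole equational derivation lives among well-formed terms (so that the contexts $\lkp{\ell}{\ctxhole}$ invoked in congruence steps are legitimate), and the sortedness observation above is exactly what keeps the induction for (a) under control.
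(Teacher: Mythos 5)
Your proof is correct and follows essentially the same route as the paper's, which simply observes that the store-equating axioms respect $\simeq$ and then performs a straightforward induction over the derivation of $\der s = t$. Your decomposition into the domain-equality claim (a), proved by that induction, and the lookup-agreement claim (b), recovered afterwards via congruence under $\lkp{\ell}{\ctxhole}$ together with \Cref{lem:lkp(s)=V} and \Cref{lem:store-unicity}, is just a more detailed bookkeeping of the same argument.
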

\begin{proof}
The thesis holds immediately of the axioms \ref{def:store-terms-axioms}.\ref{def:store-terms-axioms-c}-\ref{def:store-terms-axioms}.\ref{def:store-terms-axioms-d};
then the proof is a straightforward induction over the derivation of $s = t$.
\end{proof}

Toward establishing the converse of Lemma \ref{lem:store-soundness} let us define $s\setminus \ell \in \StoreSort$ by:
\[\begin{array}{rlll}
\emp \setminus \ell & \equiv & \emp \\
\upd{\ell}{u}{s} \setminus \ell & \equiv & s\setminus \ell \\
\upd{\ell'}{u}{s} \setminus \ell & \equiv &  \upd{\ell'}{u}{s \setminus \ell} & \mbox{if $\ell' \neq \ell$} \\
\end{array}\]

\begin{lemma}\label{lem:s-without-ell}
\hfill
\begin{enumerate}
\item\label{lem:s-without-ell-a} $s \setminus \ell \equiv s \Iff \ell \not\in \dom{s}$
\item\label{lem:s-without-ell-b} $\dom{s \setminus \ell} = \dom{s} \setminus \Set{\ell}$
\item\label{lem:s-without-ell-c} $s \simeq t \Then (s\setminus \ell) \simeq (t \setminus \ell)$
\end{enumerate}
\end{lemma}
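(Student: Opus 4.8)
The plan is to prove the three items of \Cref{lem:s-without-ell} in order, each by straightforward structural induction on the store term $s$, reusing the earlier lemmas. For item~\ref{lem:s-without-ell-a}, I would argue both directions at once by induction on $s$. If $s \equiv \emp$ then $s \setminus \ell \equiv \emp \equiv s$ and $\ell \not\in \emptyset = \dom{\emp}$, so both sides hold. If $s \equiv \upd{\ell'}{u}{s'}$, I split on whether $\ell' = \ell$: when $\ell' = \ell$, the definition gives $s \setminus \ell \equiv s' \setminus \ell$, which cannot be syntactically equal to $\upd{\ell}{u}{s'}$ (the latter has strictly greater size, or one can observe $\ell \in \dom s$), so the biconditional holds vacuously on the left and correctly on the right since $\ell \in \dom s$; when $\ell' \neq \ell$, the definition gives $s \setminus \ell \equiv \upd{\ell'}{u}{s' \setminus \ell}$, and this equals $s \equiv \upd{\ell'}{u}{s'}$ iff $s' \setminus \ell \equiv s'$, which by the induction hypothesis holds iff $\ell \not\in \dom{s'}$, i.e.\ (since $\ell \neq \ell'$) iff $\ell \not\in \dom s$.

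For item~\ref{lem:s-without-ell-b}, again induct on $s$. The base case $s \equiv \emp$ is immediate. For $s \equiv \upd{\ell}{u}{s'}$ we have $\dom{s\setminus\ell} = \dom{s'\setminus\ell}$; by induction this is $\dom{s'}\setminus\Set{\ell}$, which equals $(\Set{\ell}\cup\dom{s'})\setminus\Set{\ell} = \dom{s}\setminus\Set{\ell}$. For $s \equiv \upd{\ell'}{u}{s'}$ with $\ell'\neq\ell$, we get $\dom{s\setminus\ell} = \Set{\ell'}\cup\dom{s'\setminus\ell} = \Set{\ell'}\cup(\dom{s'}\setminus\Set{\ell})$, which since $\ell'\neq\ell$ equals $(\Set{\ell'}\cup\dom{s'})\setminus\Set{\ell} = \dom s\setminus\Set{\ell}$.

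Item~\ref{lem:s-without-ell-c} is the one requiring the most care, since it mixes the derived equality with the $(-)\setminus\ell$ operation. Suppose $s \simeq t$, witnessed by $L = \dom s = \dom t$ with $\der\lkp{\ell'}{s} = \lkp{\ell'}{t}$ for all $\ell'\in L$. I want to show $(s\setminus\ell)\simeq(t\setminus\ell)$, i.e.\ that the witnessing set is $L' \defeq L\setminus\Set{\ell}$. By item~\ref{lem:s-without-ell-b}, $\dom{s\setminus\ell} = \dom s\setminus\Set{\ell} = L' = \dom t\setminus\Set{\ell} = \dom{t\setminus\ell}$, so condition~1 of \Cref{def:extensional-store-terms} holds. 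For condition~2, I need, for each $\ell'\in L'$ (so $\ell'\neq\ell$ and $\ell'\in L$), that $\der\lkp{\ell'}{s\setminus\ell} = \lkp{\ell'}{t\setminus\ell}$. The key sub-claim, proved by a separate easy induction on $s$, is that for $\ell'\neq\ell$ with $\ell'\in\dom s$ one has $\der\lkp{\ell'}{s\setminus\ell} = \lkp{\ell'}{s}$ — intuitively, deleting $\ell$ does not disturb the leftmost-outermost $\ell'$-binding; in the inductive step one uses axioms \ref{def:store-terms-axioms}.\ref{def:store-terms-axioms-a} and \ref{def:store-terms-axioms}.\ref{def:store-terms-axioms-b} to track the lookup through an $\upd{}{}{}$ layer, together with \Cref{lem:lkp(s)=V} to know the looked-up result is a genuine value. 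Granting this sub-claim, $\der\lkp{\ell'}{s\setminus\ell} = \lkp{\ell'}{s} = \lkp{\ell'}{t} = \lkp{\ell'}{t\setminus\ell}$, the middle equality being the hypothesis $s\simeq t$, and \Cref{lem:store-unicity} ensures these chained equalities are between the same value. This finishes condition~2 and hence item~\ref{lem:s-without-ell-c}.

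The main obstacle is the bookkeeping in item~\ref{lem:s-without-ell-c}: one must isolate and prove the auxiliary fact that $\lkp{\ell'}{s\setminus\ell} = \lkp{\ell'}{s}$ for $\ell'\neq\ell$, and be careful that all these are equalities \emph{derivable from the axioms} rather than mere syntactic identities, invoking \Cref{lem:lkp(s)=V} and \Cref{lem:store-unicity} to keep the values well-defined and unique. Everything else is a routine case analysis following the three clauses in the definition of $s\setminus\ell$, matched against the clauses defining $\dom{-}$.
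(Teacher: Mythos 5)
Your proof is correct and follows essentially the same route as the paper, whose own proof is a one-liner declaring (a) and (b) immediate from the definitions and (c) a consequence of (a) and (b); your structural inductions supply exactly the elided details. In particular, the auxiliary fact that $\der \lkp{\ell'}{s\setminus\ell} = \lkp{\ell'}{s}$ for $\ell'\neq\ell$, $\ell'\in\dom{s}$ is the step the paper's ``follows from (a) and (b)'' silently relies on, and your handling of it (via axioms \ref{def:store-terms-axioms}.\ref{def:store-terms-axioms-a} and \ref{def:store-terms-axioms}.\ref{def:store-terms-axioms-b}) is sound.
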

\begin{proof}
Parts (\ref{lem:s-without-ell-a}) and (\ref{lem:s-without-ell-b})  are immediate consequences of the definitions; part (\ref{lem:s-without-ell-c}) follows
from (\ref{lem:s-without-ell-a}) and (\ref{lem:s-without-ell-b}).
\end{proof}

\begin{lemma}\label{lem:s-whithout-ell-main}
$\ell \in \dom{s} \Then \; \der s = \upd{\ell}{\lkp{\ell}{s}}{s \setminus \ell}$
\end{lemma}

\begin{proof}
By induction over $s$. By the hypothesis $\ell \in \dom{s}$ we have that $\lkp{\ell}{s}$ is well-formed, and $s \not\equiv \emp$; so that we have the cases:
\begin{description}
\item $s \equiv \upd{\ell}{V}{s'}$: then we have $\der \lkp{\ell}{s} = V$ by axiom \ref{def:store-terms-axioms}.\ref{def:store-terms-axioms-a}
	so that 
	\[ \der \upd{\ell}{V}{s'} = \upd{\ell}{\lkp{\ell}{s}}{s'}  \]
	Now if $\ell \not \in \dom{s'}$ we have
	\[\begin{array}{rcll}
	 \upd{\ell}{\lkp{\ell}{s}}{s'} & \equiv & \upd{\ell}{\lkp{\ell}{s}}{s' \setminus \ell} &   \mbox{by  Lemma \ref{lem:s-without-ell}.\ref{lem:s-without-ell-a} } \\
	 & \equiv & \upd{\ell}{\lkp{\ell}{s}}{s \setminus \ell} & \mbox{by $s \setminus \ell \equiv \upd{\ell}{V}{s'} \setminus \ell \equiv s' \setminus \ell$}
	 \end{array}\]
	If instead $\ell \in \dom{s'}$ then $\lkp{\ell}{s'}$ is well-formed and
	\[\begin{array}{rcll}
	\upd{\ell}{\lkp{\ell}{s}}{s'}  & = & \upd{\ell}{\lkp{\ell}{s}}{\upd{\ell}{\lkp{\ell}{s'}}{s' \setminus \ell}}  & \mbox{by induction} \\
	& = & \upd{\ell}{\lkp{\ell}{s}}{s' \setminus \ell} & \mbox{by axiom \ref{def:store-terms-axioms}.\ref{def:store-terms-axioms-d}}
	\end{array}\]

\item $s \equiv \upd{\ell'}{V}{s'}$ and $\ell' \neq \ell$: in this case $\ell \in \dom{s} = \dom{ \upd{\ell'}{V}{s'}}$ implies
	$\ell \in \dom{s'}$, so that 
	\[\begin{array}{rcll}
	\upd{\ell'}{V}{s'} & = & \upd{\ell'}{V}{\upd{\ell}{\lkp{\ell}{s'}}{s' \setminus \ell}} & \mbox{by induction} \\
	& = & \upd{\ell}{\lkp{\ell}{s'}}{\upd{\ell'}{V}{s' \setminus \ell}} & \mbox{by axiom \ref{def:store-terms-axioms}.\ref{def:store-terms-axioms-e} } \\
	& \equiv & \upd{\ell}{\lkp{\ell}{s'}}{\upd{\ell'}{V}{s'}  \setminus \ell} & \mbox{by definition of $\upd{\ell'}{V}{s'} \setminus \ell$} \\
	& = & \upd{\ell}{\lkp{\ell}{s}}{\upd{\ell'}{V}{s'} \setminus \ell } &  \mbox{as $\der \lkp{\ell}{\upd{\ell'}{V}{s'}} = \lkp{\ell}{s'}$ } \\
	&&& \mbox{by axiom \ref{def:store-terms-axioms}.\ref{def:store-terms-axioms-b} }
	\end{array}\]

\end{description}
\end{proof}

Next, we define $\nf(s) \in \StoreSort$, a normal form of $s$, by:
\[\begin{array}{rlll}
\nf(\emp) & \equiv & \emp \\
\nf(\upd{\ell}{u}{s}) & \equiv & \upd{\ell}{u}{\nf(s \setminus \ell)}
\end{array}\]
In $\nf(s)$ each $\ell \in \dom{s}$ occurs just once.

\begin{corollary}\label{cor:store-terms-completeness}
$\der s = \nf(s)$
\end{corollary}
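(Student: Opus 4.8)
The plan is to argue by induction on the \emph{size} of $s$ rather than on its structure: in the step case we will need the induction hypothesis applied to $s' \setminus \ell$, which is not a syntactic subterm of $s$, but inspection of the definition of $s \setminus \ell$ shows $\size{s' \setminus \ell} \le \size{s'} < \size{s}$, so size induction goes through. The base case $s \equiv \emp$ is immediate since $\nf(\emp) \equiv \emp$ and $\der \emp = \emp$ by reflexivity.

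For the step case $s \equiv \upd{\ell}{u}{s'}$ we have, by definition of $\nf$ and of $\setminus$, that $\nf(s) \equiv \upd{\ell}{u}{\nf(s' \setminus \ell)}$ and $s \setminus \ell \equiv s' \setminus \ell$, so it suffices to show $\der \upd{\ell}{u}{s'} = \upd{\ell}{u}{\nf(s'\setminus\ell)}$. I would split on whether $\ell \in \dom{s'}$. If $\ell \notin \dom{s'}$, then $s' \setminus \ell \equiv s'$ by \Cref{lem:s-without-ell}.\ref{lem:s-without-ell-a}, the induction hypothesis gives $\der s' = \nf(s')$, and congruence yields $\der \upd{\ell}{u}{s'} = \upd{\ell}{u}{\nf(s')} \equiv \nf(s)$. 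If $\ell \in \dom{s'}$, then by \Cref{lem:s-whithout-ell-main} we have $\der s' = \upd{\ell}{\lkp{\ell}{s'}}{s' \setminus \ell}$; placing this under $\upd{\ell}{u}{-}$ by congruence and then collapsing the two consecutive updates at $\ell$ via axiom \ref{def:store-terms-axioms}.\ref{def:store-terms-axioms-d} gives $\der \upd{\ell}{u}{s'} = \upd{\ell}{u}{s' \setminus \ell}$; since $\size{s' \setminus \ell} < \size{s}$, the induction hypothesis applies and gives $\der s' \setminus \ell = \nf(s' \setminus \ell)$, and one more use of congruence closes the case.

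I do not expect a genuine obstacle here; the two points requiring a little care are (i) using size, not structure, as the induction measure, since $s' \setminus \ell$ escapes the subterm relation, and (ii) observing that axiom \ref{def:store-terms-axioms}.\ref{def:store-terms-axioms-d} is legitimately applied with the possibly non-value content $u$ — but this is fine, as the store axioms are formulated for arbitrary lookup-term content (cf. axiom \ref{def:store-terms-axioms}.\ref{def:store-terms-axioms-c}), and in any case \Cref{lem:lkp(s)=V} lets one replace $u$ by a value first if desired. Everything else is routine bookkeeping with the defining clauses of $s \setminus \ell$ and $\nf$.
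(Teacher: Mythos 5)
Your proof is correct and follows essentially the same route as the paper's: induction with a well-founded measure that licenses the inductive hypothesis on $s'\setminus\ell$ (the paper uses the structure of $s$ together with the cardinality of $\dom{s}$, you use size — both fix the same issue), a case split on $\ell\in\dom{s'}$, and \Cref{lem:s-whithout-ell-main} in the hard case. The only cosmetic difference is that you apply \Cref{lem:s-whithout-ell-main} to $s'$ and collapse the doubled update via axiom \ref{def:store-terms-axioms}.\ref{def:store-terms-axioms-d} explicitly, whereas the paper applies it to $s$ and instead uses axiom \ref{def:store-terms-axioms}.\ref{def:store-terms-axioms-a} to identify $\lkp{\ell}{s}$ with $u$.
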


\begin{proof}
By simultaneous induction over $s$ and the cardinality of $\dom{s}$. If $\dom{s} = \emptyset$ then $s \equiv \emp \equiv \nf(\emp)$.
Otherwise, suppose $s \equiv \upd{\ell}{u}{s'}$, then 
\[\begin{array}{rcll}
\nf(\upd{\ell}{u}{s'}) & \equiv & \upd{\ell}{u}{\nf(s' \setminus \ell)} & \mbox{by definition of $\nf$} \\
\end{array}\]
Now if $\ell \not\in \dom{s'}$ then $s' \setminus \ell \equiv s'$ by Lemma \ref{lem:s-without-ell}.\ref{lem:s-without-ell-a}
and $\der s' = \nf(s')$ by induction, and we are done.

If $\ell \in \dom{s'}$ then, recalling that $s \equiv \upd{\ell}{u}{s'}$:
\[\begin{array}{rcll}
s & = & \upd{\ell}{\lkp{\ell}{s}}{s \setminus \ell} & \mbox{by Lemma \ref{lem:s-whithout-ell-main} } \\
& = & \upd{\ell}{\lkp{\ell}{s}}{\nf(s \setminus \ell)} & \mbox{by Lemma \ref{lem:s-without-ell}.\ref{lem:s-without-ell-b} and by ind. since $|s| > |\dom{s \setminus \ell}|$} \\
& \equiv & \upd{\ell}{\lkp{\ell}{s}}{\nf(s' \setminus \ell)} & \mbox{since $s \setminus \ell \equiv \upd{\ell}{u}{s'} \setminus \ell \equiv s'\setminus \ell$} \\
& = & \upd{\ell}{u}{\nf(s' \setminus \ell)} & \mbox{since $\lkp{\ell}{s} \equiv \lkp{\ell}{ \upd{\ell}{u}{s'}} = u$} \\
& \equiv & \nf(s) & \mbox{by definition of $\nf$}
\end{array}\]

\end{proof}

\begin{theorem}[Completeness of Store Axioms]\label{thr:store-axioms-completeness}
$\der s = t \Iff s \simeq t$
\end{theorem}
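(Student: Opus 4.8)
The implication $(\Rightarrow)$ is exactly \Cref{lem:store-soundness}, so the only thing to establish is completeness, i.e. $s \simeq t \Then \der s = t$. The plan is to prove this by induction on the cardinality of $\dom{s}$, peeling off one location at a time and using the commutation axiom \ref{def:store-terms-axioms}.\ref{def:store-terms-axioms-e} (through \Cref{lem:s-whithout-ell-main}) to absorb any discrepancy in the order in which locations are listed. A preliminary remark that makes the argument run smoothly: from $s \simeq t$ together with \Cref{lem:lkp(s)=V} and \Cref{lem:store-unicity}, for every $\ell \in \dom{s} = \dom{t}$ there is a single value $V \in \ValTerm$ with both $\der \lkp{\ell}{s} = V$ and $\der \lkp{\ell}{t} = V$; this is the reading of the clause $\lkp{\ell}{s} = \lkp{\ell}{t}$ in \Cref{def:extensional-store-terms}.

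For the base case, if $\dom{s} = \emptyset$ then $s \equiv \emp$ (since $\dom{\upd{\ell}{u}{s'}}$ is never empty), and $\dom{t} = \dom{s} = \emptyset$ forces $t \equiv \emp$ as well, so $\der s = t$ by reflexivity. For the inductive step, assume $\dom{s} = \dom{t} = L \neq \emptyset$ and pick any $\ell \in L$; let $V$ be the common value from the remark above. By \Cref{lem:s-whithout-ell-main},
\[ \der s = \upd{\ell}{V}{s \setminus \ell} \qquad\text{and}\qquad \der t = \upd{\ell}{V}{t \setminus \ell}. \]
By \Cref{lem:s-without-ell}.\ref{lem:s-without-ell-c} we have $(s \setminus \ell) \simeq (t \setminus \ell)$, and by \Cref{lem:s-without-ell}.\ref{lem:s-without-ell-b} the set $\dom{s \setminus \ell} = L \setminus \Set{\ell}$ has strictly smaller cardinality; hence the induction hypothesis gives $\der s \setminus \ell = t \setminus \ell$, and congruence followed by transitivity yields $\der s = t$.

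The step I expect to be the real obstacle — or at least the point where a naive approach goes wrong — is precisely this last reconciliation: one is tempted to reduce completeness to \emph{syntactic} identity of the normal forms produced by $\nf$ (via \Cref{cor:store-terms-completeness}), but $\nf$ leaves the relative order of \emph{distinct} locations unconstrained, so two extensionally equivalent normal forms need not be syntactically equal. The commutation axiom \ref{def:store-terms-axioms}.\ref{def:store-terms-axioms-e}, invoked here indirectly through \Cref{lem:s-whithout-ell-main}, is exactly what bridges the two orderings, and the inductive peeling above is the clean way to exploit it; the remaining points (the meaning of equality in $\simeq$, and that the induction measure genuinely decreases) are dispatched by the lemmas already established. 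If one prefers, the very same peeling argument can instead be run on $\nf(s)$ and $\nf(t)$ after a first application of \Cref{cor:store-terms-completeness}, which is convenient if one also wants to reuse the bookkeeping for the decidability corollary.
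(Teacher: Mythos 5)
Your proof is correct and follows essentially the same route as the paper's: both establish soundness via \Cref{lem:store-soundness} and prove completeness by induction on the cardinality of the common domain, peeling off a location $\ell$ via \Cref{lem:s-whithout-ell-main}, using \Cref{lem:s-without-ell}.\ref{lem:s-without-ell-c} to propagate extensional equivalence to $s\setminus\ell \simeq t\setminus\ell$, and closing with the induction hypothesis and congruence. Your only (harmless) variation is to name the common value $V$ explicitly via \Cref{lem:lkp(s)=V} and \Cref{lem:store-unicity} rather than rewriting $\lkp{\ell}{s}$ into $\lkp{\ell}{t}$ inside the equational chain.
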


\begin{proof}
The only if part is proved by Lemma \ref{lem:store-soundness}. To prove the if part, assume $s \simeq t$ and 
let $L = \dom{s} = \dom{t}$, which is finite; then we reason by induction over the cardinality of $L$. 
If $L = \emptyset$ then $s \equiv \emp \equiv t$ and the thesis follows by reflexivity.
Otherwise, let $\ell \in L$ be arbitrary; by  Lemma \ref{lem:s-without-ell}.\ref{lem:s-without-ell-c} 
we have that $s \setminus \ell \simeq t \setminus \ell$, therefore
we may assume by induction 
$\der s \setminus \ell = t \setminus \ell$ since $L \setminus \Set{\ell}  = \dom{s \setminus \ell} = \dom{t \setminus \ell}$
has cardinality $|L| - 1$. Now
\[\begin{array}{rcll}
s & = & \upd{\ell}{\lkp{\ell}{s}}{s \setminus \ell} & \mbox{by Lemma  \ref{lem:s-without-ell}.\ref{lem:s-without-ell-c}} \\
& = & \upd{\ell}{\lkp{\ell}{t}}{s \setminus \ell} & \mbox{by the hypothesis $s \simeq t$} \\
& = & \upd{\ell}{\lkp{\ell}{t}}{t \setminus \ell} & \mbox{by induction} \\
& = & t & \mbox{by Lemma \ref{lem:s-without-ell}.\ref{lem:s-without-ell-c}} 
\end{array}\]

\end{proof}
The consequence of Theorem \ref{thr:store-axioms-completeness} is that each store term is equated to a normal form:

\begin{corollary}\label{cor:s-nf-decidable}
If $s \in \StoreSort$ with non empty $\dom{s} = \Set{\ell_1, \ldots , \ell_n}$ then
there exist $V_1, \dots , V_n \in \ValTerm$ such that
\[\der s = \upd{\ell_1}{V_1}{\cdots \upd{\ell_n}{V_n}{ \emp} \cdots}\]
Therefore the algebra of store and lookup terms is decidable.
\end{corollary}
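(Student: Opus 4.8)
The plan is to obtain the displayed normal form as a direct consequence of the completeness theorem, and then extract decidability from the finiteness of the normal form together with the functionality of lookup. First I would fix $s \in \StoreSort$ with $\dom{s} = \Set{\ell_1, \ldots, \ell_n}$ nonempty, and apply \Cref{lem:lkp(s)=V} to each $\ell_i$: since $\ell_i \in \dom{s}$, there is a unique (by \Cref{lem:store-unicity}) value $V_i \in \ValTerm$ with $\der \lkp{\ell_i}{s} = V_i$. Set $t \equiv \upd{\ell_1}{V_1}{\cdots \upd{\ell_n}{V_n}{\emp}\cdots}$. By construction $\dom{t} = \Set{\ell_1, \ldots, \ell_n} = \dom{s}$, and using the store axioms \ref{def:store-terms-axioms}.\ref{def:store-terms-axioms-a} and \ref{def:store-terms-axioms}.\ref{def:store-terms-axioms-b} one checks that $\der \lkp{\ell_i}{t} = V_i = \lkp{\ell_i}{s}$ for each $i$ (reading off the leftmost binding of $\ell_i$ in $t$, which is well-defined since the $\ell_i$ are pairwise distinct). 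Hence $s \simeq t$ by \Cref{def:extensional-store-terms}, and \Cref{thr:store-axioms-completeness} yields $\der s = t$, which is exactly the claimed equality. In fact this $t$ is (up to reordering of the layers) $\nf(s)$, so one could alternatively cite \Cref{cor:store-terms-completeness} directly and then rewrite $\nf(s)$ in the stated single-occurrence shape.

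For decidability, the point is that \Cref{thr:store-axioms-completeness} reduces provable equality $\der s = t$ to extensional equivalence $s \simeq t$, and the latter is manifestly decidable: one computes the finite sets $\dom{s}$ and $\dom{t}$ by a trivial recursion on the term structure, checks they are equal, and for each $\ell$ in the common domain computes the values $\lkp{\ell}{s}$ and $\lkp{\ell}{t}$ via axioms \ref{def:store-terms-axioms}.\ref{def:store-terms-axioms-a}–\ref{def:store-terms-axioms-b} (a terminating left-to-right search, terminating because each lookup step strictly decreases the size of the store subterm, as in the proof of \Cref{lem:lkp(s)=V}), obtaining genuine values that can be compared syntactically. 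Equivalently, one may normalise both $s$ and $t$ to $\nf(s)$, $\nf(t)$ and compare these; since each location occurs exactly once in a normal form, equality of normal forms up to the permutation allowed by axiom \ref{def:store-terms-axioms}.\ref{def:store-terms-axioms-e} is an easy finite check. Either route gives a decision procedure for the algebra of store and lookup terms.

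There is no real obstacle here: all the work has been front-loaded into \Cref{thr:store-axioms-completeness} and its supporting lemmas. The only mild care needed is (i) to note that the well-formedness side condition $\ell \in \dom{s}$ for $\lkp{\ell}{s}$ is satisfied everywhere it is used, which is immediate since we only ever look up locations in the domain, and (ii) to be explicit that comparing values is a purely syntactic (hence decidable) operation, since $\ValTerm$ terms carry no equational theory of their own in this section. I would keep the write-up to a few lines: invoke \Cref{lem:lkp(s)=V}/\Cref{lem:store-unicity} for the $V_i$, invoke \Cref{thr:store-axioms-completeness} for $\der s = t$, and close with the one-paragraph decision procedure sketched above.
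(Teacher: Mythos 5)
Your proposal is correct and follows essentially the same route as the paper: both obtain the $V_i$ from \Cref{lem:lkp(s)=V} and \Cref{lem:store-unicity}, reach the single-occurrence normal form via the completeness machinery (the paper goes through $\nf(s)$ and \Cref{cor:store-terms-completeness}, you go through $s\simeq t$ and \Cref{thr:store-axioms-completeness}, and you note the other route yourself), and derive decidability from the computability of normalisation together with the decidability of extensional equality. No gaps.
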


\begin{proof}
Observe that, if $\dom{s} = \Set{\ell_1, \ldots , \ell_n}$ is non empty then
\[\nf(s) \equiv \upd{\ell_1}{u_1}{\cdots \upd{\ell_n}{u_n}{ \emp} \cdots}\] 
By Lemma \ref{lem:lkp(s)=V} we know that there exists $V_i \in \ValTerm$ such that
$\der u_i = V_i$ for all $i = 1, \ldots, n$, and these are unique for each $u_i$ by
 Lemma \ref{lem:store-unicity}. Then we have
\[\der \nf(s) = \upd{\ell_1}{V_1}{\cdots \upd{\ell_n}{V_n}{ \emp} \cdots}\] 
where the only differences between the left and the right-hand sides are in the 
ordering of the $\ell_i$, which is immaterial by axiom 
\ref{def:store-terms-axioms}.\ref{def:store-terms-axioms-e}.
By Corollary \ref{cor:store-terms-completeness} we conclude that
\[\der s = \upd{\ell_1}{V_1}{\cdots \upd{\ell_n}{V_n}{ \emp} \cdots}\]

Combining this with Theorem \ref{thr:store-axioms-completeness}, we conclude that
$\der s = t$ if and only if $\dom{s} = L = \dom{t}$ and $\nf(s) \simeq \nf(t)$, 
which is decidable as $\nf$ is computable and
extensional equivalence is decidable.

\end{proof}

\myparagraph{The reduction relation}
A {\em configuration} is a pair $(M,s)$, with $M \in \ComTerm$ and $s\in \StoreSort$; 
then the {\em one-step reduction} is the binary relation over configurations inductively defined by the rules in Figure~\ref{fig:reduction}.

The reduction relation is deterministic, reflecting the strictly sequential nature of evaluation for programming languages with
side-effects. A configuration of the shape $(\unit{V},t)$ is irreducible, and it is the result of
the evaluation of  $(M,s)$ whenever $\SmallStepStar{M}{s}{\unit{V}}{t}$, where $\RedStar$ is the reflexive and 
transitive closure of $\Red$. Infinite reductions exist; consider the term:
\[\Omega_c \equiv \unit{(\lambda x. \unit{x} \Bind x)} \Bind  (\lambda x. \unit{x} \Bind x)
\]
which is such that $\SmallStep{\Omega_c}{s}{\Omega_c}{s}$, for any $s$.

Not every irreducible configuration represents some properly terminating computation; the simplest example is
$(\get{\ell}{\lam x.\unit{x}}, \emp)$, because $\Aget{\ell}(\emp)$ is undefined (and even not well-formed). In general, 
the set of {\em blocked configurations} can be inductively defined by:
\[ \begin{array}{lcll}
(B,s) & ::= & (\get{\ell}{\lambda x.M}, s) & \mbox{for}~ \ell \not \in \dom{s} \\
& \mid & (B \Bind V, s) 
\end{array}\]

\begin{figure}
\begin{equation*}
\begin{array}{c@{\hspace{2mm}}l}
\SmallStep{ \unit{V} \Bind (\lambda x.M) } {s} { M\Subst{V}{x} } {s} & \betaR\\ [2mm]

\prooftree
	\SmallStep{M}{s}{N}{t}
\justifies
	\SmallStep {M \Bind V} {s} { N \Bind V} {t}
\endprooftree
& \BindR
\\ [6mm]
\prooftree
	\Aget{\ell} (s) = V
\justifies
	\SmallStep { \get{\ell}{\lambda x.M} } {s} { M \Subst{V}{x} } {s}
\endprooftree
& 	\getredR
\\ [6mm]
\SmallStep {\set{\ell}{V}{M}}{s}{M}{\Aset{\ell}(V, s)} & \setredR
\end{array}
\end{equation*}

\caption{Reduction relation.}\label{fig:reduction}
\end{figure}

\begin{example}\label{ex:overriding}
To see how the mutation of the value associated to some location $\ell$ is modelled in the operational semantics,
consider the following reduction, where we omit external brackets of configurations for readability:
\[\begin{array}{crl}
 & \set{\ell}{W}{ \set{\ell}{V}{ \get{\ell}{\lambda x.\unit{x} }  } },& s \\
 \Red &  \set{\ell}{V}{ \get{\ell}{\lambda x.\unit{x} }  } ,& \Aset{\ell}(W, s) \\
 \Red &  \get{\ell}{\lambda x. \unit{x}},& \Aset{\ell}(V, \Aset{\ell}(W, s) ) \\
 \Red & \unit{V},& \Aset{\ell}(V, \Aset{\ell}(W, s) )  
\end{array}\]
where the last step is justified by the equation 
\[\Aget{\ell}(\Aset{\ell}(V, \Aset{\ell}(W, s) ) ) = V\]
Then we say that the value $W$,
formerly associated to $\ell$, has been {\em overridden} by $V$ in $\Aset{\ell}(V, \Aset{\ell}(W, s) ) $.
Indeed $\der \Aset{\ell}(V, \Aset{\ell}(W, s) ) =  \Aset{\ell}(V, s)$.
\end{example}

\begin{example}\label{ex:reduction}
Define the abbreviation:
$M ; N \equiv M \Bind \mutelambda . N$ for {\em sequential composition}, 
where $\underbar{~}$ is a dummy variable not occurring in $N$;
then, omitting external brackets of configurations as before:
\[\begin{array}{lcrl}
& & \set{\ell}{V}{\unit{W}} \, ; \, \get{\ell}{\lambda x.N} ,& s \\ [1mm]
& \equiv & \set{\ell}{V}{\unit{W}} \Bind \mutelambda .  \get{\ell}{\lambda x.N} ,& s \\ [1mm]
& \Red & \unit{W}  \Bind \mutelambda .  \get{\ell}{\lambda x.N} ,& \Aset{\ell}(V, s) \\ [1mm]
& \Red & \get{\ell}{\lambda x.N} ,& \Aset{\ell}(V, s) \\ [1mm]
& \Red & N\Subst{V}{x},& \Aset{\ell}(V, s)
\end{array}\]
Notice that, while the value $W$ is discarded according to the semantics of sequential composition, 
the side-effect of saving $V$ to the location $\ell$ binds $x$ to $V$ in $N$.
\end{example}


\section{Convergence}\label{sec:imp-convergence}

Following \cite{Pitts98operationalreasoning},
in \cite{amsdottorato9075} the operational semantics of the imperative $ \lambda $-calculus is defined via a {\em convergence predicate}.
This is a relation among configurations and their results, which in \cite{amsdottorato9075} are semantical objects. To adapt such a definition
to our syntactical setting, we use store terms instead.

Recall that $\ValTerm^{\,0}$ and $\ComTerm^0$ are the sets of closed values and computations, respectively. We say that $s \in \StoreSort$
is {\em closed} if $V \in  \ValTerm^{\,0}$ for all the value terms $V$ occurring in $s$; we denote the set of {\em closed store terms} by $\StoreSort^0$.
We say that the configuration $(M, s)$ is {\em closed} if both $M$ and $s$ are such; we call a {\em result} any pair $(V, s)$ of closed $V$ and $s$.

\begin{definition}[Big-step]\label{def:big-step}
The relation $\BigStep{M}{s}{V}{t}$ among the closed configuration $(M, s)$ and the result $(V, t)$
is inductively defined by the rules in Figure \ref{fig:convergence-pred}.
\end{definition}

\begin{figure}
	\centering
$\begin{array}{c@{\hspace{2mm}}l}
\prooftree
	\vspace{3mm}
\justifies
	\BigStep{\unit{V}}{s}{V}{s} 
	
\endprooftree
& \bigRVal
\\ [8mm]
\prooftree
	\BigStep{M}{s}{V}{s'}
	\quad
	\BigStep{N\Subst{V}{x}}{s'}{W}{t}
\justifies
	\BigStep{M \Bind (\lambda x.N)}{s}{W}{t}
\endprooftree
& \bigRBind
\\ [8mm]
\prooftree
	\Aget{\ell}(s) = V
	\quad
	\BigStep{ M\Subst{V}{x} }{s}{W}{t}
\justifies
	\BigStep{ \get{\ell}{\lambda x.M} }{s}{W}{t}
\endprooftree
& \bigRGet
\\ [8mm]
\prooftree
	\BigStep{ M }{\Aset{\ell}(V,s)}{W}{t}
\justifies
	\BigStep{ \set{\ell}{V}{M} }{s}{W}{t}
\endprooftree
& \bigRSet
\end{array}$
\caption{Convergence predicate.}\label{fig:convergence-pred}
\end{figure}

As suggested by the name used in \ref{def:big-step}, convergence is nothing else
than the big-step semantics corresponding to the small-step semantics we have defined via the reduction
relation in Figure \ref{fig:reduction}. 

\begin{lemma}\label{lem:big-small}
	 $ (M, s) \RedStar (N, t) \Then (M\Bind V, s) \RedStar (N\Bind V, t) $
\end{lemma}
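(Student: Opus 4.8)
The plan is a routine induction on the number of steps in the reduction $(M,s)\RedStar(N,t)$, with rule $\BindR$ of \Cref{fig:reduction} doing all the work.

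First I would record the one-step instance of the claim: if $(M,s)\Red(N,t)$, then $(M\Bind V,s)\Red(N\Bind V,t)$. This is immediate, since it is exactly an application of rule $\BindR$, whose premise is $(M,s)\Red(N,t)$ and whose conclusion is $(M\Bind V,s)\Red(N\Bind V,t)$; the value $V$ plays no active role and is simply carried along.

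Then the statement follows by induction on the length $n$ of the reduction sequence $(M,s)\RedStar(N,t)$. If $n=0$, then the configurations $(M,s)$ and $(N,t)$ coincide, hence so do $(M\Bind V,s)$ and $(N\Bind V,t)$, and the conclusion holds by reflexivity of $\RedStar$. If $n>0$, decompose the sequence as $(M,s)\Red(M',s')\RedStar(N,t)$, where the tail is an $(n-1)$-step reduction. By the one-step instance above, $(M\Bind V,s)\Red(M'\Bind V,s')$; by the induction hypothesis applied to the tail, $(M'\Bind V,s')\RedStar(N\Bind V,t)$. Composing these yields $(M\Bind V,s)\RedStar(N\Bind V,t)$, as desired.

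There is no genuine obstacle here: the only observation needed is that $\BindR$ lets any step taken in the left component of a $\Bind$ be replayed inside the larger term, so an entire reduction in the left component lifts step by step. This lemma is precisely the inductive ingredient needed afterwards to match the big-step rule $\bigRBind$ with the small-step rules $\betaR$ and $\BindR$.
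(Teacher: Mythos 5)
Your proposal is correct and follows essentially the same route as the paper: induction on the reduction sequence, with the base case handled by reflexivity and the inductive step obtained by applying rule $\BindR$ to the first step and the induction hypothesis to the tail. Nothing is missing.
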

\begin{proof}
By induction over the definition of $(M, s) \RedStar (N, t)$. The base case $(M,s) \equiv (N,t)$ is obvious. Otherwise $(M,s) \Red (M',s') \RedStar (N,t)$ for some
$M',s'$; then $(M \Bind V, s) \Red (M' \Bind V,s')$ by rule $\BindR$, and $(M' \Bind V,s') \RedStar (N \Bind V, t)$ by induction.

\end{proof}

\begin{lemma}\label{lem:small-big}
	$(M, s) \Red (N, s') \,\And\, (N, s') \Downarrow (V, t) \Then (M,s)\Downarrow( V, t)$
\end{lemma}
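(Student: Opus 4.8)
The plan is to argue by induction on the derivation of the one-step reduction $(M,s)\Red(N,s')$, i.e.\ by case analysis on the last rule of \Cref{fig:reduction} applied, with an appeal to the induction hypothesis in the congruence case $\BindR$. Throughout I use that all the configurations in play are closed: $\Downarrow$ is defined only on closed configurations (\Cref{def:big-step}), and a routine induction on the reduction rules shows that $\Red$ preserves closedness; hence the hypotheses force $(M,s)$, $(N,s')$ and $(V,t)$ to be closed, and in particular every value $W$ appearing in a subterm of the form $M'\Bind W$ is a \emph{closed} value, so an abstraction.

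For the three root rules there is nothing to recur on; each matches a big-step rule whose last premise is exactly the hypothesis $\BigStep{N}{s'}{V}{t}$. If $M\equiv\unit{W}\Bind(\lambda x.M_0)$ and $(M,s)\Red(M_0\Subst{W}{x},s)$ by $\betaR$, then $\bigRBind$ applied to the axiom $\BigStep{\unit{W}}{s}{W}{s}$ $(\bigRVal)$ and to $\BigStep{M_0\Subst{W}{x}}{s}{V}{t}$ gives $\BigStep{M}{s}{V}{t}$. If $M\equiv\get{\ell}{\lambda x.M_0}$ and $(M,s)\Red(M_0\Subst{W}{x},s)$ by $\getredR$ with $\Aget{\ell}(s)=W$, then $\bigRGet$ applied to $\Aget{\ell}(s)=W$ and $\BigStep{M_0\Subst{W}{x}}{s}{V}{t}$ gives $\BigStep{M}{s}{V}{t}$. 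If $M\equiv\set{\ell}{W}{M_0}$ and $(M,s)\Red(M_0,\Aset{\ell}(W,s))$ by $\setredR$, then $\bigRSet$ applied to $\BigStep{M_0}{\Aset{\ell}(W,s)}{V}{t}$ gives $\BigStep{M}{s}{V}{t}$.

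The remaining case $\BindR$ is the only one that uses the induction hypothesis: here $M\equiv M_1\Bind W$, $N\equiv N_1\Bind W$, the reduction comes from a strictly shorter derivation of $(M_1,s)\Red(N_1,s')$, and $W$ is a closed value, hence $W\equiv\lambda x.P$. The given derivation of $\BigStep{N_1\Bind(\lambda x.P)}{s'}{V}{t}$ can only end with $\bigRBind$, so it splits as $\BigStep{N_1}{s'}{V_1}{s_1}$ and $\BigStep{P\Subst{V_1}{x}}{s_1}{V}{t}$ for suitable $V_1,s_1$. The induction hypothesis applied to $(M_1,s)\Red(N_1,s')$ and $\BigStep{N_1}{s'}{V_1}{s_1}$ yields $\BigStep{M_1}{s}{V_1}{s_1}$, whence $\bigRBind$ applied to this and $\BigStep{P\Subst{V_1}{x}}{s_1}{V}{t}$ gives $\BigStep{M_1\Bind(\lambda x.P)}{s}{V}{t}$, that is $\BigStep{M}{s}{V}{t}$. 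The only mildly delicate point is this inversion of the big-step derivation in the $\BindR$ case, which is where the standing closedness assumption (forcing $W$ to be an abstraction, so that $\bigRBind$ is necessarily the last rule) is needed; everything else is a direct application of a convergence rule, so no real obstacle is expected.
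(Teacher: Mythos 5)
Your proof is correct and follows essentially the same route as the paper's: induction on the one-step reduction, with the three root rules handled directly by the matching big-step rules, and the $\BindR$ case handled by inverting the $\bigRBind$ derivation (using that the closed value $W$ is an abstraction) and applying the induction hypothesis. The only difference is cosmetic: you spell out the closedness-preservation point and the get/set base cases, which the paper leaves implicit.
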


\begin{proof}
	By induction over 	$(M, s) \Red (N, s') $.
	\begin{description}
		\item Case ($\betac$):
		$\SmallStep{ M\equiv \unit{W} \Bind (\lambda x.M') } {s} 
		{  M'\Subst{W}{x} } {s}$; by hypothesis we know that $ ( M'\Subst{W}{y},s)\Downarrow (V,t) $,  then:
		\[\begin{array}{c}
			\prooftree
			\prooftree
			\vspace{3mm}
			\justifies
			(\unit{W},s)\Downarrow (W,s)
			\using \bigRVal
			\endprooftree
			 \qquad ( M'\Subst{W}{y},s)\Downarrow (V,t)
			\justifies
			(\unit{W} \Bind \lambda y.M',s)\Downarrow(V,t)
			\using \bigRBind
			\endprooftree
		\end{array}\]

		\item Case ($\Bind$-red): $ \SmallStep { M' \Bind W} {s} { N' \Bind W} {s'}  $ because $ \SmallStep{M'}{s}{N'}{s'} $.
		
		In this case we have $W\equiv \lam x.L$, since $W\in \ValTerm^{\,0}$. 
		By hypothesis $(N'\Bind \lam x.L ,s')\Downarrow (V, t)$, so there exist $W', s''$ such that 
		
		\[\prooftree
		\BigStep{N'}{s'}{W'}{s''}
		\quad
		\BigStep{L\Subst{W'}{x}}{s''}{V}{t}
		\justifies
		\BigStep{ M' \Bind \lam x.L}{s'}{V}{t}
		\using \bigRBind
		\endprooftree\]
		
		Since  $ \SmallStep{M'}{s}{N'}{s'} $ and $ \BigStep{N'}{s'}{W'}{s''} $, by induction hypothesis $ \BigStep{M'}{s'}{W'}{s''} $ and therefore we have the derivation
		
		\[\prooftree
		\BigStep{M'}{s'}{W'}{s''}
		\quad
		\BigStep{L\Subst{W'}{x}}{s''}{V}{t}
		\justifies
		\BigStep{ M' \Bind \lam x.L}{s'}{V}{t}
		\using \bigRBind
		\endprooftree\]
				
	\end{description}
	The remaining cases
		$\SmallStep { \get{\ell}{\lambda x.M'} } {s} { M' \Subst{W}{x} } {s}$ where $\Aget{\ell} (s) = W$, and 	
		$\SmallStep {\set{\ell}{W}{M'}}{s}{ M'}{\Aset{\ell}(W, s)\equiv s'}$, easily follow by induction.

\end{proof}

\begin{restatable}{proposition}{bigSmall}\label{prop:big-small}
For all $M \in \ComTerm^0$, $V \in \ValTerm^{\,0}$, and $s,t \in \StoreSort^0,$ we have:
\[ \BigStep{M}{s}{V}{t} \iff (M, s) \RedStar (\unit{V}, t) \]
\end{restatable}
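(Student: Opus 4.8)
The plan is to prove the two implications separately, each by a routine induction that is already set up by Lemmas~\ref{lem:big-small} and~\ref{lem:small-big}.

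For the left-to-right direction I would induct on the derivation of $\BigStep{M}{s}{V}{t}$, examining the four rules of \Cref{fig:convergence-pred}. Rule $\bigRVal$ is immediate, since $(\unit{V},s) \RedStar (\unit{V},s)$ by reflexivity. For $\bigRGet$, rule $\getredR$ gives $(\get{\ell}{\lambda x.M},s) \Red (M\Subst{V}{x},s)$ with $\Aget{\ell}(s)=V$, and the induction hypothesis applied to the premise on $M\Subst{V}{x}$ closes the case; $\bigRSet$ is analogous via $\setredR$, noting $(\set{\ell}{V}{M},s) \Red (M,\Aset{\ell}(V,s))$. The only case with real content is $\bigRBind$: from the premises $\BigStep{M}{s}{V}{s'}$ and $\BigStep{N\Subst{V}{x}}{s'}{W}{t}$ the induction hypothesis yields $(M,s)\RedStar(\unit{V},s')$ and $(N\Subst{V}{x},s')\RedStar(\unit{W},t)$; then \Cref{lem:big-small} lifts the first to $(M\Bind(\lambda x.N),s)\RedStar(\unit{V}\Bind(\lambda x.N),s')$, one $\betaR$ step gives $(\unit{V}\Bind(\lambda x.N),s')\Red(N\Subst{V}{x},s')$, and concatenating with the second reduction produces $(M\Bind(\lambda x.N),s)\RedStar(\unit{W},t)$.

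For the right-to-left direction I would induct on the length of the reduction $(M,s)\RedStar(\unit{V},t)$. If the length is zero then $(M,s)\equiv(\unit{V},t)$ and $\BigStep{\unit{V}}{s}{V}{s}$ holds by $\bigRVal$. Otherwise the reduction factors as $(M,s)\Red(N,s')\RedStar(\unit{V},t)$; the induction hypothesis gives $\BigStep{N}{s'}{V}{t}$, and \Cref{lem:small-big} promotes the single step to conclude $\BigStep{M}{s}{V}{t}$.

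The only step needing care is the $\bigRBind$ case: one must assemble the reduction in three segments in the correct order and check that the store $s'$ returned by evaluating $M$ is precisely the one threaded into the $\betaR$ redex, so that \Cref{lem:big-small} and rule $\betaR$ compose seamlessly. No extra subtlety comes from the store algebra, since both the reduction rules and the convergence rules access store terms only through $\Aget{}$ and $\Aset{}$ and are therefore equally insensitive to the equational theory of \Cref{def:store-terms-axioms}.
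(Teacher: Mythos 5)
Your proof is correct and follows essentially the same route as the paper's: the only-if direction by induction on the big-step derivation, using \Cref{lem:big-small} to lift the reduction of $M$ under $\Bind$ in the $\bigRBind$ case, and the if direction by induction on the reduction sequence, discharging each single step with \Cref{lem:small-big}. The extra detail you supply for the $\bigRGet$ and $\bigRSet$ cases matches what the paper leaves as "immediate by the induction hypothesis."
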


\begin{proof}
The only if part is proved by induction over the definition of $\BigStep{M}{s}{V}{t}$, the case $\bigRVal$ is trivial. 
Consider the case $\bigRBind$. We know by induction that
\begin{description}
\item[$ HI_1 $] $ (M,S) \RedStar (\unit{V}, s')$
\item[$ HI_2 $] $ (N\Subst{V}{x}, s') \RedStar (\unit{W}, t) $
\end{description}
\[\begin{array}{rcll}
(M \Bind \lam x. N,s) &\RedStar& (\unit{V}\Bind \lam x.N, s')  & \text{ by } (HI_1) \text{ and Lemma \ref{lem:big-small}} \\
&\Red & (N\Subst{V}{x}, s') & \text{ by }\betac\\
&\RedStar& (\unit{W}, t) &\text{ by } (HI_2)
\end{array}\]

The cases of $\bigRGet$ and $\bigRSet$ are immediate by the induction hypothesis.

The if part is proved by induction over
$(M, s) \RedStar (\unit{V}, t)$.
The base case is $(M\equiv \unit{V}, s\equiv t)$ then $ \BigStep{\unit{V}}{s}{V}{s}  $ by $ \bigRVal $.
Otherwise, there exists $(N,s')$ such that 
$(M, s) \Red (N,s') \RedStar (\unit{V}, t)$. By induction $ \BigStep{N}{s'}{V}{t}  $ so that $ \BigStep{M}{s}{V}{t}  $  by Lemma \ref{lem:small-big}.

\end{proof}

Proposition \ref{prop:big-small} justifies the name ``result'' we have given to the pairs $(V,t)$; indeed, since the reduction relation is deterministic, 
the same holds for convergence so that  computations can be seen as partial functions from stores to results:
\begin{equation}\label{eq:opsem-func}
M(s) = \left \{ 
\begin{array}{ll}
	(V,t) & \mbox{if $\BigStep{M}{s}{V}{t}$} \\
	\mbox{undefined} & \mbox{else}
\end{array} \right.
\end{equation}
Notice that $M(s)$ is undefined if either the reduction out of $(M, s)$ is infinite, or if it reaches some blocked configuration
$(B, t)$.

The convergence predicate essentially involves the stores, that are part of configurations and results, and are dynamic entities,
much as when executing imperative programs. However, when reasoning about programs, namely (closed) computations, 
we abstract from the infinitely many stores that can be fed together with inputs to the program, and returned together
with their outputs. This motivates the following definition:

\begin{definition}[Convergence]\label{def:convergence}
For $M \in \ComTerm^0$ and $s \in \StoreSort^0$ we set:
\begin{enumerate}
\item $(M, s) \Downarrow ~ \iff ~ \exists V, t. ~ \BigStep{M}{s}{V}{t} $
\item $M \Downarrow ~ \iff ~  \forall s \in \StoreSort^0.\, (M, s) \Downarrow $
\end{enumerate}
\end{definition}

The definition of $(M, s) \Downarrow$ is similar to that in \cite{Pitts98operationalreasoning}, but simpler since we do not treat local stores. 
The definition of $M \Downarrow$ is equivalent to $(M, \emp) \Downarrow$. To see why,
consider the term $P\equiv \get{\ell}{\lam x.\unit{x}}$: the configuration $(P,s)$ converges if and only if $\Aget{\ell}(s)$ is defined;
in particular, the configuration $(P,\emp)$ is blocked and it does not yield any result.
Conversely, if a given configuration $(M, \emp)$ converges, either no \textit{get} operation occurs as the main operator 
of a configuration $(N,s')$
in the convergent  out of $(M, \emp)$, 
or if any such operation does, with say $N\equiv \get{\ell}{\lam x.M'}$, then there exists a term of shape $\set{\ell}{V}{N'}$ in a configuration
that precedes $(N,s')$ in the reduction path,
``initializing'' to $V$ the value of $\ell$. Therefore, if $(M,\emp)\Downarrow$ then $(M,s)\Downarrow$, for any $s$.

We say that $(M,s)$ is {\em divergent}, written $(M,s)\Uparrow$, if not $(M,s)\Downarrow$. By Proposition \ref{prop:big-small} a 
divergent configuration $(M,s)$ either reduces to a blocked configuration or the unique 
reduction out of $(M,s)$ is infinite.


\newcommand{\DinftyS}{D_\infty^\GS}

\section{Denotational semantics}\label{sec:denotational}

In this section, we construct a model $\DinftyS$ to interpret terms of the $\lamImp$ calculus. We first define precisely the {\em partiality and state monad} $\GS$
that is essential to model stores; then we recall Plotkin and Power's {\em generalised algebraic operators} over a monad, by adapting their definition 
the to the present setting. By solving the equation $D = D \to \GS\,D$ in a suitable category of domains $\CatDom$, which we choose as the category of $\omega$-algebraic lattices (see the references below), we get a domain $\DinftyS$ 
such that we can interpret values and computations in $\DinftyS$ and
$\GS\, \DinftyS$, respectively. This will be of use in Section \ref{sec:type-interpretation} to show the soundness of the type system. A more detailed treatment of the denotational semantics and of the filter-model construction for $\lamImp$ can be found in \cite{deLiguoroT23}. 

Recall that a concrete category $\CatC$ is essentially a set-like category; more precisely it is equipped with a faithful functor to the category of sets. If it is a cartesian closed category, shortly a ccc (see e.g. \cite[chap. 3]{PierceCatTheory-91}
and \cite[sec. 4]{AbramskyJung94}), then products can be seen as sets of pairs and exponents as sets of functions.

\begin{definition}\label{def:Cmonad}
{\bf (Computational Monad \cite{Wadler-Monads} \S 3)}
	Let $\CatC$ be a concrete cartesian closed category. A {\em functional computational monad}, henceforth just a {\em monad}
	over $\CatC$ is a triple $(T, \UnitSub \,, \Bind)$ where
	$T$ is a map over the objects of $\CatC$, and $\UnitSub \,$ and $\Bind$ are families of morphisms
	\[\begin{array}{c@{\hspace{2cm}}c}
		\UnitSub{A}: A \to TA & \Bind_{A,B}: TA \times (TB)^A  \to TB
	\end{array}\]
	such that, writing $\Bind_{A,B}$ as an infix operator and omitting subscripts:
	\[\begin{array}{l@{\hspace{0.5cm}}rcl@{\hspace{0.5cm}}l}
		\LeftUnit: & \UnitSub \, \, a \Bind f & = & f\,a \\
		\RightUnit: & m \Bind \UnitSub \, & = & m \\
		\Assoc: & (m \Bind f) \Bind g & = & m \Bind \metalambda d. (f\,d \Bind g)
	\end{array}\]
	where
$\metalambda d. (f\,d \Bind g)$ is the lambda notation for $d \mapsto f\,d \Bind g$.
\end{definition}

In case of a concrete ccc, Definition \ref{def:Cmonad} is equivalent to the notion of {\em Kleisli triple} $(T, \eta, \_^\dagger)$ (see e.g. \cite[Def.~1.2]{Moggi'91}).
The equivalence is obtained by observing that $\UnitSub{X} = \eta_X$ and $a \Bind f = f^\dagger (a)$. The second equation establishes the connection between the $\Bind$ operator with the map $\_^\dagger$, also called {\em extension operator}: if $f:X \to TY$ then $f^\dagger:TX \to TY$ is the unique map such that $f = f^\dagger \circ \eta_X$.

\medskip
The category $\CatDom$ of $\omega$-algebraic lattices
is a concrete ccc: objects are complete lattices, henceforth simply referred to as {\em domains}, 
and morphisms are Scott-continuous functions, namely preserving directed sups: 
see e.g. \cite[sec. 2 and 4]{AbramskyJung94} and \cite[chap. 1]{Amadio-Curien'98}. 
We now define the {\em partiality and state monad} as a monad over $\CatDom$.

\begin{definition}\label{def:statemon}
{\bf (Partiality and state monad)}
	Given the domain $S = (X_\bot)^\Label$ representing a notion of {\em state} with values in $X$, we define the partiality and state monad $(\GS, \UnitSub \,, \Bind)$, as the mapping
	\[\GS\,X = [S \to (X \times S)_\bot]\]
	where $(X \times S)_\bot$ is the lifting of the cartesian product $X \times S$, equipped with two (families of) operators $\UnitSub \,$ and $\Bind$ defined as follows:
	
	\[\begin{array}{l@{\hspace{0.5cm}}rcl@{\hspace{0.5cm}}l}
		\UnitSub \,\,x::= \metalambda \finMap. (x,\finMap)\\
		(c\Bind f)\finMap =	f^\dagger(c)(\finMap) ::= \left \{
		\begin{array}{lll}
			f \,(x) (\finMap')\, & \mbox{if $c(\finMap)= (x, \finMap')\neq \bot$} \\
			\bot                & \mbox{if $c(\finMap) =  \bot$}
		\end{array} \right.
	\end{array}\]
where we omit subscripts.
\end{definition}

In the definition of the monad $\GS$ there are two occurrences of the lifting functor $(\cdot)_\bot$ that incidentally is a monad itself. The first one is in the definition of $S = (X_\bot)^\Label$, where it is necessary to model the stores over $X$ as partial functions from $\Label$ to $X$; see
Definition \ref{def:store-lkp-config-interpretation} further on, where the semantics of store and lookup terms is formally stated. Here we use the set-theoretic notation $(\cdot)^\Label$ for (total) maps with domain $\Label$ for the latter is just a set, while the ordering over $S$
is inherited from that of $X_\bot$, assuming that $X$ is a domain.
The second place is $(X \times S)_\bot$ that again is motivated by the need to model computations as partial functions from $S$ to $X \times S$, as it will be apparent in Definition 
\ref{def:lambda-imp-model}.

In the proof of the next theorem we use standard concepts and techniques from domain theory and the solution of domain equations,
for which we refer the reader to \cite{AbramskyJung94} and \cite[chap. 7]{Amadio-Curien'98}.

\begin{theorem}\label{thm:domain-equation}
There exists a domain $\DinftyS$ such that the state monad $\GS$ with state domain 
$S = ((\DinftyS)_\bot)^\Label$ is a solution in $\CatDom$ to the domain equation:
\[D = [D \to \GS\,D]\]
Moreover, it is initial among all solutions to such an equation.
\end{theorem}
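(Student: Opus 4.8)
The plan is to solve the domain equation by the standard inverse-limit (bilimit) construction of Scott, adapted to the category $\CatDom$ of domains with embedding-projection pairs, and then to observe that the functor involved is locally continuous so that the canonical solution is also initial. The subtlety here, compared with the classical $D = [D \to D]$ case, is that the right-hand side is not $[D \to D]$ but $[D \to \GS\, D]$, where $\GS\, D = [S \to (D \times S)_\bot]$ and crucially $S = (D_\bot)^\Label$ itself depends on $D$. So the functor whose fixed point we seek is $F(D) = [D \to [(D_\bot)^\Label \to (D \times (D_\bot)^\Label)_\bot]]$, which has $D$ occurring both positively and negatively.

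First I would set up the category $\CatDom$ with the appropriate notion of morphism (embedding-projection pairs) so that $F$ lifts to a covariant, locally continuous functor $\hat F$ on this category: the standard trick is to replace $F$ by a functor $F : \CatDom^{\mathrm{op}} \times \CatDom \to \CatDom$ of two arguments, covariant in the second and contravariant in the first, and check that it is locally continuous in each argument separately. This amounts to verifying that the basic type constructors used — the function space $[-\to-]$, the binary product $\times$, lifting $(-)_\bot$, and the $\Label$-indexed product $(-)^\Label$ (an ordinary set-indexed product since $\Label$ is just a countable set) — are all locally continuous. These are textbook facts; the only mildly non-routine point is that the exponent $S = (D_\bot)^\Label$ appears in both argument positions of the inner function space $[S \to (D\times S)_\bot]$, but this is handled uniformly by the mixed-variance formulation. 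Then I would invoke the standard limit-colimit coincidence theorem: starting from the initial object (the one-point domain) $D_0$, iterating $D_{n+1} = F(D_n, D_n)$ with the canonical embedding-projection pairs $D_n \rightleftarrows D_{n+1}$, and taking the bilimit $\DinftyS = \invLim D_n$, one obtains $\DinftyS \iso F(\DinftyS, \DinftyS) = [\DinftyS \to \GS\,\DinftyS]$, where $\GS$ now has state domain $S = ((\DinftyS)_\bot)^\Label$ exactly as in the statement.

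For initiality, I would recall that the bilimit solution is the initial algebra (equivalently, by Lambek, the initial fixed point) of $\hat F$ in the category of $F$-algebras with embedding-projection morphisms: any other solution $E \iso F(E,E)$ receives a unique embedding from $\DinftyS$ compatible with the isomorphisms, obtained by the universal property of the colimit $\DinftyS = \colim D_n$ together with the cone of maps $D_n \to E$ built by induction from the unique map $D_0 \to E$. This is precisely the "minimal invariant" / initiality property, and the argument is the one in e.g. Smyth–Plotkin and in Moggi's treatment; since the excerpt explicitly flags it as a known construction ("\S 5" of \cite{Moggi'88}), I would cite that rather than redo it in detail.

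The main obstacle — really the only place requiring care — is the mixed-variance bookkeeping around $S = (D_\bot)^\Label$: one must be sure that when $D$ is replaced by the inverse system $(D_n)$, the induced systems for $S$, for $(D\times S)_\bot$, and for the two nested function spaces all cohere, so that the isomorphism $\DinftyS \iso [\DinftyS \to \GS\,\DinftyS]$ genuinely has $\GS$ interpreted with the \emph{same} $\DinftyS$ inside $S$. Concretely this means checking that the embedding-projection pair chosen on $D_{n+1}=F(D_n,D_n)$ restricts correctly on the $S$-component; once the functor $F$ is presented in the two-argument mixed-variance form and shown locally continuous, this is automatic, but it is the step where a naive "just apply Scott's theorem" would be too glib. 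Everything else — local continuity of the constructors, the colimit computation, and initiality — is standard domain theory, so I would present it compactly with pointers to \cite{Moggi'88} and a standard reference for the inverse-limit construction.
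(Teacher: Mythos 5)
Your proposal is correct and follows essentially the same route as the paper: the paper also packages the state-dependence into a mixed-variance bifunctor $G(X,Y) = [(X_\bot)^\Label \to (Y \times (Y_\bot)^\Label)_\bot]$ (so that $G(D,D) = \GS D$), checks local continuity, and invokes the inverse-limit technique to obtain the initial solution of $D = [D \to G(D,D)]$. The subtlety you flag — that $S = (D_\bot)^\Label$ occurs in both variance positions and must be instantiated with the \emph{same} $\DinftyS$ at the end — is exactly the point the paper's construction addresses by setting $S = F\DinftyS$ only after the fixed point is obtained.
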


\begin{proof}
Consider the functor $FX = (X_\bot)^\Label$ and the mixed-variant bi-functor $G: \CatDom^\textit{op} \times \CatDom \to \CatDom$ by
\[G(X,Y) = [FX \to (Y \times FY)_\bot] \]
whose action on morphisms is illustrated by the diagram:

\[
\xymatrix@C=4em{
FX' \ar[r]^{Ff} \ar@{-->}[d]^{G(f,g)(\alpha)} 
  & FX \ar[d]_{\alpha} \\
(Y' \times FY')_{\bot} 
  & (Y \times FY)_{\bot} \ar[l]^{(g \times Fg)_{\bot}}
}
\]

where $f:X' \to X$, $g:Y \to Y'$ and $\alpha \in G(X,Y)$. Now it is routine to prove that
$G$ is locally continuous so that, by the inverse limit technique, 
we can find the initial solution $\DinftyS$ to the domain equation 
\begin{equation}\label{eq:domain-equation}
D \,\cong\,  [D \to G(D,D)]
\end{equation}
By setting $S = F\DinftyS = ((\DinftyS)_\bot)^\Label$ in the definition of $\GS$ we obtain $\GS\, \DinftyS  \,\cong\,  G(\DinftyS, \DinftyS)$ and hence
$\DinftyS  \,\cong\, [\DinftyS \to \GS\,\DinftyS]$
as desired.
\end{proof}

\myparagraph{Denotational semantics of terms}
Following ideas from \cite{PlotkinP02,PlotkinP03,Power06} and adopting notation from \cite{amsdottorato9075}, we say that in a concrete ccc a 
{\em generalized algebraic operator} over a monad $T$ is a morphism
\[\textbf{op}:  P \times (TX)^A \to TX \cong (TX)^A \to (TX)^P\]
where $P$ is the domain of parameters and $A$ of generalized arities, and the rightmost ``type'' is the interpretation in $\CatDom$ of Def. 1 in \cite{PlotkinP02}. 
Such operations must satisfy (see \cite{amsdottorato9075}, Def. 13):
\begin{equation}\label{eq:gen-op}
\textbf{op}(p,k) \Bind f = 
f^\dagger (\textbf{op}(p,k)) = \textbf{op}(p, f^\dagger \circ k)
= \textbf{op}(p, \metalambda x. (k(x) \Bind f))
\end{equation}
where $f: X \to TX$, $p \in P$ and $k:A \to TX$.

\medskip
We are now in place to interpret $\textit{set}_{\ell}$ and $\textit{get}_{\ell}$ as generalized algebraic operations over $\GS$. 
By taking $P = \textbf{1}$ and $A = D$ (writing $D = \DinftyS$ for short) we define:
\[\Sem{\textit{get}_{\ell}}:  \textbf{1} \times (\GS D)^D \to \GS D \simeq (\GS D)^D \to \GS D\quad
\mbox{ by } \quad  \Sem{\textit{get}_{\ell}} \, d \, \storeMap = d\, (\storeMap \,( \ell))\, \storeMap\] 
where $d \in D$ is identified with its image in $D \to \GS\,D$, and $\storeMap \in S = (D_\bot)^\Label$.
On the other hand, taking $P = D$ and $A = \textbf{1}$, we define:
\[\Sem{\textit{set}_{\ell}}: D \times (\GS D)^{\textbf{1}} \to \GS D \simeq D \times \GS D \to \GS D \quad
\mbox{ by }  \Sem{\textit{set}_{\ell}}(d, c)\, \storeMap = c (\storeMap[\ell \mapsto d])\] 
where $c \in \GS D = S \to (D \times S)_\bot$ and $\storeMap[\ell \mapsto d]$ is the store
sending $\ell$ to $d$ and it is equal to $\storeMap$, otherwise.

Then we interpret values from $\ValTerm$ in $D$
and computations from $\ComTerm$ in $\GS D$. More precisely we define the maps
$\SemD{\cdot}: \ValTerm \to \Env \to D$ and 
$\SemSD{\cdot}: \ComTerm \to \Env \to \GS D$,
where $\Env = \Var \to D$ is the set of environments interpreting term variables:

\begin{definition}\label{def:lambda-imp-model}
\label{def:lam-imp-mod}
A \textbf{$ \lamImp $-model} is a structure $\Dcat = (D, \GS, \SemD{\cdot},\SemSD{\cdot})$ such that:
\begin{enumerate}
	\item $D$ is a domain s.t. $D\cong D\to \GS D$ via $(\Phi,\Psi)$, where $\GS$ is the partiality and state monad of stores over $D$;
	\item for all $e \in \Env$, $V \in \ValTerm$ and $M \in \ComTerm$:
		\[\begin{array}{rcl}
		\Sem{x}^D e  & = &  e (x) \\ [1mm]
		\SemD{\lambda x.M} e  & = & \Psi( \metalambda d \in D. \, \SemSD{M} {e[x  \mapsto d]} ) \\ [1mm]
		\SemSD{\, \unit{V} \,} e & = & \UnitSub \, (\SemD{V} e) \\ [1mm]
		\SemSD{M \Bind V} e &  = & (\SemSD{M} e)\Bind \Phi (\SemD{V} e )\\ [1mm]
		\SemSD{ \get{\ell}{\lambda x.M}} e & =  & \Sem{\textit{get}_{\ell}} \, \Phi (\SemD{\lambda x.M} e)  \\ [1mm]
		\SemSD{ \set{\ell}{V}{M} } e  & = & \Sem{\textit{set}_{\ell}}(\SemD{V} e, \SemSD{M}  e) 
		\end{array}\]
\end{enumerate}
\end{definition}
By unravelling definitions and applying them to an arbitrary store $\storeMap \in S$, the last two clauses can be written:
\[\begin{array}{rcl}
\SemSD{ \get{\ell}{\lambda x.M}} e \, \storeMap & = & \SemSD{M} (e [x  \mapsto \storeMap(\ell)]) \, \storeMap \\ [1mm]
\SemSD{ \set{\ell}{V}{M} } e \, \storeMap & = & \SemSD{M}  e \, (\storeMap[\ell \mapsto \SemD{V} e])
\end{array}\]

We say that the equation $M = N$ is {\em true} in $\Dcat$, written $\Dcat \models M = N$, if $\SemSD{M} e = \SemSD{N} e$ for all $e \in \Env$.

\begin{proposition}\label{prop:true-eq}
The following equations are true in any $\lamImp$-model $\Dcat$:
\begin{enumerate}
\item \label{prop:true-eq-i} $\unit{V} \Bind (\lambda x.M) = M\Subst{V}{x}$
\item \label{prop:true-eq-ii} $M \Bind \lambda x. \unit{x} = M$
\item \label{prop:true-eq-iii} $(L \Bind \lambda x.M) \Bind \lambda y.N = L \Bind \lambda x. (M \Bind \lambda y. N)$
\item \label{prop:true-eq-iv} $\get{\ell}{\lambda x.M} \Bind W = \get{\ell}{\lambda x.(M \Bind W)}$
\item \label{prop:true-eq-v} $\set{\ell}{V}{M} \Bind W = \set{\ell}{V}{M \Bind W}$
\end{enumerate}
where $x \not \in FV(\lambda y.N)$  in (\ref{prop:true-eq-iii}) and $x \not\in FV(W)$ in (\ref{prop:true-eq-iv}).
\end{proposition}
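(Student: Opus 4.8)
The five equations fall into two groups, and the plan is to handle each by the same routine of unfolding the clauses of \Cref{def:lambda-imp-model}, applying one law, and folding back. Items \ref{prop:true-eq-i}--\ref{prop:true-eq-iii} are the semantic counterparts of the three monad laws \LeftUnit, \RightUnit, \Assoc\ of \Cref{def:Cmonad}; items \ref{prop:true-eq-iv}--\ref{prop:true-eq-v} express the algebraicity of the operators $\Sem{\textit{get}_\ell}$ and $\Sem{\textit{set}_\ell}$, i.e.\ equation \eqref{eq:gen-op}. In every case I would fix an arbitrary environment $e \in \Env$ and show that both sides denote the same element of $\GS D$.

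First I would record two auxiliary facts, each proved by mutual induction on the structure of values and computations using the defining clauses: \textbf{(a)} a \emph{coincidence lemma}, stating that $\SemD{V}e$ (resp.\ $\SemSD{M}e$) depends only on the values of $e$ on the free variables of $V$ (resp.\ $M$); and \textbf{(b)} a \emph{substitution lemma}, $\SemD{V\Subst{W}{x}}e = \SemD{V}(e[x\mapsto \SemD{W}e])$ and $\SemSD{M\Subst{W}{x}}e = \SemSD{M}(e[x\mapsto \SemD{W}e])$. These are standard; the only points needing a little care are the $\lambda$- and $\textit{get}$-binder cases, where one relies on terms being identified up to renaming of bound variables so that capture is avoided.

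For item \ref{prop:true-eq-i}: unfolding, $\SemSD{\unit{V}\Bind(\lambda x.M)}e = \UnitSub{}(\SemD{V}e) \Bind (\Phi\circ\Psi)\bigl(\metalambda d.\,\SemSD{M}(e[x\mapsto d])\bigr)$; since $\Phi\circ\Psi=\id$, the law \LeftUnit\ gives $\SemSD{M}(e[x\mapsto \SemD{V}e])$, which is $\SemSD{M\Subst{V}{x}}e$ by \textbf{(b)}. Item \ref{prop:true-eq-ii} follows from \RightUnit\ after observing $\Phi(\SemD{\lambda x.\unit{x}}e) = \metalambda d.\,\UnitSub{}(d) = \UnitSub{}$. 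Item \ref{prop:true-eq-iii} uses \Assoc: the left side unfolds to $\SemSD{L}e \Bind \metalambda d.\,\bigl(\SemSD{M}(e[x\mapsto d]) \Bind \Phi(\SemD{\lambda y.N}e)\bigr)$, and since $x\notin FV(\lambda y.N)$ the coincidence lemma \textbf{(a)} lets us replace $\SemD{\lambda y.N}e$ by $\SemD{\lambda y.N}(e[x\mapsto d])$, whence the body is $\SemSD{M\Bind\lambda y.N}(e[x\mapsto d])$ and the whole expression folds to the right side.

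For items \ref{prop:true-eq-iv}--\ref{prop:true-eq-v} I would unfold $\SemSD{\get{\ell}{\lambda x.M}}e = \Sem{\textit{get}_\ell}\,k$ with $k = \Phi(\SemD{\lambda x.M}e) = \metalambda d.\,\SemSD{M}(e[x\mapsto d])$, and $\SemSD{\set{\ell}{V}{M}}e = \Sem{\textit{set}_\ell}(\SemD{V}e, \SemSD{M}e)$; then apply \eqref{eq:gen-op} with $f = \Phi(\SemD{W}e)$, which turns the former into $\Sem{\textit{get}_\ell}\,\bigl(\metalambda d.\,(\SemSD{M}(e[x\mapsto d]) \Bind f)\bigr)$ and the latter into $\Sem{\textit{set}_\ell}(\SemD{V}e, \SemSD{M}e \Bind f)$. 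Using $x\notin FV(W)$ and \textbf{(a)} in the first case, the bodies become $\SemSD{M\Bind W}(e[x\mapsto d])$ and $\SemSD{M\Bind W}e$ respectively, and folding back gives the right sides. I do not expect any genuine difficulty here: the real work is the bookkeeping of $\metalambda$-binders and of which environment each subterm is evaluated in, and making sure the side conditions $x\notin FV(\lambda y.N)$ and $x\notin FV(W)$ are invoked exactly at the steps that use the coincidence lemma; stating and proving lemmas \textbf{(a)} and \textbf{(b)} cleanly is the only part that requires attention.
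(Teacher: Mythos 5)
Your proposal is correct and follows essentially the same route as the paper: the paper's proof is exactly ``by definition and straightforward calculations,'' carrying out the pointwise unfolding for item (\ref{prop:true-eq-iv}) as a representative case and using $x \notin FV(W)$ to justify $\Sem{W}e' = \Sem{W}e$, and the paper itself remarks afterwards that (\ref{prop:true-eq-i})--(\ref{prop:true-eq-iii}) are the monadic laws and (\ref{prop:true-eq-iv})--(\ref{prop:true-eq-v}) are instances of \eqref{eq:gen-op}. The only difference is one of explicitness: you state the coincidence and substitution lemmas and invoke \eqref{eq:gen-op} abstractly (which strictly still requires checking that the concrete $\Sem{\textit{get}_\ell}$ and $\Sem{\textit{set}_\ell}$ satisfy it, a routine pointwise computation the paper performs directly), whereas the paper leaves these auxiliary facts implicit.
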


\begin{proof}
By definition and straightforward calculations. For example, to see (\ref{prop:true-eq-iv}), let $\storeMap \in S$ be arbitrary
and $e' = e [x  \mapsto \storeMap(\ell)]$; then, omitting the apices of the interpretation mappings $\Sem{\cdot}$:
\[
\Sem{\get{\ell}{\lambda x.M} \Bind W} e \, \storeMap =
	\GSext{(\Sem{W} e)}(\Sem{M} e') \, \storeMap
\]
On the other hand:
\[
\Sem{\get{\ell}{\lambda x.(M \Bind W)}} e \, \storeMap 
= (\Sem{\lambda x.(M \Bind W)} e) \, \storeMap(\ell) \, \storeMap
= \Sem{M \Bind W} e' \, \storeMap 
= \GSext{(\Sem{W} e')}(\Sem{M} e') \, \storeMap
\]
But $x \not\in FV(W)$ implies $\Sem{W} e' = \Sem{W} e$, and we are done.
\end{proof}

Equalities (\ref{prop:true-eq-i})-(\ref{prop:true-eq-iii}) in Proposition \ref{prop:true-eq} are the {\em monadic laws} from Definition \ref{def:Cmonad}, Equalities (\ref{prop:true-eq-iv})-(\ref{prop:true-eq-v})
are instances of Equation (\ref{eq:gen-op}) at the beginning of this section.


\section{Intersection Type Assignment System}
\label{sec:imp-intersection}

Intersection types have been introduced in the '80 as an extension of Curry's simple type assignment system to the untyped $\lambda$-calculus, and
have been developed into a large family of systems, with applications to various $\lambda$-calculi, including call-by-value and linear ones.
Prominent results concern the characterization of normalizing and strongly normalizing as well as solvable terms, namely reducing to a head normal form.
It is then natural to think of intersection types as a tool to characterize convergent terms in $\lamImp$-calculus.

We extend the system in \cite{deLiguoroTreglia20} to type stores and operations $\textit{get}_{\ell}$ and $\textit{set}_{\ell}$,
by specializing the generic monad $T$ to the state and partiality monad.

\myparagraph{Types and subtyping}
An intersection type theory $\Th_A$ over a language of types $\Lang_A$ is an axiomatic presentation of a preorder $\leq_A$ over $\Lang_A$
turning the latter set into an inf-semilattice with top.

\begin{definition}\label{def:theories}
	An {\em intersection type theory}, shortly {\em itt}, is a pair $\Th_A = (\Lang_A, \leq_A)$ where $\Lang_A$, the {\em language} of 
	$\Th_A$, is a countable set of type expressions closed under $\Inter$, and $\omega_A\in \Lang_A$ is a special constant; $\leq_A$ 
	is a pre-order over $\Lang_A$ closed under the following rules:
	\[\begin{array}{c@{\hspace{1.5cm}}c@{\hspace{1.5cm}}c@{\hspace{1.5cm}}c}
		\alpha \leq_A \omega_A &
		\alpha \Inter \beta \leq_A \alpha &
		\alpha \Inter \beta \leq_A \beta &
		\prooftree
		\alpha \leq_A \alpha' \quad \beta \leq_A \beta'
		\justifies
		\alpha \Inter \beta \leq_A \alpha' \Inter \beta'
		\endprooftree
	\end{array}
	\]
\end{definition}
 
From the operational semantics, and especially Proposition \ref{prop:big-small}, and from the denotational semantics in the previous section, it is clear that we have four kinds of entities to type:
the values $V$ and $\Aget{\ell}(s)$, the store terms $s$, the computations $M$, and the configurations $(M, s)$. Therefore, we define four sorts of types.

\begin{definition}\label{def:typetheories}
	We define four sorts of types by mutual induction as follows:
	
	\[\begin{array}{rlll@{\hspace{0.5cm}}lllll}
		\Lang_D ~ \ni & \delta & ::= & \delta \to \tau \mid \delta \Inter \delta' \mid \omegaD   &\text{Value Types}\\ [2mm]
		\Lang_{\Store} ~ \ni
		         & \sigma & ::= & \tuple{\ell : \delta} \mid \sigma \Inter \sigma' \mid \omegaS 
		              & \text{Store Types} \\ [2mm]
		\Lang_{C} ~ \ni & \kappa & ::= & \delta \times \sigma \mid \kappa \Inter \kappa' \mid \omegaC  &\text{Computation Types}\\ [2mm]  
		\Lang_{\GS D} ~ \ni & \tau & ::= & \sigma \to \kappa \mid \tau \Inter \tau' \mid \omegaSD  &\text{Configuration Types}
	\end{array}\]
\end{definition}

We assume that $\Inter$ and $\times$ take precedence over $\to$ and that $\to$ associates to the right so that
$\delta \to \tau \Inter \tau'$ reads as $\delta \to (\tau \Inter \tau')$ and
$ \delta' \to \sigma' \to \delta'' \times \sigma''$ reads as $\delta' \to (\sigma' \to (\delta'' \times \sigma''))$. 

Since (closed) values are abstractions, types in $\LangD$ are arrow types from value (as the calculus is call-by-value)
to computation types. To the arrows, we add atoms plus intersections and a universal type for values, as we do for all type sorts.

Store types $\LangS$ associate value types to locations. 
As observed after Proposition \ref{prop:big-small}, computations are functions from stores to results, hence they are typed by arrow types
$\sigma \to \kappa$ where $\kappa$ is the type of results. Results are pairs of values and store terms so that types in 
$\LangC$ have the form $\delta \times \sigma$ and the intersection thereof, plus a universal type $\omegaC$ which is the type of all results and
the undefined one, that is the ``result'' of a diverging reduction.

Similarly to the system in \cite{deLiguoroTreglia20}, we introduce a subtyping relation $\leq_A$ for each sort $\Lang_A$, which is in the style
of the system in \cite{BCD'83}. Subtyping is a form of implication which, if types are interpreted as the extension of predicates over terms, is subset inclusion.
Such a choice has advantages and disadvantages: with subtyping, we can assign to terms just arrow, $\tuple{\ell:\delta}$ 
(a sort of record type) or product types, disregarding the cases of intersection and of
the trivial types $\omega$ in the type inference rules in Figure \ref{fig:store-lookup-config-assignment}, 
that otherwise jeopardize their form. Moreover, intersection types with subtyping do correspond to basic open subsets
of domain theoretic models, providing a ``logical semantics'' to the calculus in the sense of \cite{Abramsky'91}.
A drawback is that the type system is not syntax-directed, making proofs more elaborate. This is partly remedied by means of
the Generation Lemma \ref{lem:genLemma}.

\begin{definition}[Subtyping]\label{def:type-preorders}
For each sort $A = D,S,C,\GS D$, we define the itt $\Th_A = (\Lang_A, \leq_A)$ as the least one such that:
\begin{enumerate}
\item \label{def:type-preorders-1} $\omega_{D} \leq_{D} \omega_D \to \omega_{\GS D}$
\item \label{def:type-preorders-2} $(\delta \to \tau) \Inter (\delta \to \tau') \leq_{D} \delta \to (\tau \Inter \tau')$
\item \label{def:type-preorders-3} $\tuple{\ell : \delta}\Inter \tuple{\ell : \delta'}\leq_{S} \tuple{\ell : \delta\Inter \delta'}$
\item \label{def:type-preorders-5} $(\delta \times \sigma) \Inter (\delta' \times \sigma') \leq_{C} (\delta \Inter \delta' )\times (\sigma \Inter \sigma')$
\item \label{def:type-preorders-6} $\omega_{\GS D} \leq_{\GS D} \omega_S \to \omega_{C}$
\item \label{def:type-preorders-7} $(\sigma \to \kappa) \Inter (\sigma \to \kappa') \leq_{\GS D} \sigma \to (\kappa \Inter \kappa')$
\end{enumerate}
and that are closed under the rules:
\[\begin{array}{c@{\hspace{1cm}}c}
				
		\prooftree
		\delta' \leq_{D} \delta \quad \tau \leq_{\GS D} \tau'
		\justifies
		\delta \to \tau \leq_{D} \delta' \to \tau'
		\endprooftree 
		&
		\prooftree
		\delta \leq_{D} \delta' \quad \sigma \leq_S \sigma'
		\justifies
		\delta \times \sigma \leq_{C} \delta' \times \sigma'
		\endprooftree

		\\[6mm]
		
		\prooftree
		\sigma' \leq_{S} \sigma \quad \kappa \leq_C \kappa'
		\justifies
		\sigma \to \kappa \leq_{\GS D} \sigma' \to \kappa'
		\endprooftree
		&
		\prooftree
		\delta\leq_{D}\delta' 
		\justifies
		\tuple{\ell: \delta} \leq_{S} \tuple{\ell: \delta'}
		\endprooftree
		
	\end{array}
\]
\end{definition}

We avoid subscripts in the $\omega$ and the $\leq$ whenever clear from the context.
It is easily seen that all axioms of the shape $\varphi \leq \varphi'$ imply their reverse $\varphi' \leq \varphi$; when this is the case we write
$\varphi = \varphi'$ and say that such types are {\em equivalent}.
We use the abbreviations $\bigwedge_{i\in I}\varphi_i \equiv \varphi_1 \Inter \cdots \Inter \varphi_n$ if $I = \Set{1,\ldots,n}$, 
and $\bigwedge_{i\in \emptyset}\varphi_i \equiv \omega$.

We comment on some equalities and inequalities derivable from Definition \ref{def:type-preorders}. Obviously, the intersection $\Inter$ is
idempotent, commutative and associative, and $\varphi \Inter \omega = \varphi$.

The arrow types in $\Th_D$ and $\Th_{\GS D}$ are treated in the same way as in \cite{BCD'83} for ordinary intersection types.
In particular, we have that $\delta \to \omegaSD = \omegaD$ for all $\delta \in \Lang_D$, which implies $\omegaD\to \omegaSD = \omegaD$; 
a similar remark applies to $\Th_{\GS D}$. 

Types $\tuple{\ell : \delta} \in \Lang_S$ are covariant in $\delta$ w.r.t. $\leq_D$, therefore by \ref{def:type-preorders}.\ref{def:type-preorders-3},
\[\tuple{\ell : \delta}\Inter \tuple{\ell : \delta'}\leq \tuple{\ell : \delta\Inter \delta'} \leq \tuple{\ell : \delta}\Inter \tuple{\ell : \delta'}\]
hence they are equivalent.
Notice that $\omega_S \not\leq \tuple{\ell : \omega_D}$ hence the inequality $\tuple{\ell : \omega_D} < \omega_S$, which holds in any itt, is strict. The reason
for distinguishing among these types will be apparent in the definition of the type system; see also the remark below.

Product types $\delta \times \sigma \in \Lang_C$ are covariant in both $\delta$ and $\sigma$, therefore 
\ref{def:type-preorders}.\ref{def:type-preorders-5} implies 
$(\delta \times \sigma) \Inter (\delta' \times \sigma') = (\delta \Inter \delta' )\times (\sigma \Inter \sigma')$.

\begin{remark}\label{rem:omegaSdiverso}
In comparison to \cite{deLiguoroT21a}, the key difference is that in the cited paper there was no distinction among 
$\omegaS$ and $ \tuple{\ell : \omegaD}$ since they were equated in the theory. 
The present change essentially amounts to dropping such an equation. This reflects at the level of types
that the semantics of a store $s$ such as $\ell \not \in \dom{s}$ must be different than the semantics
of some $s'$ such that $\lkp{\ell}{s'} = V$ even if the meaning of $V$ is the bottom of $\DinftyS$,
for example when $V \equiv \lambda x. \Omega_c$; in fact, in such a case we have $\ell \in \dom{s'}$. This is further clarified by the subsequent definition and lemma, where the concept of the domain of a store is extended to store types.
\end{remark}

\medskip
The following definition will be useful in technical development, and more precisely in the side condition of rule $\setR$ of the typing system.

\begin{definition}\label{def:dom-sigma}
	For any $\sigma \in \Lang_S$ define $\dom{\sigma} \subseteq \Label$ by: 
	\[\begin{array}{rll}
		\dom{\tuple{\ell : \delta}} & = & 
			\Set{\ell} \\
		\dom{\sigma \Inter \sigma'} & = & \dom{\sigma} \cup \dom{\sigma'} \\ [1mm]
		\dom{\omegaS} & = & \emptyset \\
	\end{array}\]
\end{definition}

\begin{lemma}\label{lem:sigma-dom}
For all $\sigma,\sigma' \in \LangS$
\[\sigma \leqS \sigma' \Then \dom{\sigma} \supseteq \dom{\sigma'}\]
\end{lemma}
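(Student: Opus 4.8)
The plan is to argue by induction on the derivation of $\sigma \leqS \sigma'$, that is, on the witness that $(\sigma,\sigma')$ lies in the least preorder over $\LangS$ closed under the generic itt axioms of \Cref{def:theories} and under the axioms and rules of \Cref{def:type-preorders}. First I would isolate which generators can actually produce a judgement $\sigma \leqS \sigma'$ with both sides in $\LangS$: only the itt axioms $\sigma \leqS \omegaS$, $\sigma_1 \Inter \sigma_2 \leqS \sigma_1$, $\sigma_1 \Inter \sigma_2 \leqS \sigma_2$; the $\Inter$-congruence rule; reflexivity and transitivity of the preorder; the store axiom \ref{def:type-preorders}.\ref{def:type-preorders-3}, i.e.\ $\tuple{\ell:\delta}\Inter\tuple{\ell:\delta'}\leqS\tuple{\ell:\delta\Inter\delta'}$; and the covariance rule deriving $\tuple{\ell:\delta}\leqS\tuple{\ell:\delta'}$ from $\delta\leqD\delta'$. (The remaining axioms and rules of \Cref{def:type-preorders} concern the other three sorts, which are syntactically disjoint from $\LangS$, so they never intervene.)

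The base cases reduce to direct computations with \Cref{def:dom-sigma}: $\dom{\omegaS}=\emptyset$ is contained in every $\dom{\sigma}$; $\dom{\sigma_i}\subseteq\dom{\sigma_1}\cup\dom{\sigma_2}=\dom{\sigma_1\Inter\sigma_2}$; axiom \ref{def:type-preorders}.\ref{def:type-preorders-3} has both sides of domain $\Set{\ell}$; the covariance rule leaves the location $\ell$ untouched, so again both sides have domain $\Set{\ell}$; and reflexivity is trivial. For the inductive steps: transitivity composes the two inclusions supplied by the induction hypotheses, and $\Inter$-congruence takes their union, using that $\dom{\cdot}$ sends $\Inter$ to $\cup$. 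This closes the induction.

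There is no real obstacle here; the statement is essentially a consistency check on the design of $\Th_S$. The only point worth flagging is that it hinges precisely on the decision, discussed in \Cref{rem:omegaSdiverso}, \emph{not} to equate $\omegaS$ with $\tuple{\ell:\omegaD}$: the strict inequality $\tuple{\ell:\omegaD}<\omegaS$ is compatible with the lemma since $\Set{\ell}\supseteq\emptyset$, whereas the reverse inequality $\omegaS\leqS\tuple{\ell:\omegaD}$ — which is \emph{not} derivable in the present theory — would immediately falsify it. Thus the case analysis above really only establishes that no generator of $\leqS$ ever places a store type below one of strictly larger domain, which then propagates through transitivity and $\Inter$-congruence.
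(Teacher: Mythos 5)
Your proof is correct and follows exactly the route the paper takes: the paper's entire proof reads ``By induction over the definition of $\leqS$,'' and your case analysis over the generators of the preorder, checked against \Cref{def:dom-sigma}, is precisely the expansion of that one-liner. Nothing to add.
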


\begin{proof} By induction over the definition of $\leqS$.
\end{proof}

\begin{definition}[Type Assignment System]\label{def:type-system}
A {\em typing context} is a finite set $\Gamma = \Set{x_1 : \delta_1, \ \ldots , \ x_n : \delta_n}$ with pairwise distinct $x_i$'s and with $\delta_i \in \LangD$ for all $i = 1, \ldots, n$; with $\Gamma$ as before, we set
$\dom{\Gamma} = \Set{x_1, \ldots , x_n}$; finally, by $\Gamma, \, x : \delta$ we mean $\Gamma \cup \Set{x : \delta}$ for $x \not\in \dom{\Gamma}$.
Type {\em judgments} are of either forms:
\[ \Gamma \der V : \delta \qquad \Gamma \der M : \tau \qquad \Gamma \der s : \sigma \qquad \Gamma \der (M, s) : \kappa \]
The rules of the type assignment system are listed in Figure \ref{fig:store-lookup-config-assignment}.
\end{definition}

\myparagraph{Illustrating the type system w.r.t. the operational semantics}
Here we turn to a more concrete
understanding of rules in Figure \ref{fig:store-lookup-config-assignment}, looking at the operational semantics. The discussion is informal and includes 
some examples; the denotational interpretation and the technical analysis of the typing system w.r.t. the operational semantics are deferred to the next sections.

\medskip
Types are interpreted as predicates of terms expressing some property of them
and can be understood set theoretically: $\varphi \Inter \psi$ is the conjunction of $\varphi$ and $\psi$,
namely their intersection. If $\varphi \leq \psi$ then $\varphi$ implies $\psi$ and hence it is a subset of $\psi$. 
Finally {\em universal types} $\omega_A$ are interpreted as the union of all sets $\varphi \in \Lang_A$. 

The intended meaning of a typing judgment $\Gamma \der Q:\varphi$ is that $Q$ satisfies
the property $\varphi$, namely $Q \in \varphi$, under the hypothesis that $x$ satisfies $\Gamma(x)$ for all $x\in \dom{\Gamma}$. 
Types $\delta \in \LangD$ are properties of terms in $\ValTerm$;
since term variables range over values and are values themselves, in a typing context $\Gamma$ 
we only have typing
judgments of the form $x:\delta$ with $\delta \in \LangD$.

\begin{figure*}[htp]
{\bf Rules for $\omega$, intersection, and subtyping}

\bigskip
\[\begin{array}{c@{\hspace{1cm}}c@{\hspace{1cm}}c@{\hspace{1cm}}c}
&
\prooftree
	\vspace{3mm}
\justifies
	\Gamma \der Q : \omega
\using \omegaR
\endprooftree
&
\prooftree
	\Gamma \der Q : \varphi
	\quad
	\Gamma \der Q : \varphi'
\justifies
	\Gamma \der Q : \varphi \Inter \varphi'
\using \interR
\endprooftree
&
\prooftree
	\Gamma \der Q : \varphi
	\quad
	\varphi \leq \varphi'
\justifies
	\Gamma \der Q : \varphi'
\using \leqR
\endprooftree
\end{array}\]

\bigskip\bigskip

{\bf Rules for $\ValTerm$ and $\ComTerm$ terms}


\[\begin{array}{c@{\hspace{1cm}}c}
\prooftree
	\vspace{3mm}
\justifies
	\Gamma, x:\delta \der x : \delta
\using \varR
\endprooftree
&
\prooftree
	\Gamma, \, x:\delta \der M : \tau
\justifies
	\Gamma \der \lambda x.M : \delta \to \tau
\using \lambdaR
\endprooftree
\\ [6mm]
\prooftree
	\Gamma \der V:\delta
\justifies
	\Gamma \der \unit{V}: \sigma \to \delta \times \sigma
\using \unitR
\endprooftree
&
\prooftree
	\Gamma \der M : \sigma \to \delta' \times \sigma'
	\quad
	\Gamma \der V : \delta' \to \sigma' \to \delta'' \times \sigma''
\justifies
	\Gamma \der M \Bind V : \sigma \to \delta'' \times \sigma''
\using \bindR
\endprooftree
\end{array}\]

\bigskip
\[\begin{array}{c}
\prooftree
	\Gamma, \, x:\delta \der M : \sigma \to \kappa
\justifies
	\Gamma \der \get{\ell}{\lambda x.M} : (\tuple{\ell : \delta} \Inter \sigma) \to \kappa
\using \getR
\endprooftree
\\ [8mm]
\prooftree
	\Gamma \der V : \delta
	\quad
	\Gamma \der M :  (\tuple{\ell : \delta} \Inter \sigma)  \to \kappa
	\quad
	\ell \not\in \dom{\sigma} 
\justifies
	\Gamma \der \set{\ell}{V}{M} : \sigma \to \kappa
\using \setR
\endprooftree
\end{array}\]

\bigskip\bigskip

{\bf Rules for $\StoreSort$, $\LkpSort$ terms and configurations}

\bigskip
\[\begin{array}{c@{\hspace{1cm}}c@{\hspace{1cm}}c@{\hspace{1cm}}c}
&
\prooftree
	\Gamma \der V : \delta
\justifies
	\Gamma \der \Aset{\ell}(V, s) : \tuple{ \ell : \delta }
\using \updRa
\endprooftree
&
\prooftree
	\Gamma \der s : \tuple{ \ell' : \delta }
	\quad \ell \neq \ell'
\justifies
	\Gamma \der \Aset{\ell}(V, s) : \tuple{ \ell' : \delta }
\using \updRb
\endprooftree
	\\ [6mm]
&
\prooftree
	\Gamma \der s : \tuple{ \ell : \delta }
\justifies
	\Gamma \der \Aget{\ell}(s) : \delta
\using \lkpR
\endprooftree
&

\prooftree
	\Gamma \der M : \sigma \to \kappa
	\quad
	\Gamma \der s : \sigma
\justifies
	\Gamma \der (M, s) : \kappa
\using \confR
\endprooftree
\end{array}\]

\bigskip

\caption{Type assignment system for $\lamImp$}\label{fig:store-lookup-config-assignment}
\end{figure*}

\paragraph{Rules for $\omega$, intersection and subtyping}
Rules for $\omega_A$, intersection and subtyping are the same as in \cite{BCD'83}, instantiated
to the four kinds of a judgment of the present system. Universal types $\omega_A$
do not carry any information about their subjects, but for their sort, hence they 
are usually called {\em trivial types}  and
are assigned to any term of the appropriate kind.

As explained above, intersection and subtyping are understood set theoretically, which justifies
rules $\interR$ and $\leqR$. In particular, if $\varphi = \varphi'$ then $\Gamma \der Q:\varphi$ implies $\Gamma \der Q:\varphi'$.

\paragraph{Rules for $\StoreSort$ and $\LkpSort$ terms} 
Before illustrating the rules for values and computations, that are the core of the assignment system, let us discuss
the rules for typing store terms. 

Store types are in $\LangS$.
Since there is no rule to type $\emp$, the only possibility is to derive $\Gamma \der \emp : \omegaS$, for arbitrary 
$\Gamma$, up to type equivalence because of rule $\leqR$.
For the same reason  $\Gamma \der \Aget{\ell}(\emp) : \omegaD$ is the only possible typing of $\Aget{\ell}(\emp)$.

The intended meaning of typing $s$ by the type $\tuple{\ell: \delta}$ is that $\ell \in \dom{s}$ and $s$ associates some value $V$
of type $\delta$ to the location $\ell$, namely that $\Aget{\ell}(s) = V$ has type $\delta$.
Indeed, by Corollary \ref{cor:s-nf-decidable}, any $s \in \StoreSort$ can be equated to a term of the form:
\[\Aset{\ell_1}(V_1, \cdots \Aset{\ell_n}(V_n, \emp) \cdots)\]
with distinct locations $\ell_1, \ldots , \ell_n$. If $s \neq \emp$ then $n > 0$ and 
$\Gamma \der V_i:\delta_i$ for some $\delta_i$ and each $1 \leq i \leq n$,  so that
we can derive $\Gamma \der s : \tuple{ \ell_i : \delta_i }$ for all $i$ by either using rule
$\updRa$ or $\updRb$. Hence, by repeated use of $\interR$, the judgment
$\Gamma \der s: \bigwedge_{1 \leq i \leq n} \tuple{\ell_i : \delta_i}$ can be derived,
expressing that $s$ associates to each $\ell_i$ a value of type $\delta_i$. In summary,
the following rule is derivable:
\begin{align}\label{eq:fun-interp}
\prooftree
	\Gamma \der V_1 : \delta_1 \quad \cdots \quad \Gamma \der V_n : \delta_n
\justifies
	\Gamma \der \Aset{\ell_1}(V_1, \cdots \Aset{\ell_n}(V_n, \emp) \cdots) : \bigwedge_{1 \leq i \leq n} \tuple{\ell_i : \delta_i}
\endprooftree
\end{align}
Therefore, if $s$ has type $\bigwedge_{1 \leq i \leq n} \tuple{\ell_i : \delta_i}$
then $\Aget{\ell_i}(s) = V_i$ has type $\delta_i$, by rules $\leqR$ and $\lkpR$.

\paragraph{Rules for $\ValTerm$ and $\ComTerm$ terms}
Coming to the rules for values and computations, rule $\varR$ is obvious.
 About rule $\lambdaR$ we observe that
in our syntax abstractions have the shape $\lambda x.M$ where $M \in \ComTerm$, so that
they represent functions from values (the variable $x$ can be substituted only by terms in $\ValTerm)$ 
to computations. Also, abstractions
are values and by rule $\lambdaR$ are assigned 
functional types $\delta \to \tau \in \LangD$ if $\delta$ is the type of $x$ and
$\tau \in \LangSD$ is a type of $M$, as usual. 

Notice that even $\lam x. \Omega_c$, representing the everywhere undefined function, is a value; as we will see the only types of $\Omega_c$
are equivalent to $\omegaSD$, hence by rule $\lambdaR$ we can type $\lam x. \Omega_c$ by $\omegaD \to \omegaSD$ that is equivalent to
$\omegaD$ which indeed is the type of any value, not of computations.

Computations are assigned types $\tau \in \LangSD$; recall that these types are (intersections of) arrows $\sigma \to \kappa$, and
when $\kappa \neq \omegaC$ they are (intersections of) arrow types $\sigma \to \delta\times\sigma'$ 
for some store types $\sigma,\sigma'$ and value type $\delta$. 
As observed after Proposition \ref{prop:big-small}, we can see a computation $M$ as a (partial) function from stores to results.
If $M(s)$ is defined then there is a unique $(V,t)$ such that $M(s) = (V,t)$. 
Hence the intended meaning of the typing $M:\sigma \to \kappa$ is that $M$ is a function sending any store $s$ of type
$\sigma$ to a result of type $\kappa$, which is the content of rule $\confR$. As a consequence of our results, and especially of 
Theorem \ref{thr:subject-exp} and
Theorem \ref{thr:char-convergence},
we have that, if $M(s)$ is undefined, we cannot derive any arrow type $\sigma \to \kappa$ for $M$ with $\kappa \neq \omegaC$, therefore both $M$
and $(M,s)$ will be only typeable by rules $\omegaR$ and $\leqR$.

Rules $\unitR$ and $\bindR$ are instances of the
corresponding rules in pure computational $\lambda$-calculus in \cite{deLiguoroTreglia20}, specialized 
to the case of the state monad. The term $\unit{V}$ represents the trivial computation
returning $V$; as a function of stores it is such that $\unit{V}(s) = (V,s)$ behaving as the constant function w.r.t. the value $V$, 
and as the identity w.r.t. the store $s$; therefore  if $\delta$ is a type of $V$
any type $\sigma \to \delta \times \sigma$ can be assigned to $\unit{V}$ by rule $\unitR$.

In rule $\bindR$ the first premise says that applying $M$ to a store $s$ of type $\sigma$ yields a result $M(s) = (W,t)$ such that
$W:\delta'$ and $t:\sigma'$. We abbreviate this by saying  that $(W,t) = \unit{W}(t)$ has type $\delta' \times \sigma'$. 
In the second premise the value $V$ of type $\delta' \to \sigma' \to \delta'' \times \sigma''$ is the currified version
of a function of type $\delta' \times \sigma' \to \delta'' \times \sigma''$ (which is not a type in our formalism) sending
$(W,t)$ to some new result of type $\delta'' \times \sigma''$. By the operational semantics of the $\Bind$ operator,
we then conclude that for arbitrary $s$ of type $\sigma$, $(M \Bind V)(s)$  yields a result of type $\delta'' \times \sigma''$,
hence the typing $M \Bind V: \sigma \to \delta'' \times \sigma''$ in the conclusion of the rule.

According to rule $\getR$, the result of $(\get{\ell}{\lambda x.M})(s)$ has type
$\kappa$ if $s$ has type $\tuple{\ell:\delta} \Inter \sigma$, hence it has both type $\tuple{\ell:\delta}$ and $\sigma$. As we have seen above, if
$s$ has type $\tuple{\ell:\delta}$ then $\Aget{\ell}(s) = V$ for some value $V$ of type $\delta$.
Now, from the operational semantics we may argue that $(\get{\ell}{\lambda x.M})(s) = (M\Subst{V}{x})(s)$; 
hence if $s$ has type $\sigma$ then a sufficient condition to produce a result $M(s)$ of type $\kappa$
is to assume that $x$ in $M$ has type $\delta$,
as required in the premise of the rule.

Rule $\setR$ is the central and subtlest rule of the system. Reasoning as before, we argue that  \[(\set{\ell}{V}{M})(s) = M(\Aset{\ell}(V, s))\] and
we look for sufficient conditions for the result having type $\kappa$ whenever $s$ has type $\sigma$. Now if $V$ has type
$\delta$ as in the first premise, then $\Aset{\ell}(V, s)$ has type $\tuple{\ell:\delta}$; moreover,
it has also type $\sigma$, because $\ell \not\in \dom{\sigma}$, namely the side condition of the rule so that it suffices
that $M$ has type $\tuple{\ell:\delta} \Inter \sigma \to \kappa$ to conclude, which is the second premise.

\medskip
	The rules $\setR$ and $\getR$ are dual, in a sense. 
	Consider the derivation:
	\[
	\prooftree
	\Gamma \der V : \delta \qquad
	\prooftree
	\Gamma, \, x:\delta \der M : \sigma \to \kappa
	\justifies
		\Gamma \der \get{\ell}{\lambda x.M} : (\tuple{\ell : \delta} \Inter \sigma) \to \kappa
	\using \getR
	\endprooftree
	\qquad \ell \not\in \dom{\sigma}
	\justifies
	\Gamma \der \set{\ell}{V}{\get{\ell}{\lambda x.M}}: \sigma \to \kappa
	\using \setR
	\endprooftree
	\]
	Since $\tuple{\ell: \delta}$ is strictly less than $\omegaS$ for any $\delta$, it follows that 
	\[\omegaS \to \kappa < \tuple{\ell: \delta}\Inter \sigma \to \kappa\] 
	for any $\delta$ and $\sigma$. Indeed,  as a consequence of Theorem \ref{thr:char-convergence}, 
	we know that  $\Gamma \not\der \get{\ell}{\lambda x.M} : \omegaS\to \omegaD\times\omegaS$.
	
	\eject
	
	 In fact, $\get{\ell}{\lambda x.M}\not\Uparrow$ because $ (\get{\ell}{\lambda x.M}, \emp) $ is a blocked configuration. 
	But if the term $\get{\ell}{\lambda x.M}$ is placed in a suitable context, as in the conclusion of the derivation above, then the type $\omegaS\to \omegaD\times\omegaS$ is derivable for $ \set{\ell}{V}{\get{\ell}{\lambda x.M}} $ depending on the typing of $M$. 
	E.g. consider $M\equiv \Unit{x}$, one can take $\sigma = \omegaS$ and $\kappa=\omegaD \times\omegaS$; from the derivation above, we see that $M'= \set{\ell}{V}{\get{\ell}{\lambda x.\Unit{x}}}$ has type $\omegaS\to\omegaD \times\omegaS$. 
	On the other hand, $(M',\emp)$ converges:
	\[
	(M',\emp)\Red (\get{\ell}{\lam x.\Unit{x}}, \upd{\ell}{V}{\emp})\Red (\Unit{V}, \upd{\ell}{V}{\emp})
	\] 
	
\medskip
It remains to explain why the side condition $\ell \not\in \dom{\sigma}$ in rule $\setR$ is not only sufficient but also necessary.
By Lemma \ref{lem:sigma-dom} if $\sigma \leqS \tuple{\ell:\delta}$ then $\ell \in \dom{\tuple{\ell:\delta}} \subseteq \dom{\sigma}$; by contraposition
if $\ell \not\in \dom{\sigma}$ then 
there exists no $\delta \in \LangD$ such that $\sigma \leqS \tuple{\ell:\delta}$.
Now, if $\sigma = \bigwedge_i \tuple{\ell_i : \delta_i}$ then for all $i$ we have $\ell_i \neq \ell$. This means that from the side condition no information about
the value associated with $\ell$ by stores of type $\sigma$ is recorded in the type.
 
 Therefore the typing 
$M:(\tuple{\ell : \delta} \Inter \sigma)  \to \kappa$ in the second premise does not depend on any information about the value
of the store at $\ell$, but for the $\delta$ in $\tuple{\ell : \delta}$, which is the type of $V$ in the first premise.

Dropping the side condition we could have, say, 
$M:\tuple{\ell:\delta} \Inter \tuple{\ell:\delta'} \to \kappa$
where $\tuple{\ell:\delta} \Inter \tuple{\ell:\delta'} = \tuple{\ell:\delta \Inter \delta'}$, with both $\delta, \delta' \in \LangD$, and hence we could derive the type
$\tuple{\ell:\delta'} \to \kappa$ for $\set{\ell}{V}{M}$ in the conclusion. Now this says
that if $s$ has type $\tuple{\ell:\delta'}$ then $(\set{\ell}{V}{M})(s) = M(\Aset{\ell}(V,s))$ will have type
$\kappa$, notwithstanding that the value of $\Aset{\ell}(V,s)$ at $\ell$ is $V$, of which we only
know that it has type $\delta$, not $\delta'$.
We further illustrate the point in the next example.

\begin{example}\label{ex:typing-set-get}
A key issue in defining the system in Figure \ref{fig:store-lookup-config-assignment} is how to type terms like
$$M \equiv \set{\ell}{W}{ \set{\ell}{V}{ \get{\ell}{\lambda x.\unit{x} }  } }$$ from Example \ref{ex:overriding}, 
exhibiting the strong update property of the calculus. Let us abbreviate $N \equiv \set{\ell}{V}{ \get{\ell}{\lambda x.\unit{x} }  } $,
and suppose that $\der V: \delta$ and $\der W:\delta'$; we want to
derive $\der M: \sigma \to \kappa$ for suitable $\sigma$ and $\kappa$. 
Suppose that $\ell \not \in \dom{\sigma}$, then we have
\[
\prooftree
	\der V: \delta \quad 
		\prooftree
			\prooftree
				x : \delta \der x: \delta
			\justifies
				x : \delta \der \unit{x} : \sigma \to \delta \times \sigma
			\using \unitR
			\endprooftree
		\justifies
			\der  \get{\ell}{\lambda x.\unit{x} } : \tuple{\ell : \delta}  \Inter \sigma \to \delta \times \sigma	
		\using \getR
		\endprooftree
		\quad \ell \not \in \dom{\sigma}
\justifies
	\der N: \sigma \to \delta \times \sigma
\using \setR
\endprooftree
\]
Now taking $\kappa = \delta \times \sigma$ we have 
$\sigma \to \kappa \leq \tuple{\ell : \delta'} \Inter \sigma \to \kappa$, by
the fact that $\tuple{\ell : \delta'} \Inter \sigma \leq \sigma$ and the contravariance of $\to$ to its left.
From this and rule $\leqR$ we derive  
$\der N :\tuple{\ell : \delta'} \Inter \sigma \to \kappa$, hence we conclude
\[
\prooftree
\der W: \delta' \quad \der N: \tuple{\ell : \delta'} \Inter \sigma \to \kappa \quad \ell \not \in \dom{\sigma}
\justifies
\der M : \sigma \to \kappa
\using \setR
\endprooftree
\]

\medskip
Now, suppose that $\der s: \sigma$: hence $\der (M,s) : \kappa = \delta \times \sigma$.
From Example \ref{ex:overriding}, we know that
$\SmallStepStar{M}{s}{\unit{V}}{\Aset{\ell}(V, \Aset{\ell}(W, s) )}$; since we expect that types
are preserved by reduction (see Theorem \ref{thr:subject-red}), we must be able to type
$\der \unit{V} : \sigma \to \delta \times \sigma$, which is the conclusion of $\unitR$ from the hypothesis that $\der V: \delta$.

Finally, let us assume that $\sigma \neq \omegaS$ (otherwise the thesis is trivial). Then $\sigma = \bigwedge_{i}\tuple{\ell_i:\delta_i}$ so that
$\ell \not \in \dom{\sigma}$ implies that $\ell \neq \ell_i$ for all $i$. Therefore $\der s :  \tuple{\ell_i:\delta_i}$ for all $i$ by $\leqR$ and
\[
\prooftree
	\prooftree
		\der s :  \tuple{\ell_i:\delta_i}
	\justifies
		\der \Aset{\ell}(W, s) : \tuple{\ell_i:\delta_i}
	\using \updRb
	\endprooftree
\justifies 
	\der \Aset{\ell}(V, \Aset{\ell}(W, s) ) : \tuple{\ell_i:\delta_i}
\using \updRb
\endprooftree
\]
Now by repeated applications of $\interR$ we conclude 
$\der \Aset{\ell}(V, \Aset{\ell}(W, s) ) : \sigma$, as desired.

On passing, we observe that $\Aset{\ell}(V, \Aset{\ell}(W, s) ) = \upd{\ell}{V}{s}$ in the algebra of store terms, and that the latter has the same types as the former.
\end{example}

\section{Type interpretation}\label{sec:type-interpretation}

To further understand the type assignment system in the previous section, we provide the interpretation of types and typing judgments
in any $\lamImp$-model, and prove that the system is sound w.r.t. such interpretation. The system is also complete, as it is shown 
in \cite{deLiguoroT23} via
a filter-model construction.

Recall from Theorem \ref{thm:domain-equation} that $\DinftyS$ is actually a solution of the equation
$D \simeq [D \to \GS \,D]$, that can be unfolded in the system:
\[
\left\{
\begin{array}{rll}
D & = & [D \to \GS \,D] \\
S & = & (D_\bot)^\Label \\
C & = & (D \times S)_\bot \\
\GS\, D & = & [S \to C]
\end{array}
\right.
\]
Such a domain is a particular case of a $\lamImp$-model; hereby we define the type interpretation
maps w.r.t. any such model.

\begin{definition}\label{def:type-interp}
Let $\Dcat = (D, \GS, \SemD{\cdot},\SemSD{\cdot})$ be a $\lamImp$-model, where
$D\cong [D\to \GS D]$ via $(\Phi,\Psi)$, and let the domains
$D,S,C, \GS\,D$ be as above; then we define the maps $\Sem{\cdot}^A: \Lang_A \to \wp A$, for
$A =  D,S,C, \GS\,D$ as follows:
\begin{enumerate}
\item $\Sem{\omega_A}^A = A$ and $\Sem{\varphi\Inter\psi}^A = \Sem{\varphi}^A \cap \Sem{\psi}^A$
\item $\Sem{\delta \to \kappa}^D = \Set{d \in D \mid \forall d' \in \Sem{\delta}^D. \; \Phi(d)(d') \in \Sem{\kappa}^C }$
\item $\Sem{\tuple{\ell : \delta}}^S = \Set{\varsigma \in S \mid \ell \in \dom{\varsigma} \And \varsigma(\ell) \in \Sem{\delta}^D}$
\item $\Sem{\delta \times \sigma}^C = \Sem{\delta}^D \times \Sem{\sigma}^S$
\item $\SemSD{\sigma \to \kappa} = \Set{h \in \GS\,D \mid \forall \varsigma \in \Sem{\sigma}^S.\; h(\varsigma) \in \Sem{\kappa}^C}$
\end{enumerate}
\end{definition}

\begin{lemma}\label{lem:type-interpretation-leq}
For any $\varphi, \psi \in \Lang_A$
\[\varphi \leq_A \psi \Then \Sem{\varphi}^A \subseteq \Sem{\psi}^A\]
\end{lemma}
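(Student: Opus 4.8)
The plan is to prove the inclusion by induction on the derivation of $\varphi \leq_A \psi$ in the itt $\Th_A$, i.e. on the axioms and rules of \Cref{def:theories} and \Cref{def:type-preorders}, unfolding at each step the matching clause of \Cref{def:type-interp}. Reflexivity of $\leq_A$ is trivial and transitivity follows by composing the two inclusions given by the induction hypothesis. The generic semilattice axioms $\alpha \leq_A \omega_A$, $\alpha \Inter \beta \leq_A \alpha$, $\alpha \Inter \beta \leq_A \beta$ of \Cref{def:theories}, together with the $\Inter$-congruence rule, are immediate once we recall that clause~1 of \Cref{def:type-interp} sends $\omega_A$ to the whole carrier $A$ and $\Inter$ to set-theoretic intersection: indeed $\Sem{\varphi \Inter \psi}^A = \Sem{\varphi}^A \cap \Sem{\psi}^A$ is contained in each of $\Sem{\varphi}^A$, $\Sem{\psi}^A$, and $\Sem{\cdot}^A$ is monotone in each side of an $\Inter$.

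It then remains to check the seven axioms of \Cref{def:type-preorders} and the four structural subtyping rules that close it. The three axioms mentioning $\omega$ (items \ref{def:type-preorders-1}, \ref{def:type-preorders-4}, \ref{def:type-preorders-6} of \Cref{def:type-preorders}) hold because in the model $\DinftyS$, and in any $\lamImp$-model, every $d \in D$ is via $\Phi$ a \emph{total} function $D \to \GS D$, every $h \in \GS D$ a total function $S \to C$, and $C$ already contains every pair; thus, e.g., $\SemSD{\omega_S \to \omega_C} = \Set{h \in \GS D \mid \forall \varsigma \in S.\ h(\varsigma) \in C} = \GS D = \SemSD{\omega_{\GS D}}$, and symmetrically at the sorts $D$ and $C$. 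The four ``distributivity'' axioms (items \ref{def:type-preorders-2}, \ref{def:type-preorders-3}, \ref{def:type-preorders-5}, \ref{def:type-preorders-7} of \Cref{def:type-preorders}) reduce, after unfolding \Cref{def:type-interp}, to the elementary set-theoretic facts $\Set{x \mid \forall y.\ P(y)\wedge Q(y)} = \Set{x \mid \forall y.\ P(y)} \cap \Set{x \mid \forall y.\ Q(y)}$ for the two arrow cases and the record case, and $(X \times Y) \cap (X' \times Y') = (X \cap X') \times (Y \cap Y')$ for the product case; in each of them the two sides are actually equal, matching the observation that these axioms yield type equivalences.

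For the four structural rules I would argue by the usual variance, invoking the induction hypothesis at the appropriate sort. For the arrow rule of $\Th_D$, with premises $\delta' \leq_D \delta$ and $\tau \leq_{\GS D} \tau'$: by induction hypothesis $\Sem{\delta'}^D \subseteq \Sem{\delta}^D$ and $\Sem{\tau}^{\GS D} \subseteq \Sem{\tau'}^{\GS D}$, so any $d \in \Sem{\delta \to \tau}^D$ has $\Phi(d)(d'') \in \Sem{\tau}^{\GS D} \subseteq \Sem{\tau'}^{\GS D}$ for every $d'' \in \Sem{\delta'}^D \subseteq \Sem{\delta}^D$, whence $d \in \Sem{\delta' \to \tau'}^D$; the arrow rule of $\Th_{\GS D}$ is verbatim the same with $S, C$ in place of $D, \GS D$, and the covariant rules for $\times$ and for $\tuple{\ell : \cdot}$ are plain monotonicity. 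Since the maps $\Sem{\cdot}^A$ are defined by mutual recursion over the four sorts $D, S, C, \GS D$, the only point needing care is to invoke the induction hypothesis at the correct sort in each structural case — always possible, as the premises of every subtyping rule involve exactly the sorts occurring in its conclusion. I expect this bookkeeping across the four type languages, plus getting the $\omega$- and distributivity-axioms straight, to be the only mildly delicate aspect; there is no real obstacle, the statement amounting to a soundness check of each generating rule of $\leq_A$.
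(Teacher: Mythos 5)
Your overall strategy is exactly the paper's: the published proof of this lemma is the single line ``by induction over the definition of the pre-orders $\leq_A$'', and your case analysis fills in precisely that induction. The semilattice axioms, the three distributivity axioms over $\to$ and $\tuple{\ell:\cdot}$, the product axiom \ref{def:type-preorders-5}, and the four variance rules are all handled correctly, and your treatment of the two arrow-$\omega$ axioms via totality of $\Phi(d)$ and of $h\in\GS D$ is the right argument.

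There is, however, one step that fails as written: axiom \ref{def:type-preorders-4}, $\omegaC \leq_C \omegaD\times\omegaS$. Your justification --- ``$C$ already contains every pair'' --- establishes $\Sem{\omegaD\times\omegaS}^C\subseteq\Sem{\omegaC}^C$, which is the trivial direction already given by $\alpha\leq\omega$. The direction the axiom asserts is $\Sem{\omegaC}^C\subseteq\Sem{\omegaD\times\omegaS}^C$, i.e.\ that every element of $C$ is a pair; but by \Cref{def:type-interp} one has $\Sem{\omegaC}^C = C = (D\times S)_\bot$ while $\Sem{\omegaD\times\omegaS}^C = D\times S$, so the adjoined bottom of the lifting --- precisely the point representing divergence --- lies in the former and not in the latter, and the claimed equality is false. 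The totality argument that rescues $\omegaD\leq\omegaD\to\omegaSD$ and $\omegaSD\leq\omegaS\to\omegaC$ has no analogue here, because lifting genuinely adds an element rather than restricting a function space. To be fair, you have hit a soft spot of the paper itself: its one-line proof does not address this case either, and \Cref{rem:type-interpretation} explicitly denies the corresponding identity $\sat{\omegaC}{}=\sat{\omegaD\times\omegaS}{}$ for saturated sets (indeed, if $\omegaC =_C \omegaD\times\omegaS$ were sound, $\omegaSD\leq\omegaS\to\omegaD\times\omegaS$ would follow and every term would receive the convergence type of \Cref{thr:char-convergence}). So the case cannot be closed as the definitions stand; it requires either dropping that axiom, making $\omegaD\times\omegaS<\omegaC$ strict in analogy with $\tuple{\ell:\omegaD}<\omegaS$, or adjusting the interpretation of product types to include $\bot$. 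Everything else in your proof goes through.
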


\begin{proof} By induction over the definition of the pre-orders $\leq_A$. 
\end{proof}

Before establishing the soundness theorem, we need to extend the term interpretation to store and lookup terms and to configurations.
In the next definition, we overload the notation for the interpretation maps $\Sem{\cdot}^A$. To avoid repetitions we fix a 
$\lamImp$-model $\Dcat$.

\begin{definition}\label{def:store-lkp-config-interpretation}
Define the mappings $\SemS{\cdot}: \StoreSort \to \Env \to S$, $\Sem{\cdot}^{D_\bot}: \LkpSort \to \Env \to D_\bot$
and $\SemC{\cdot}: (\ComTerm \times \StoreSort) \to \Env \to C$ by

\begin{enumerate}
\item $\SemS{\emp} e = \lambda \_. \bot$
\item $\SemS{\upd{\ell}{u}{s}} e = \SemS{s}[\ell \mapsto \Sem{u}^{D_\bot} e]$
\item $\Sem{V}^{D_\bot} e = \SemD{V} e$
\item $\Sem{\lkp{\ell}{s}}^{D_\bot} e = \SemS{s} e \, \ell$
\item $\SemC{(M,s)} e = (\SemSD{M} e)(\SemS{s} e)$
\end{enumerate}
\end{definition}

Given an environment $e \in \Env = \Var \to D$ to interpret term variables and a context $\Gamma$, we 
say that $\Gamma$ is consistent with $e$, written $e \models^\Dcat \Gamma$, 
if $e(x) \in \Sem{\Gamma(x)}^D$ for all $x \in \dom{\Gamma}$.

\begin{theorem}[Soundness of the Type System]\label{thm:soundness}\label{thr:soundness}
If $e \models^\Dcat \Gamma$ then
\begin{enumerate}
\item $\Gamma \der V : \delta \Then \SemD{V} e \in \Sem{\delta}^D$
\item $\Gamma \der M : \tau \Then \SemSD{M} e \in \SemSD{\tau}$
\item $\Gamma \der s : \sigma \Then \SemS{s} e \in \SemS{\sigma}$
\item $\Gamma \der (M, s) : \kappa \Then \Sem{(M,s)}^C e \in \Sem{\kappa}^C$
\end{enumerate}
\end{theorem}

\begin{proof} By induction over derivations, using Lemma \ref{lem:type-interpretation-leq} in case of $\leqR$ rule.
\end{proof}

\section{Type invariance}
\label{sec:imp-invariance}

A characteristic property of intersection types for ordinary $\lambda$-calculus is invariance both under reduction and expansion of the subject.
The analogous results are proved for the pure computational $\lambda$-calculus in \cite{deLiguoroTreglia20}. 
Here we extend such results to the present calculus w.r.t. the typing of algebraic operations.

A preliminary lemma states a fundamental property of subtyping of arrow types, both in $\Th_D$ and in $\Th_{\GS D}$:

\begin{lemma}\label{lem:continuity}
	Let $\tau \neq \omegaSD$ and $\kappa \neq \omegaC$, then:
	\begin{enumerate}
	\item $\bigwedge_{i\in I} (\delta_i \to \tau_i) \leqD \delta \to \tau \iff$ \\ 
		$\exists J \subseteq I. \; J \neq \emptyset \And 
		\delta \leqD \bigwedge_{j\in J} \delta_j \And \bigwedge_{j\in J} \tau_j \leqSD \tau$
	\item $\bigwedge_{i\in I} (\sigma_i \to \kappa_i) \leqSD \sigma \to \kappa \iff $ \\ $\exists J \subseteq I. \; J \neq \emptyset \And $
		$\sigma \leqS \bigwedge_{j\in J} \sigma_j \And \bigwedge_{j\in J} \kappa_j \leqC \kappa$
	\end{enumerate}
\end{lemma}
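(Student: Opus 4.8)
The plan is to prove both items by induction on the structure of the subtyping derivation, exploiting the specific shape of the axioms and rules in \Cref{def:type-preorders}. Since the two statements are completely symmetric — item (1) concerns $\leqD$ between families of arrows $\delta_i \to \tau_i$ and item (2) concerns $\leqSD$ between families of arrows $\sigma_i \to \kappa_i$, with the roles of the two type sorts swapped — I would prove them simultaneously, or prove (1) in detail and note that (2) follows by an entirely analogous argument (the asymmetry between $\LangD$ and $\LangSD$, namely the presence of the axioms \ref{def:type-preorders}.\ref{def:type-preorders-1} and \ref{def:type-preorders}.\ref{def:type-preorders-6} turning $\omega$ into an arrow, is harmless here because we have assumed $\tau \neq \omegaSD$ and $\kappa \neq \omegaC$).

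The $(\Leftarrow)$ direction of each item is the easy one: given $J$ with the stated properties, one applies the covariance/contravariance rule for arrows pointwise together with the distributivity axiom \ref{def:type-preorders}.\ref{def:type-preorders-2} (resp.\ \ref{def:type-preorders}.\ref{def:type-preorders-7}), plus the elimination rules $\alpha\Inter\beta \leq \alpha$ to pass from $\bigwedge_{i\in I}$ down to $\bigwedge_{j\in J}$; this is a routine chain of applications of the subtyping rules and I would not spell it out in full. The $(\Rightarrow)$ direction is where the real work is. First I would reduce to a \emph{normal form} for types in $\LangD$ (resp.\ $\LangSD$): every $\delta \neq \omegaD$ is equivalent to a finite intersection of arrow types $\bigwedge_{k} (\delta'_k \to \tau'_k)$, because intersection is commutative/associative/idempotent with unit $\omegaD$, and there are no other type constructors at sort $D$; analogously for $\LangSD$. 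With types in this shape, the statement becomes a comparison between two intersections of arrows, and I would prove by induction on the derivation of $\bigwedge_{i\in I}(\delta_i \to \tau_i) \leqD \bigwedge_k(\delta'_k \to \tau'_k)$ that for every $k$ there is a nonempty $J_k \subseteq I$ with $\delta'_k \leqD \bigwedge_{j\in J_k}\delta_j$ and $\bigwedge_{j\in J_k}\tau_j \leqSD \tau'_k$ — the special case where the right-hand side is a single arrow $\delta \to \tau$ is exactly item (1). The inductive cases are: reflexivity (take $J = \{i\}$ when the right side is $\delta_i \to \tau_i$); transitivity (compose the two witness sets, using the $(\Leftarrow)$ direction already proved and transitivity of each $\leq$); the $\Inter$-introduction and $\Inter$-elimination rules (union, resp.\ restriction, of witness families); the arrow covariance rule; and the distributivity axiom \ref{def:type-preorders}.\ref{def:type-preorders-2}, which is handled directly.

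The main obstacle I anticipate is the transitivity case combined with the need to keep the hypotheses $\tau \neq \omegaSD$ and $\kappa \neq \omegaC$ propagated correctly: an intermediate type in a transitivity step $\varphi \leq \psi \leq \chi$ could a priori be $\omegaD$ (equivalently $\omegaD \to \omegaSD$ by axiom \ref{def:type-preorders}.\ref{def:type-preorders-1}), and one must check that this does not break the argument — it does not, because if the final codomain is not $\omegaSD$ then $\psi$ cannot be (equivalent to) $\omegaD$, so the induction hypothesis still applies at $\psi$. Making this precise requires a small auxiliary observation, provable by an easy induction on $\leqD$: if $\delta \leqD \delta'$ and $\delta' \neq \omegaD$ (up to equivalence) then $\delta \neq \omegaD$ either, and symmetrically $\delta \to \tau = \omegaD$ iff $\tau = \omegaSD$. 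Once this bookkeeping is in place, the rest is a standard, if somewhat tedious, Barendregt–Coppo–Dezani-style inversion argument for intersection-type subtyping, and I would present it compactly, emphasising the distributivity and transitivity cases and leaving the others to the reader.
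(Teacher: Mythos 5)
Your plan is correct and follows essentially the same route as the paper, whose entire proof of this lemma is the one-line ``By induction over the definition of $\leq$''; your elaboration (normalising to intersections of arrows, then inducting over the derivation with the distributivity and transitivity cases as the crux) is the standard way to carry out exactly that induction. The auxiliary observation you flag --- that $\delta\to\tau$ is equivalent to $\omegaD$ iff $\tau$ is equivalent to $\omegaSD$, so intermediate types in a transitivity step cannot collapse to $\omegaD$ when the final codomain is non-trivial --- is indeed the one piece of bookkeeping needed to make the hypothesis $\tau\neq\omegaSD$ (resp.\ $\kappa\neq\omegaC$) do its job.
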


\begin{proof} By induction over the definition of $\leq$. \end{proof}

The next lemma is an extension of the analogous property of the
system in \cite{BCD'83}, also called Inversion Lemma in \cite[Section 14.1]{BarendregtDS2013}  because 
it is the backward reading of the typing rules.

\begin{lemma}[Generation lemma] \label{lem:genLemma}
Assume that $\delta \neq \omegaD$, $\sigma \neq \omegaS$, and $\tau \neq \omegaSD$, then:
\begin{enumerate}
\item \label{lem:genLemma-var} $\Gamma \der x: \delta \iff \Gamma(x) \leqD \delta$
\item \label{lem:genLemma-lambda}
		$\Gamma \der \lambda x.M: \delta  \iff$ 
		$\exists I, \delta_i, \tau_i.~  \forall i \in I.\; \Gamma ,x:\delta_i \der M:\tau_i \And$  $\bigwedge_{i\in I}\delta_i \to \tau_i \leqD \delta$
\item \label{lem:genLemma-unit}
		$\Gamma \der \unit{V} : \tau \iff$	\\ $ \exists I, \delta_i, \sigma_i.~  
		                          \forall i \in I.\;  \Gamma \der V:\delta_i \And $\\ $\bigwedge_{i\in I}\sigma_i \to \delta_i \times \sigma_i \leqSD \tau$
\item \label{lem:genLemma-bind}
		$\Gamma \der M\Bind V : \tau \iff$ \\
		$\exists I, \tau_i=\sigma_i \to \delta_i\times \sigma'_i, \delta''_i,\sigma''_i.~  
		                          \forall i \in I.\; \Gamma \der M:\sigma_i\to \delta''_i\times \sigma''_i \And$ \\ 
		$\Gamma\der V:\delta''_i\to\sigma''_i\to \delta_i\times \sigma'_i \And $ 
		$\bigwedge_{i\in I}\tau_i \leqSD \tau$
\item \label{lem:genLemma-get}
		$\Gamma \der \get{\ell}{\lam x.M} : \tau \iff$ \\
			$\exists I, \delta_i, \sigma_i, \kappa_i.~ \forall i\in I.\; 
			   \Gamma, x:\delta_i \der M: \sigma_i\to\kappa_i \And$ \\$ \bigwedge_{i\in I}(\tuple{\ell : \delta_i} \Inter \sigma_i \to \kappa_i ) \leqSD \tau$
\item \label{lem:genLemma-set}
		$\Gamma \der \set{\ell}{V}{M} : \tau \iff$ \\
		$\exists I, \delta_i,\sigma_i, \kappa_i.~ \forall i\in I.\; \Gamma \der M: \tuple{\ell : \delta_i} \Inter \sigma_i \to \kappa_i  \And $\\
		$\Gamma \der V:\delta_i \And \bigwedge_{i\in I}(\sigma_i \to \kappa_i)\leqSD \tau \And$ \\  $\ell\not\in \dom{ \bigwedge_{i\in I}\sigma_i}$
\item \label{lem:genLemma-7'} $\Gamma \der s:\sigma \iff $ \\ $\exists I, l_i, \delta_i .~ 
                                  \forall i\in I.\; \exists V_i.\; \Aget{\ell_i}{(s)}=V_i \And $ \\ 
                                  $ \Gamma \der V_i : \delta_i \And \sigma \leqS \bigwedge_{i\in I}\tuple{\ell_i : \delta_i}$
\item \label{lem:genLemma-lkp}$\Gamma\der\Aget{\ell}{(s)}:\delta \iff \exists I, \delta_i.~ 
                 \forall i\in I.\; \Gamma \der s:\tuple{\ell : \delta_i}\And \bigwedge_{i\in I} \delta_i \leqD \delta$
\item \label{lem:genLemma-upd}$\Gamma \der\Aset{\ell}(V, s):\sigma \iff $\\
		$\exists I, \delta_i, \ell_i, \delta.~ \forall i\in I.\; s=\bigwedge_{i\in I} \tuple{\ell_i : \delta_i} \mbox{ where }\ell_i\neq \ell \And$ \\
		$ \Gamma\der V:\delta \And \tuple{\ell : \delta} \Inter \bigwedge_{i\in I} \tuple{\ell_i : \delta_i}\leq \sigma$
\item \label{lem:genLemma-conf}$\Gamma\der (M,s):\kappa \iff $\\
		$\exists I, \sigma_i, \kappa_i.~ \forall i\in I.\; \Gamma \der M:\sigma_i \to \kappa_i \And$\\
		$\Gamma \der s: \sigma_i \And \bigwedge_{i\in I}\kappa_i\leq \kappa$		
\end{enumerate}
\end{lemma}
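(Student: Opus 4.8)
The plan is to establish the ten equivalences one at a time; in every clause the ``if'' direction is a short synthesis and the ``only if'' direction is an induction on the typing derivation. For the ``if'' direction the recipe is uniform: from the right-hand-side data, apply the introduction rule matching the subject ($\varR$, $\lambdaR$, $\unitR$, $\bindR$, $\getR$, $\setR$, $\updRa$/$\updRb$, $\lkpR$, or $\confR$) once for each $i\in I$, glue the resulting judgments with repeated uses of $\interR$ to get a typing with the intersection $\bigwedge_{i\in I}(\cdots)$, and close with $\leqR$ against the stated inequality. In the store/lookup clauses \ref{lem:genLemma-7'}--\ref{lem:genLemma-upd} this synthesis also uses the algebra of \Cref{def:store-terms-axioms} --- e.g.\ axiom \ref{def:store-terms-axioms}.\ref{def:store-terms-axioms-c} rewrites $s$ as $\Aset{\ell}(\Aget{\ell}(s),s)$ so that rule $\updRa$ becomes applicable --- together with the routine fact, which I would record as a preliminary, that $\der s = t$ implies $\Gamma \der s : \sigma \iff \Gamma \der t : \sigma$.

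For the ``only if'' direction, fix a clause, so that the subject $Q$ and the syntactic shape of the target type $\varphi$ are fixed, and induct on the derivation of $\Gamma \der Q : \varphi$. The three non-syntax-directed rules are dispatched once and for all. Rule $\omegaR$ produces only $\omega$, hence is excluded by the standing hypotheses $\delta \neq \omegaD$, $\sigma \neq \omegaS$, $\tau \neq \omegaSD$; in the configuration clause \ref{lem:genLemma-conf} both sides hold trivially when $\kappa = \omegaC$, so one may assume $\kappa \neq \omegaC$ as well. For $\interR$ with conclusion $\varphi_1 \Inter \varphi_2$: if one conjunct is equivalent to $\omega$, absorb it ($\varphi \Inter \omega = \varphi$) and invoke the inductive hypothesis on the other; otherwise invoke it on both premises and take $I$ to be the disjoint union of the two index sets, the required inequality following because $\bigwedge$ is monotone and $\alpha \Inter \beta \leq \alpha, \beta$. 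For $\leqR$ with premise $\Gamma \der Q : \varphi_0$ and $\varphi_0 \leq \varphi$: here $\varphi_0 \not\equiv \omega$ (else $\varphi \equiv \omega$, as $\omega$ is the top), so the inductive hypothesis applies to $\varphi_0$ and one composes the returned inequality with $\varphi_0 \leq \varphi$ by transitivity. In each clause the only remaining possibility is that the last rule is the introduction rule for $Q$ itself, which yields the conclusion at once with $I$ a singleton; this is exactly the point where the shapes in clauses \ref{lem:genLemma-bind}, \ref{lem:genLemma-get}, \ref{lem:genLemma-set} are read off the premises of $\bindR$, $\getR$, $\setR$, and where the $\dom$-side-conditions propagate using only \Cref{def:dom-sigma}.

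The clauses about store, lookup and update terms need a little extra. I would first prove, by an easy induction on derivations that uses \Cref{lem:sigma-dom} in the $\leqR$ case, that $\Gamma \der s : \sigma$ implies $\dom{\sigma} \subseteq \dom{s}$ (with $\dom{\sigma}$ as in \Cref{def:dom-sigma}); together with the remark that every $\sigma \not\equiv \omegaS$ is equivalent to a finite intersection $\bigwedge_{i\in I}\tuple{\ell_i:\delta_i}$, this settles clause \ref{lem:genLemma-7'}: from $\Gamma \der s : \sigma$ one gets $\Gamma \der s : \tuple{\ell_i:\delta_i}$ by $\leqR$, hence $\ell_i \in \dom{s}$ and, by rule $\lkpR$, $\Gamma \der \Aget{\ell_i}(s) : \delta_i$, while \Cref{lem:lkp(s)=V} supplies the value $V_i$ with $\der \Aget{\ell_i}(s) = V_i$ (and conversely, axiom \ref{def:store-terms-axioms}.\ref{def:store-terms-axioms-c} lets one re-derive $\Gamma \der s : \bigwedge_{i\in I}\tuple{\ell_i:\delta_i}$ and then $\leqR$ upwards to $\sigma$). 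Clause \ref{lem:genLemma-lkp} is the standard induction, its $\leqR$ case being unproblematic because the target sits on the right of $\leq$; clause \ref{lem:genLemma-upd} follows the general pattern for the base cases $\updRa$ and $\updRb$, merging intersections of the $\tuple{\ell:-}$ types via \ref{def:type-preorders}.\ref{def:type-preorders-3}.

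The main obstacle I expect is not conceptual but the sustained bookkeeping imposed by the universal types: at every $\interR$ and $\leqR$ step one must check whether an intersected or subtyped component has collapsed to $\omega$ and reroute the argument, and in the store clauses this interacts with the identification of store terms modulo \Cref{def:store-terms-axioms}. The subtlest point is pinning down clause \ref{lem:genLemma-7'} exactly so that both halves of the ``iff'' hold --- which hinges on choosing the $\tuple{\ell_i:\delta_i}$ so that their intersection is the \emph{lower} of the two types (the same orientation as in clause \ref{lem:genLemma-upd}), since $\leqR$ only raises types.
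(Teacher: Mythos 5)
Your proposal follows essentially the same route as the paper's proof: the $\Leftarrow$ directions by one application of the matching introduction rule per index followed by $\interR$ and $\leqR$, and the $\Rightarrow$ directions by induction on the derivation, with $\interR$ handled by taking the union of the two index sets and $\leqR$ by transitivity of the subtyping preorder --- which is exactly the paper's treatment of clauses \ref{lem:genLemma-bind}--\ref{lem:genLemma-set}, except that the paper additionally routes clauses \ref{lem:genLemma-get} and \ref{lem:genLemma-set} through \Cref{lem:continuity}, a detour your direct union of indices makes unnecessary for the statement as given. Your extra care is warranted at the two points where the paper is silent or too quick: the possibility that one conjunct in an $\interR$ step is equivalent to $\omega$ (which you absorb, whereas the paper asserts that $\tau \neq \omegaSD$ forces both $\tau_1,\tau_2 \neq \omegaSD$, which is not true in general), and the orientation of the inequality in clause \ref{lem:genLemma-7'}, where the printed $\sigma \leqS \bigwedge_{i\in I}\tuple{\ell_i:\delta_i}$ indeed does not support the $\Leftarrow$ direction and must be read with the intersection below $\sigma$, as in the neighbouring clauses.
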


\begin{proof} The implications $\Leftarrow$ are immediate. 
To see the implications $\Then$ we reason by induction over the derivations, by distinguishing the cases
of the last rule. Parts \ref{lem:genLemma-var} and \ref{lem:genLemma-lambda} are the same as for ordinary intersection types and $\lambda$-calculus; 
part \ref{lem:genLemma-unit} is immediate by the induction hypothesis, hence we treat part \ref{lem:genLemma-bind}, \ref{lem:genLemma-get}, and \ref{lem:genLemma-set}, only.
	
	\textbf{4.}	If the last rule in the derivation of $\Gamma \der M\Bind V: \tau$ is $\bindR$ just take $I$ as a singleton set. 
	If it is $\leqR$ then the thesis follows immediately by induction and the transitivity of $ \leqSD$.
	Finally, suppose that the derivation ends by
	\[
	\prooftree
	\Gamma \der M\Bind V : \tau_1
	\quad
	\Gamma \der M\Bind V : \tau_2
	\justifies
	\Gamma \der M\Bind V : \tau_1 \Inter \tau_2
	\using \interR
	\endprooftree
	\] 
	and $\tau \equiv  \tau_1 \Inter \tau_2$. Then by induction we have
	$\exists I, \tau_i=\sigma_i \to \delta_i\times \sigma'_i, \delta''_i,\sigma''_i.~  
	\forall i \in I.\; \Gamma \der M:\sigma_i\to \delta''_i\times \sigma''_i \And$ $\Gamma\der V:\delta''_i\to\sigma''_i\to \delta_i\times \sigma'_i \And 
	\bigwedge_{i\in I}\tau_i \leqSD \tau_1$
	and
	$\exists J, \tau_j=\sigma_j \to \delta_j\times \sigma'_j, \delta''_j,\sigma''_j.~  
	\forall j \in J.\; \Gamma \der M:\sigma_j\to \delta''_j\times \sigma''_j \And$  $\Gamma\der V:\delta''_j\to\sigma''_j\to \delta_j\times \sigma'_j \And 
	\bigwedge_{i\in I}\tau_j \leqSD \tau_2$.
	From this, the thesis immediately follows by observing that
	\[
	\bigwedge_{i\in I}\tau_i \leqSD \tau_1 \And \bigwedge_{j\in J}\tau_j \leqSD \tau_2 \Then 
	\bigwedge_{i\in I}\tau_i  \Inter \bigwedge_{j\in J}\tau_j \leqSD \tau_1 \Inter \tau_2.
	\]
	
	\textbf{5.} If the last rule in the derivation of 	
	$\Gamma \der \get{\ell}{\lam x.M} : \tau$ is $(get)$ just take $I$ as a singleton set, as before. Also, case subsumption case is a routine proof by i. h. 
	Let's suppose the derivation ends by 
	\[
	\prooftree
	\Gamma \der \get{\ell}{\lam x.M} : \tau_1
	\quad
	\Gamma \der \get{\ell}{\lam x.M} : \tau_2
	\justifies
	\Gamma \der \get{\ell}{\lam x.M} : \tau_1 \Inter \tau_2
	\using \interR
	\endprooftree
	\] 
	Then by induction, we have:\\
	$\exists I, \delta_i, \sigma_i, \kappa_i.~ \forall i\in I.\; \Gamma, x:\delta_i \der M: \sigma_i\to\kappa_i \And$ 
	$ \bigwedge_{i\in I}(\tuple{\ell : \delta_i} \Inter \sigma_i \to \kappa_i ) \leqSD \tau_1$\\
	and 
	$\exists J, \delta_j, \sigma_j, \kappa_j.~ \forall j\in J.\; \Gamma, x:\delta_j \der M: \sigma_j\to\kappa_j \And$ 
	$ \bigwedge_{j\in J}(\tuple{\ell : \delta_j} \Inter \sigma_j \to \kappa_j ) \leqSD \tau_2$
	Since $\tau\neq \omegaSD$ then also $\tau_1,\tau_2\neq \omegaSD$, hence w.l.o.g. we can suppose that for $i=1,2$ there exist $\bar{\sigma_i}$ and 
	$\bar{\kappa_i}$ such that $\tau_i$ has the shape $\bar{\sigma_i}\to \bar{\kappa_i}$.
	
	By Part (2) of Lemma \ref{lem:continuity} we know that $\exists \bar{I} \subseteq I. \; \bar{I} \neq \emptyset \And $
	$\bar{\sigma_1} \leqS \bigwedge_{i\in \bar{I}} \sigma_i \wedge \tuple{\ell : \delta_i} \And \bigwedge_{i\in \bar{I} } \kappa_i \leqC \bar{\kappa_1}$
	and 
	$\exists \bar{J} \subseteq J. \; \bar{J} \neq \emptyset \And $
	$\bar{\sigma_2} \leqS \bigwedge_{j\in \bar{J}} \sigma_j  \wedge \tuple{\ell : \delta_j} \And \bigwedge_{j\in \bar{J} } \kappa_j \leqC \bar{\kappa_2}$
	By this, the thesis follows by observing that 
	$	\tuple{\ell: \bigwedge_{i\in \bar{I}} \delta_i \wedge \bigwedge_{j\in \bar{J}} \delta_j} \wedge \bigwedge_{i\in \bar{I}} \sigma_i
	 \wedge \bigwedge_{j\in \bar{J}} \sigma_j 
	 \to \bigwedge_{i\in \bar{I}} \kappa_i  \wedge \bigwedge_{j\in \bar{J}} \kappa_j \leqSD \tau_1 \wedge \tau_2 = \tau	$
	 
	 \textbf{6.} As in the previous parts, let's discuss just the case in which the last rule in the derivation is $\interR$: 
	 \[
	 \prooftree
	 \Gamma \der \set{\ell}{V}{M} : \tau_1
	 \quad
	 \Gamma \der \set{\ell}{V}{M}: \tau_2
	 \justifies
	 \Gamma \der \set{\ell}{V}{M} : \tau_1 \Inter \tau_2
	 \using \interR
	 \endprooftree
	 \] 
	 then, by induction we have: 
	 
	 $\exists I, \delta_i,\sigma_i, \kappa_i.~ \forall i\in I.\; \Gamma \der M: \tuple{\ell : \delta_i} \Inter \sigma_i \to \kappa_i  \And $
	 $\Gamma \der V:\delta_i \And \bigwedge_{i\in I}(\sigma_i \to \kappa_i)\leqSD \tau_1 \And$   $\ell\not\in \dom{ \bigwedge_{i\in I}\sigma_i}$
	 
	 $\exists J, \delta_j,\sigma_j, \kappa_j.~ \forall j\in J.\; \Gamma \der M: \tuple{\ell : \delta_j} \Inter \sigma_j \to \kappa_j  \And $
	 $\Gamma \der V:\delta_j \And \bigwedge_{j\in J}(\sigma_j \to \kappa_j)\leqSD \tau_2 \And$   $\ell\not\in \dom{ \bigwedge_{j\in J}\sigma_j}$.
	 
	 The thesis by similar use of Part (2) of Lemma \ref{lem:continuity} as in the previous point and by the fact that by Definition \ref{def:dom-sigma} we have that $\ell\not\in  \dom{ \bigwedge_{i\in I}\sigma_i} \cup \dom{ \bigwedge_{j\in J}\sigma_j}$, hence $\ell\not\in  \dom{ \bigwedge_{i\in I}\sigma_i \wedge \bigwedge_{j\in J}\sigma_j}$.
	 
\end{proof}

\begin{lemma}[Substitution and expansion] \label{lem:SubstLemma} 
%
\begin{enumerate}
\item If $\,\Gamma, x:\delta \der M:\tau$ and $\Gamma \der V:\delta$ then $\Gamma \der M\Subst{V}{x}: \tau$.
\item If $\,\Gamma \der M\Subst{V}{x}: \tau$ 
	then there exists $\delta \in \Lang_D$ such that:
	\[\Gamma \der V :\delta \quad\mbox{and}\quad \Gamma, x:\delta \der M:\tau\]
\end{enumerate}
\end{lemma}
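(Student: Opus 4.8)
The plan is to prove both items by induction, in each case proving a statement that also covers value judgments $\Gamma, x:\delta \der W : \delta'$ alongside the computation judgments, since rule $\lambdaR$ and the operator rules $\unitR$, $\bindR$, $\getR$, $\setR$ recurse through subterms of both sorts. I will freely use three standard structural properties of the system, each an easy induction on derivations: \emph{weakening} (a variable fresh for the subject may be added to the context), \emph{strengthening} (a variable not free in the subject may be removed from the context), and \emph{narrowing} (if $\delta \leqD \delta'$ and $\Gamma, x:\delta' \der Q : \varphi$ then $\Gamma, x:\delta \der Q : \varphi$; the only non-trivial rule is $\varR$, which is repaired by one extra $\leqR$ step). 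As usual, bound variables are renamed so that $x$ differs from them and does not occur free in $V$, so that substitution commutes with every term constructor.

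\medskip
\textbf{Part (1)} is by induction on the derivation of $\Gamma, x:\delta \der M : \tau$ (strengthened to value judgments as above), so no inversion is needed. Rules $\omegaR$, $\interR$, $\leqR$ just propagate the claim to their premises and re-apply the rule. In the $\varR$ case the derivation is a single axiom instance: if the subject is $x$ then $x\Subst{V}{x} \equiv V$ and the derived type $\delta'$ satisfies $\delta \leqD \delta'$, so $\Gamma \der V : \delta'$ follows from $\Gamma \der V : \delta$ by $\leqR$; if the subject is some $y \neq x$ then $x:\delta$ plays no role and $\Gamma \der y : \delta'$ holds directly by $\varR$. For $\lambdaR$ and $\getR$ one applies the induction hypothesis to the premise after weakening $\Gamma \der V : \delta$ by the bound variable, and re-applies the rule; the side condition of $\setR$ concerns store types only and is untouched. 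The remaining cases $\unitR$, $\bindR$, $\setR$ push the induction hypothesis through the premises and re-apply the rule, using that substitution commutes with $\unit{\cdot}$, $\Bind$ and $\set{\ell}{\cdot}{\cdot}$.

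\medskip
\textbf{Part (2)} is by induction on the structure of $M$, again strengthened to an auxiliary claim for values: if $\Gamma \der W\Subst{V}{x} : \delta'$ then $\Gamma \der V : \delta$ and $\Gamma, x:\delta \der W : \delta'$ for some $\delta \in \LangD$. When the target type is the $\omega$ of its sort the claim is immediate with $\delta \defeq \omegaD$, using $\omegaR$ twice. Otherwise, in the auxiliary claim: if $W \equiv x$ then $W\Subst{V}{x} \equiv V$, so we take $\delta \defeq \delta'$; if $W \equiv y \neq x$, we take $\delta \defeq \omegaD$ and weaken. For every compound $M$, and for $W \equiv \lambda y.M'$, we apply \Cref{lem:genLemma} to the typing of $M\Subst{V}{x}$: it returns a finite family, indexed by some $I$, of typings of the immediate subterms of $M\Subst{V}{x}$ together with a $\leq$-bound on the target type. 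We apply the induction hypothesis to each of these subterm typings; this yields, for each $i \in I$, a value type $\delta^{(i)}$ with $\Gamma^{(i)} \der V : \delta^{(i)}$, where $\Gamma^{(i)}$ is $\Gamma$ possibly extended by the locally bound variable — which, not being free in $V$, can be strengthened away to get $\Gamma \der V : \delta^{(i)}$ — together with a typing of the corresponding subterm of $M$ with $x:\delta^{(i)}$ added. We set $\delta \defeq \bigwedge_{i\in I}\delta^{(i)}$ (so $\delta = \omegaD$ when $I = \emptyset$), obtain $\Gamma \der V : \delta$ by $\interR$, narrow every subterm judgment to the uniform hypothesis $x:\delta$, reassemble by the rule matching the head of $M$, and close with $\interR$ over $I$ followed by one $\leqR$ step using the bound from \Cref{lem:genLemma}. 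In the $\setR$ case the side condition $\ell \notin \dom{\bigwedge_{i\in I} \sigma_i}$ is returned verbatim by \Cref{lem:genLemma} and reused unchanged.

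\medskip
The substantive points are all in Part (2). First, $x$ may occur zero, once, or several times in $M$; this is absorbed uniformly by intersecting the finitely many witnesses $\delta^{(i)}$ produced by \Cref{lem:genLemma}, with the empty intersection understood as $\omega$, which covers the vacuous-occurrence case. Second — and this is the step needing the most care — the sub-derivations delivered by \Cref{lem:genLemma} assign possibly different types $\delta^{(i)}$ to $x$, so collapsing them under the single hypothesis $x:\delta$ requires context narrowing; combined with the interplay of $\alpha$-renaming and strengthening in the binder cases $\lambdaR$ and $\getR$, this is where the argument is more than bookkeeping. Everything else is a routine rule-by-rule check.
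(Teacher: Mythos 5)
Your proof is correct and follows essentially the same route as the paper, whose own proof of this lemma is a one-line appeal to induction over derivations together with \Cref{lem:continuity} and \Cref{lem:genLemma}; you have simply made explicit the bookkeeping (weakening, strengthening, narrowing, and the intersection of the finitely many witness types for the occurrences of $x$) that the paper leaves implicit. Nothing further is needed.
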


\begin{proof}
Both parts are proved by induction over derivations, using Lemma \ref{lem:continuity} and Lemma \ref{lem:genLemma}.
\end{proof}

We are now in place to establish the type invariance property w.r.t. reduction and expansion:

\begin{restatable}[Subject reduction]{theorem}{SubRed}\label{thr:subject-red}
	\[ \Gamma \der (M, s) : \kappa \And \SmallStep{M}{s}{N}{t} ~ \Then ~ \Gamma \der (N, t) : \kappa\]
\end{restatable}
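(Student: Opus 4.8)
The plan is to argue by induction on the derivation of $\SmallStep{M}{s}{N}{t}$, i.e.\ by case analysis on the last reduction rule, the only non-base case being the congruence rule $\BindR$. The workhorses throughout are the Generation Lemma (\Cref{lem:genLemma}), used to invert the typing of the redex and of the store involved; \Cref{lem:continuity}, used to split a subtyping of the form $\bigwedge_i(\varphi_i\to\psi_i)\leq\varphi\to\psi$ into a nonempty sub-family; and the Substitution Lemma (\Cref{lem:SubstLemma}.1) for the $\beta$-like steps. A convenient first move is to observe, via \Cref{lem:genLemma}\,(\ref{lem:genLemma-conf}) together with rules $\confR$, $\interR$, $\leqR$, that in each case it is enough to re-derive $\Gamma\der (N,t):\kappa'$ from $\Gamma\der M:\sigma'\to\kappa'$ and $\Gamma\der s:\sigma'$ for arbitrary $\sigma',\kappa'$ (the case $\kappa'=\omegaC$ being trivial by $\omegaR$), after which one recombines over the family given by Generation using $\interR$ and $\leqR$.

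For the base cases $\betaR$ and $\getredR$, where $t\equiv s$ and $N\equiv M'\Subst{V}{x}$, I would invert the typing of the redex with the relevant parts of \Cref{lem:genLemma} — parts (\ref{lem:genLemma-bind}), (\ref{lem:genLemma-unit}), (\ref{lem:genLemma-lambda}) for $\betaR$ and part (\ref{lem:genLemma-get}) for $\getredR$ — and, after aligning the intermediate store and value types with \Cref{lem:continuity}, extract a value type $\delta$ with $\Gamma\der V:\delta$ and $\Gamma,x:\delta\der M':\sigma''\to\kappa''$ for suitable $\sigma'\leqS\sigma''$, $\kappa''\leqC\kappa'$. In the $\getredR$ case the typing $\Gamma\der V:\delta$ comes from $\Gamma\der s:\sigma'\leqS\tuple{\ell:\delta}$ by rule $\lkpR$ together with $\der\Aget{\ell}(s)=V$ and \Cref{lem:store-unicity} (cf.\ \Cref{lem:genLemma}\,(\ref{lem:genLemma-7'})). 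Then \Cref{lem:SubstLemma}.1 gives $\Gamma\der M'\Subst{V}{x}:\sigma''\to\kappa''$, whence $\Gamma\der M'\Subst{V}{x}:\sigma'\to\kappa'$ by contravariance of $\to$ and $\leqR$, and rule $\confR$ with $\Gamma\der s:\sigma'$ concludes; this is in effect the syntactic counterpart of equation (\ref{prop:true-eq-i}) of \Cref{prop:true-eq}.

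For $\setredR$, with $N\equiv M'$ and $t\equiv\Aset{\ell}(V,s)$, \Cref{lem:genLemma}\,(\ref{lem:genLemma-set}) and \Cref{lem:continuity} give, for a nonempty index family, typings $\Gamma\der V:\delta_i$ and $\Gamma\der M':\tuple{\ell:\delta_i}\Inter\sigma_i\to\kappa_i$ with $\ell\not\in\dom{\sigma_i}$, $\sigma'\leqS\sigma_i$ and $\bigwedge_i\kappa_i\leqC\kappa'$. For each $i$, rule $\updRa$ yields $\Gamma\der\Aset{\ell}(V,s):\tuple{\ell:\delta_i}$, and a short auxiliary fact — if $\Gamma\der s:\sigma$ and $\ell\not\in\dom{\sigma}$ then $\Gamma\der\Aset{\ell}(V,s):\sigma$, proved by induction on $\sigma$ using $\updRb$ at $\tuple{\ell':\delta'}$, $\interR$ at $\Inter$, and $\omegaR$ at $\omegaS$ — yields $\Gamma\der\Aset{\ell}(V,s):\sigma_i$ from $\Gamma\der s:\sigma'\leqS\sigma_i$. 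Hence $\Gamma\der\Aset{\ell}(V,s):\tuple{\ell:\delta_i}\Inter\sigma_i$, so $\confR$ gives $\Gamma\der(M',\Aset{\ell}(V,s)):\kappa_i$, and intersecting over $i$ and subsuming concludes. This is exactly the point where the side condition $\ell\not\in\dom{\sigma}$ of rule $\setR$ (via \Cref{lem:sigma-dom}) is essential.

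The main obstacle is the inductive case $\BindR$: $M\equiv M'\Bind V$, $N\equiv N'\Bind V$ and $\SmallStep{M'}{s}{N'}{t}$ with $t$ possibly not identical to $s$. From \Cref{lem:genLemma}\,(\ref{lem:genLemma-bind}) and \Cref{lem:continuity} I would peel off, for a nonempty index family, typings $\Gamma\der M':\sigma_i\to\delta''_i\times\sigma''_i$ and $\Gamma\der V:\delta''_i\to\sigma''_i\to\delta_i\times\sigma'_i$ with $\sigma'\leqS\sigma_i$ and $\bigwedge_i(\delta_i\times\sigma'_i)\leqC\kappa'$. For each $i$, rule $\confR$ (using $\Gamma\der s:\sigma'\leqS\sigma_i$) gives $\Gamma\der(M',s):\delta''_i\times\sigma''_i$; the induction hypothesis applied to the strict sub-reduction $\SmallStep{M'}{s}{N'}{t}$ gives $\Gamma\der(N',t):\delta''_i\times\sigma''_i$, which I invert once more with \Cref{lem:genLemma}\,(\ref{lem:genLemma-conf}) — merging the resulting sub-family by means of the inequality $\bigwedge_k(\rho_k\to\mu_k)\leqSD\bigwedge_k\rho_k\to\bigwedge_k\mu_k$ derivable from \Cref{def:type-preorders} — to obtain a store type $\rho_i$ with $\Gamma\der N':\rho_i\to\delta''_i\times\sigma''_i$ and $\Gamma\der t:\rho_i$. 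Rule $\bindR$ then gives $\Gamma\der N'\Bind V:\rho_i\to\delta_i\times\sigma'_i$, and $\confR$ with $\Gamma\der t:\rho_i$ followed by a last round of $\interR$/$\leqR$ against $\bigwedge_i(\delta_i\times\sigma'_i)\leqC\kappa'$ yields $\Gamma\der(N'\Bind V,t):\kappa'$. The delicacy is entirely in keeping the intersection/subtyping bookkeeping under control while threading the induction hypothesis through the change of store; the shape of the argument itself is the routine ``invert, substitute, re-assemble'' pattern.
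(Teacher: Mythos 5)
Your proposal is correct and follows essentially the same route as the paper's proof: induction on the reduction derivation, inverting typings with the Generation Lemma (\Cref{lem:genLemma}) and \Cref{lem:continuity}, using the Substitution Lemma for the $\betaR$ and $\getredR$ steps, and the $\updRa$/$\updRb$ rules plus the side condition $\ell \not\in \dom{\sigma}$ for $\setredR$. The only difference is one of detail: you spell out the congruence case $\BindR$ (which the paper dismisses as immediate) and the auxiliary fact that $\ell\not\in\dom{\sigma}$ lets $\Aset{\ell}(V,s)$ inherit the type $\sigma$ of $s$ (which the paper isolates later as \Cref{lem:restriction}), and both of these elaborations are sound.
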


\begin{proof} Let us assume that $\kappa \neq \omegaC$ since the thesis is trivial otherwise. The proof is
	by induction over the definition of $(M,s) \Red (N,t)$, using Lemma \ref{lem:genLemma}. We treat the interesting cases.
	From the hypothesis $\Gamma \der (M,s): \kappa$ and by Lemma \ref{lem:genLemma}.\ref{lem:genLemma-conf}
	we have that there exist
	a finite set $I$ and types $\sigma_i, \kappa_i$ such that for all $i \in I$:
	\begin{enumerate}[(a)]
		\item \label{beta_c-1} $\Gamma \der M: \sigma_i \to \kappa_i$;
		\item \label{beta_c-2} $\Gamma \der s: \sigma_i$ 
		\item \label{beta_c-3}$\bigwedge_{i\in I} \kappa_i \leqC \kappa$.
	\end{enumerate}
	\begin{description}
		\item Case $\SmallStep{ \unit{V} \Bind (\lambda x.M') } {s} {  M'\Subst{V}{x} } {s}$:
		
		By (\ref{beta_c-1}), and using Parts \ref{lem:genLemma-unit} and \ref{lem:genLemma-bind} of Lemma \ref{lem:genLemma}, 
		for all $i\in I$ there is $J_i, \delta_{ij}, \delta'_{ij}, \delta''_{ij}, \sigma''_{ij}$ such that for all $j\in J_i$:
		\begin{enumerate}[(a)]\addtocounter{enumi}{3}
			\item  \label{beta_c-4} 
			$\Gamma \der V: \delta'_{ij}$ with $\bigwedge_{j \in J_i} \delta'_{ij} \leqD \delta_{ij}$, where 
			$\Gamma \der \unit{V}: \sigma_{ij}\to \delta'_{ij}\times \sigma_{ij}$
			\item  \label{beta_c-5} 
			$\Gamma\der \lam x.M':\delta_{ij}\to \tau_{ij}$ with $\bigwedge_{j \in J_i} \tau_{ij} \leqC \sigma_i \to \kappa_i$, where 
			$\tau_{ij}=\sigma\to\delta''_{ij}\times\sigma''_{ij}$.
		\end{enumerate}
		By applying Lemma \ref{lem:genLemma}.\ref{lem:genLemma-unit} to (\ref{beta_c-5}), for all $i\in I$, $ j\in J_i$, 
		there is a finite set $K_{ij}$, $\delta_{ijk}, \delta''_{ijk}, \sigma_{ijk}$ such that for all $k \in K_{ij}$:
		\begin{enumerate}[(a)]\addtocounter{enumi}{5}
			\item  \label{beta_c-6} $\Gamma, x:\delta_{ijk}\der M':\sigma_{ijk}\to\delta''_{ijk}\times\sigma''_{ijk}$ with 
			$\bigwedge_{k \in K}(\delta_{ijk}\to\sigma_{ijk}\to\delta''_{ijk}\times\sigma''_{ijk})\leqSD \delta_{ij}\to\tau_{ij}$.  
		\end{enumerate} 
		Set $\sigma_{ijk}\to\delta''_{ijk}\times\sigma''_{ijk}=:\tau_{ijk}$.
		In virtue of Lemma \ref{lem:continuity}, we may assume w.l.o.g. that there exists a not empty set $\overline{K}\subseteq K_{ij}$ such that
		$\delta_
		{ij} \leqD \bigwedge_{k \in \overline{K}} \delta_{ijk}$ and $\bigwedge_{k \in \overline{K}} \tau_{ijk} \leqSD \tau_{ij}$.
		
		By (\ref{beta_c-5}) we have: $\delta'_{ij}\leqD \delta_{ij} \leqD \delta_{ijk}\Then \Gamma \der V:\delta_{ijk}$.\\
		By (\ref{beta_c-6}) we have: $\bigwedge_{i\in I}\bigwedge_{j \in J_i}\bigwedge_{k \in \overline{K}}\leqSD \tau_{ij}\leqSD \sigma_i\to\kappa_i$.
		
		In conclusion, by Substitution Lemma \ref{lem:SubstLemma}, 
		$\Gamma \der M'\Subst{V}{x}: \sigma_i\to\kappa_i$, hence by $\confR$ 	
		$\Gamma \der( M'\Subst{V}{x},s):\kappa_i$; now by repeated applications of rule $\interR$ and 
		$\leqR$, $\Gamma \der( M'\Subst{V}{x},s):\kappa$.
		
		\item Case
		$\SmallStep {\set{\ell}{V}{M}}{s}{M}{\Aset{\ell}(V, s)}$:
		
		By applying Part \ref{lem:genLemma-set} of Generation Lemma to (\ref{beta_c-1}):
		
		$\forall i\in I.\;\exists J_i, \delta_{ij}, \sigma_{ij}, \kappa_{ij}$ such that for all $j\in J_i$:
		\begin{enumerate}
			\item \label{set-1}
			$\Gamma\der M':\tuple{\ell:\delta_{ij}}\Inter \sigma_{ij}\to \kappa_{ij}$
			\item \label{set-2}
			$ \Gamma\der V:\delta_{ij} $
			\item \label{set-3}
			$ \bigwedge_{j \in J_i}(\sigma_{ij}\to\kappa_{ij})\leqSD \sigma_i\to\kappa_i$
			\item \label{set-4}
			$\ell\not\in \dom{\bigwedge_{j \in J_i}\sigma_{ij}}$
		\end{enumerate}
		By Parts \ref{set-3} and \ref{set-4} of Lemma \ref{lem:genLemma}  and Lemma \ref{lem:continuity}: there exists $ \overline{J}\in J_i$
		\[ \sigma_i\leqS \bigwedge_{j \in \overline{J}}\sigma_{ij} \And
		\bigwedge_{j \in \overline{J}}\kappa_{ij}\leqC \kappa_i \]
		Moreover, by $\updRb$ and $\interR$: $\Gamma \der \Aset{\ell}(V,s): \sigma_i \Inter \tuple{\ell : \delta_{ij}}$. 
		We conclude by $\confR$ and $\leqR$.		
		
		\item Case 
		$ \SmallStep { \get{\ell}{\lambda x.M} } {s} { M \Subst{V}{x} } {s} $ where $ \Aget{\ell} (s) = V $:
		
		By applying Lemma \ref{lem:genLemma}.\ref{lem:genLemma-get} to \ref{beta_c-1}, for all $i\in I$ there exist 
		$J_i, \delta_{ij}, \sigma_{ij}, \kappa_{ij}$ such that for all $j\in J_i$:
		\[ \Gamma, x:\delta_{ij} \der M':\sigma_{ij}\to \kappa_{ij} \And \bigwedge_{j \in J_i}(\tuple{\ell :\delta_{ij}}\Inter \sigma_{ij}\to \kappa_{ij})
		\leqSD\sigma_i\to\kappa_i \]
		By Lemma \ref{lem:continuity} there exists $ \overline{J}\in J_i$
		\[ \sigma_i\leqS \bigwedge_{j \in \overline{J}}\tuple{\ell :\delta_{ij}}\Inter\sigma_{ij} \And \bigwedge_{j \in \overline{J}}\kappa_{ij}\leqC \kappa_i \]
		By Lemma \ref{lem:genLemma}.\ref{lem:genLemma-7'} we know that there exist at least one $V$ such that $\Aget{\ell}(s)=V$
		 and $\Gamma\der V:\delta_{ij}$. We conclude by applying Substitution Lemma \ref{lem:SubstLemma} and routine arguments. 
		
	\end{description}
	All other cases are immediate by Lemma \ref{lem:genLemma}.
\end{proof}

In order to prove the Subject expansion, we have to establish some properties of stores, relating store terms with their types.

\begin{lemma}\label{lem:restriction}
	~\hfill
	\begin{enumerate}
		\item \label{lem:restriction-i} $\ell \not\in \dom{\sigma} \Then [\Gamma \der s : \sigma \; \iff \; \Gamma \der \Aset{\ell}(V,s) : \sigma]$
		\item \label{lem:restriction-ii} $\Gamma \der \Aset{\ell}(V,s) : \sigma \And \sigma \leqS \tuple{\ell : \delta} \neq \omegaS \Then \Gamma \der V: \delta$
	\end{enumerate}
\end{lemma}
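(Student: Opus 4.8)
The plan is to prove both parts by inverting the typing rules for $\Aset{\ell}(V,s)$ via the Generation Lemma (\Cref{lem:genLemma}), together with the store axioms of \Cref{def:store-terms-axioms}, the link between $\leqS$ and $\dom{\cdot}$ of \Cref{lem:sigma-dom}, and --- crucially --- a store-type counterpart of \Cref{lem:continuity}. The latter, which I would isolate first, states that $\bigwedge_{j\in J}\tuple{n_j:\beta_j}\leqS\tuple{m:\gamma}$ implies $J_m:=\Set{j\in J\mid n_j=m}\neq\emptyset$ and $\bigwedge_{j\in J_m}\beta_j\leqD\gamma$; it is proved by a routine induction on the derivation of $\leqS$, using that $\omegaS\not\leqS\tuple{m:\gamma}$ (a consequence of \Cref{rem:omegaSdiverso}) to close the transitivity case. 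A convenient corollary is that if $\ell\not\in\dom\sigma$ then $\tuple{\ell:\delta'}\Inter\tau\leqS\sigma$ already gives $\tau\leqS\sigma$.

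For part~(\ref{lem:restriction-i}), the case $\sigma\equiv\omegaS$ is immediate by rule $\omegaR$, so assume $\sigma\not\equiv\omegaS$; then $\sigma$ is equivalent to a non-empty intersection $\bigwedge_{k\in K}\tuple{m_k:\gamma_k}$, and $\dom\sigma=\Set{m_k\mid k\in K}$ (invariance of $\dom{\cdot}$ under equivalence being a one-line consequence of \Cref{lem:sigma-dom}), so $\ell\not\in\dom\sigma$ forces every $m_k\neq\ell$. For the ($\Rightarrow$) direction, $\Gamma\der s:\sigma$ gives $\Gamma\der s:\tuple{m_k:\gamma_k}$ for each $k$ by $\leqR$, whence $\Gamma\der\Aset{\ell}(V,s):\tuple{m_k:\gamma_k}$ by rule $\updRb$ (applicable since $m_k\neq\ell$), and then $\Gamma\der\Aset{\ell}(V,s):\sigma$ by $\interR$ and $\leqR$. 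For ($\Leftarrow$), Part~\ref{lem:genLemma-upd} of \Cref{lem:genLemma} applied to $\Gamma\der\Aset{\ell}(V,s):\sigma$ yields a finite $I$, locations $\ell_i\neq\ell$, types $\delta_i$ and a value type $\delta'$ with $\Gamma\der s:\bigwedge_{i\in I}\tuple{\ell_i:\delta_i}$ and $\tuple{\ell:\delta'}\Inter\bigwedge_{i\in I}\tuple{\ell_i:\delta_i}\leqS\sigma$; the corollary above (using $\ell\not\in\dom\sigma$) gives $\bigwedge_{i\in I}\tuple{\ell_i:\delta_i}\leqS\sigma$, and $\leqR$ concludes $\Gamma\der s:\sigma$.

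For part~(\ref{lem:restriction-ii}), first note $\sigma\not\equiv\omegaS$: otherwise $\omegaS\leqS\tuple{\ell:\delta}\leqS\tuple{\ell:\omegaD}$ would contradict \Cref{rem:omegaSdiverso}. By Part~\ref{lem:genLemma-upd} of \Cref{lem:genLemma} applied to $\Gamma\der\Aset{\ell}(V,s):\sigma$, there are a finite $I$, locations $\ell_i\neq\ell$, types $\delta_i$ and $\delta'$ with $\Gamma\der V:\delta'$ and $\tuple{\ell:\delta'}\Inter\bigwedge_{i\in I}\tuple{\ell_i:\delta_i}\leqS\sigma$; transitivity with $\sigma\leqS\tuple{\ell:\delta}$ gives $\tuple{\ell:\delta'}\Inter\bigwedge_{i\in I}\tuple{\ell_i:\delta_i}\leqS\tuple{\ell:\delta}$. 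Applying the store-type inversion lemma with $m=\ell$, and using that every $\ell_i\neq\ell$ so that the only contributing component is $\tuple{\ell:\delta'}$, we obtain $\delta'\leqD\delta$; then rule $\leqR$ on $\Gamma\der V:\delta'$ yields $\Gamma\der V:\delta$.

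The main obstacle is the store-type inversion lemma itself: it is the exact analogue of \Cref{lem:continuity} for the record types of $\LangS$ and is not among the results already in hand, so it must be stated and proved separately (the proof is straightforward but needs the strict inequality $\tuple{\ell:\delta}<\omegaS$ to rule out the trivial cases). Beyond that, the only delicate points are bookkeeping: that \Cref{lem:genLemma} presupposes $\sigma\neq\omegaS$ (handled by treating $\sigma\equiv\omegaS$ apart), the equivalence-invariance of $\dom{\cdot}$, and --- for part~(\ref{lem:restriction-i}) --- reading Part~\ref{lem:genLemma-upd} of the Generation Lemma in the form ``$\Gamma\der s:\bigwedge_{i}\tuple{\ell_i:\delta_i}$ and $\tuple{\ell:\delta'}\Inter\bigwedge_i\tuple{\ell_i:\delta_i}\leqS\sigma$''.
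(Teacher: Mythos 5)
Your proof is correct and follows essentially the same route as the paper's: case-split on $\sigma \equiv \omegaS$, decompose $\sigma$ into record components whose locations differ from $\ell$, use $\updRb$ and $\interR$ for the forward direction of (a), and invert the typing of $\Aset{\ell}(V,s)$ for the remaining claims. The record-subtyping inversion lemma you isolate is precisely the fact the paper uses tacitly in part (b) (``there exists exactly one $i'$ with $\tuple{\ell_{i'}:\delta_{i'}} \leqS \tuple{\ell:\delta}$''), and your detour through Part~\ref{lem:genLemma-upd} of \Cref{lem:genLemma} for the backward direction of (a) merely replaces the paper's direct induction on the derivation without changing the substance.
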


\begin{proof}
	\begin{description}
		\item (\ref{lem:restriction-i}) The if part is immediate by induction over the derivation of $\Gamma \der \Aset{\ell}(V,s) : \sigma$. For the only if part,
		when $\sigma = \omegaS$ the thesis follows by $\omegaR$. Otherwise let $\sigma = \bigwedge_{i\in I} \tuple{\ell_i:\delta_i}$. 
		Then $I \neq \emptyset$ and for all $i \in I$ we have $\delta_i \neq \omegaD$ and then $\ell_i \neq \ell$ because $\ell \not\in \dom{\sigma}$.
		Therefore, for all $i \in I$, $\Gamma \der s:  \tuple{\ell_i:\delta_i}$ by $\leqR$ and $\Gamma \der \Aset{\ell}(V,s) : \tuple{\ell_i:\delta_i}$ by 
		$\updRb$,
		and we conclude by $\interR$.
		
		\item (\ref{lem:restriction-ii}) By hypothesis $\sigma = \bigwedge_{i\in I} \tuple{\ell_i:\delta_i} \leqS \tuple{\ell: \delta}$, where w.l.o.g. 
		we assume that in
		$\bigwedge_{i\in I} \tuple{\ell_i:\delta_i}$ the $\ell_i$ are pairwise distinct. Then there exists exactly one $i' \in I$ such that 
		$\tuple{\ell_{i'}:\delta_{i'}} \leqS \tuple{\ell : \delta}$, so that $\ell_{i'} = \ell$ and $\delta_{i'} \leqD \delta$. Now $\Gamma \der \Aset{\ell}(V,s) : \sigma$
		implies that $\Gamma \der \Aset{\ell}(V,s) : \tuple{\ell_{i'}:\delta_{i'}} \equiv \tuple{\ell :\delta_{i'}}$ which is derivable only if 
		$\Gamma \der V: \delta_{i'}$.
		Then from $\delta_{i'} \leqD \delta$ we conclude by $\leqR$. 
		
	\end{description}
\end{proof}

\begin{proposition}\label{sigma-types-eq}
	$\Gamma \der s:\sigma \And \; \der s = t \Then \Gamma \der t:\sigma$
\end{proposition}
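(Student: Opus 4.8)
The plan is to reduce the statement to the Generation Lemma for store terms (Part~\ref{lem:genLemma-7'} of \Cref{lem:genLemma}) together with the completeness of the store axioms (\Cref{thr:store-axioms-completeness}): the latter says that $\der s = t$ forces $s$ and $t$ to have the same domain and the same looked-up value at every location, and the former says that a store typing records exactly such information. So the two facts should combine to transport a typing of $s$ to a typing of $t$.

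First I would dispose of the trivial case $\sigma = \omegaS$, where $\Gamma \der t : \sigma$ holds immediately by rule $\omegaR$; so assume $\sigma \neq \omegaS$. Applying Part~\ref{lem:genLemma-7'} of \Cref{lem:genLemma} to the hypothesis $\Gamma \der s : \sigma$ yields a finite non-empty set $I$, locations $\ell_i$, value types $\delta_i$ and values $V_i$ (for $i \in I$) such that $\der \Aget{\ell_i}(s) = V_i$, $\Gamma \der V_i : \delta_i$, and $\sigma \leqS \bigwedge_{i \in I}\tuple{\ell_i : \delta_i}$. Since each $\Aget{\ell_i}(s)$ is a well-formed lookup expression we also have $\ell_i \in \dom{s}$.

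By \Cref{thr:store-axioms-completeness}, $\der s = t$ gives $s \simeq t$, hence $\dom{s} = \dom{t}$ (so $\ell_i \in \dom{t}$ for all $i$) and $\Aget{\ell_i}(s) = \Aget{\ell_i}(t)$ for every $i$; since the looked-up value is unique (\Cref{lem:lkp(s)=V} and \Cref{lem:store-unicity}) this yields $\der \Aget{\ell_i}(t) = V_i$. Now I would feed $I, \ell_i, \delta_i, V_i$ back into the converse direction of Part~\ref{lem:genLemma-7'} of \Cref{lem:genLemma}: from $\der \Aget{\ell_i}(t) = V_i$, $\Gamma \der V_i : \delta_i$, and $\sigma \leqS \bigwedge_{i \in I}\tuple{\ell_i : \delta_i}$ we conclude $\Gamma \der t : \sigma$, as required.

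The only delicate point is making sure that the information extracted about $s$ — namely the lookup values at the locations occurring in $\sigma$ — transfers verbatim to $t$, and this is precisely what the extensional characterization of $\der s = t$ in \Cref{thr:store-axioms-completeness} is designed to supply. An alternative would be a direct induction on the derivation of $\der s = t$, but that would require strengthening the claim to a biconditional (to absorb symmetry) and carrying a companion statement for lookup terms (to absorb congruence through subterms of the form $\Aget{\ell}(s)$), so I would prefer the route above.
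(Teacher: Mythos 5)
Your proof is correct, but it follows a genuinely different route from the paper's. The paper proves \Cref{sigma-types-eq} by checking the axioms of \Cref{def:store-terms-axioms} one by one (essentially an induction on the derivation of $\der s = t$), with the only nontrivial case being $\Aset{\ell}(V,\Aset{\ell}(W,s')) = \Aset{\ell}(V,s')$, handled via both parts of \Cref{lem:restriction} after splitting on whether $\ell \in \dom{\sigma}$. You instead factor the problem through two global results: the extensional characterization of provable store equality (\Cref{thr:store-axioms-completeness}, of which only the soundness half, \Cref{lem:store-soundness}, is actually needed) and the two directions of Part~\ref{lem:genLemma-7'} of \Cref{lem:genLemma}, using \Cref{lem:store-unicity} to transport the looked-up values from $s$ to $t$. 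This buys you a proof with no case analysis on the axioms at all, and it makes explicit the conceptual reason the proposition holds: store typings depend only on the extensional content of a store, which provable equality preserves. What it costs is a heavier reliance on the $\Leftarrow$ direction of Part~\ref{lem:genLemma-7'} (which the paper declares ``immediate'' but which itself hides a small induction on $t$ mirroring the lookup computation through $\updRa$/$\updRb$), whereas the paper's argument needs only the more local \Cref{lem:restriction}. Your closing remark about the direct induction is apt: the paper's one-line framing does quietly absorb symmetry and congruence through lookup subterms. One small point to double-check: the inequality $\sigma \leqS \bigwedge_{i\in I}\tuple{\ell_i : \delta_i}$ you quote from Part~\ref{lem:genLemma-7'} appears to be oriented oppositely to the analogous clauses in the other parts of the Generation Lemma; your argument is insensitive to this because the identical inequality, with the identical data, is fed back into the converse direction, but it is worth flagging.
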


\begin{proof} By checking axioms in Definition \ref{def:store-terms-axioms}. The only interesting case is when
	$s \equiv \Aset{\ell}(V, \Aset{\ell}(W, s'))$ and $t \equiv \Aset{\ell}(V, s')$. If $\ell \not\in \dom{\sigma}$ then
	$\Gamma \der s: \sigma$ iff $\Gamma \der s':\sigma$ iff $\Gamma \der t:\sigma$ by Part (\ref{lem:restriction-i}) of Lemma \ref{lem:restriction}.
	
	If $\ell \in \dom{\sigma}$ then there exist $\delta \neq \omegaD$ and $\sigma'$ such that
	$\sigma = \tuple{\ell : \delta} \Inter \sigma'$ and $\ell \not \in \dom{\sigma'}$. By the above and that $\Gamma \der \sigma \leq \sigma'$,
	we have that $\Gamma \der s' : \sigma'$; by Part (\ref{lem:restriction-ii}) of Lemma \ref{lem:restriction}, $\Gamma \der V:\delta$, hence
	$\Gamma \der t \equiv \Aset{\ell}(V, s') : \tuple{\ell : \delta}$ by $\updRa$. By Part (\ref{lem:restriction-i}) of Lemma \ref{lem:restriction}
	$\Gamma \der t: \sigma'$, hence we conclude by $\interR$.
\end{proof}

We are now in place to prove the Subject expansion property of the typing system.

\begin{restatable}[Subject expansion]{theorem}{SubExp}\label{thr:subject-exp}
	\[ \Gamma \der (N, t) : \kappa \And \SmallStep{M}{s}{N}{t} ~ \Then ~ \Gamma \der (M, s) : \kappa\]
\end{restatable}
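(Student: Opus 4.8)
The plan is to mirror the structure of the Subject Reduction proof (\Cref{thr:subject-red}), proceeding by induction over the definition of $\SmallStep{M}{s}{N}{t}$ and using the Generation Lemma (\Cref{lem:genLemma}) to decompose the hypothesis $\Gamma \der (N,t) : \kappa$, then reassembling a typing for $(M,s)$. As before, we may assume $\kappa \neq \omegaC$, since otherwise $\Gamma \der (M,s) : \kappa$ holds by $\omegaR$ and $\leqR$. By Part \ref{lem:genLemma-conf} of \Cref{lem:genLemma} there is a finite set $I$ and types $\sigma_i, \kappa_i$ with $\Gamma \der N : \sigma_i \to \kappa_i$, $\Gamma \der t : \sigma_i$ for all $i \in I$, and $\bigwedge_{i \in I}\kappa_i \leqC \kappa$. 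It suffices to show, for each $i$, that $\Gamma \der M : \sigma_i \to \kappa_i$ and $\Gamma \der s : \sigma_i$ — or more liberally, that $(M,s)$ can be typed with a type below $\kappa_i$ — and then close up with $\interR$ and $\leqR$.

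The four reduction rules are handled separately. For $\betaR$, where $M \equiv \unit{V} \Bind (\lambda x.M')$ and $N \equiv M'\Subst{V}{x}$ with $t \equiv s$: from $\Gamma \der M'\Subst{V}{x} : \sigma_i \to \kappa_i$, the expansion part of the Substitution Lemma (\Cref{lem:SubstLemma}.2) yields a $\delta$ with $\Gamma \der V : \delta$ and $\Gamma, x:\delta \der M' : \sigma_i \to \kappa_i$; then $\lambdaR$ gives $\Gamma \der \lambda x.M' : \delta \to \sigma_i \to \kappa_i$, $\unitR$ gives $\Gamma \der \unit{V} : \sigma_i \to \delta \times \sigma_i$, and $\bindR$ gives $\Gamma \der \unit{V} \Bind (\lambda x.M') : \sigma_i \to \kappa_i$. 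For $\getredR$, where $M \equiv \get{\ell}{\lambda x.M'}$, $N \equiv M'\Subst{V}{x}$, $t \equiv s$ and $\Aget{\ell}(s) = V$: as in the $\betaR$ case, $\Gamma \der M'\Subst{V}{x} : \sigma_i \to \kappa_i$ gives $\delta$ with $\Gamma \der V : \delta$ and $\Gamma, x:\delta \der M' : \sigma_i \to \kappa_i$, whence $\getR$ yields $\Gamma \der \get{\ell}{\lambda x.M'} : (\tuple{\ell:\delta} \Inter \sigma_i) \to \kappa_i$; I then need $\Gamma \der s : \tuple{\ell:\delta} \Inter \sigma_i$, which follows from $\Gamma \der s : \sigma_i$ together with $\Gamma \der s : \tuple{\ell:\delta}$ — the latter from $\Aget{\ell}(s) = V$, $\Gamma \der V:\delta$, and Part \ref{lem:genLemma-7'}/\ref{lem:genLemma-upd} of \Cref{lem:genLemma} (i.e. $\Cref{cor:s-nf-decidable}$ expressing $s$ as iterated updates plus rule $\updRa$). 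For $\setredR$, where $M \equiv \set{\ell}{V}{M'}$, $N \equiv M'$ and $t \equiv \Aset{\ell}(V,s)$: here $\Gamma \der M' : \sigma_i \to \kappa_i$ and $\Gamma \der \Aset{\ell}(V,s) : \sigma_i$ are the data. I want to produce $\sigma_i'$ with $\ell \notin \dom{\sigma_i'}$ and $\delta$ with $\sigma_i = \tuple{\ell:\delta} \Inter \sigma_i'$ (up to equivalence), then apply $\setR$. Using \Cref{lem:restriction}: if $\ell \notin \dom{\sigma_i}$ take $\sigma_i' = \sigma_i$ and $\delta = \omegaD$, and derive $\Gamma \der s : \sigma_i$ from \Cref{lem:restriction}.(\ref{lem:restriction-i}); if $\ell \in \dom{\sigma_i}$ split $\sigma_i = \tuple{\ell:\delta}\Inter\sigma_i'$ with $\ell\notin\dom{\sigma_i'}$, obtain $\Gamma \der V:\delta$ from \Cref{lem:restriction}.(\ref{lem:restriction-ii}) and $\Gamma \der s : \sigma_i'$ from \Cref{lem:restriction}.(\ref{lem:restriction-i}), then $\setR$ and $\confR$ finish it. The inductive case $\BindR$, where $M \equiv M' \Bind W$, $N \equiv N' \Bind W$ and $\SmallStep{M'}{s}{N'}{t}$: apply Part \ref{lem:genLemma-bind} of \Cref{lem:genLemma} to $\Gamma \der N' \Bind W : \sigma_i \to \kappa_i$ to get the intermediate types, use $\confR$ on $(N',t)$ to set up the induction hypothesis on $M'$, pull the resulting typing of $(M',s)$ back through $\confR$, and recombine with the typing of $W$ via $\bindR$.

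The main obstacle will be the $\setredR$ case — it is the dual of the subtlest typing rule, and the bookkeeping with the side condition $\ell \notin \dom{\sigma}$ and the decomposition of $\sigma_i$ into an $\ell$-component and an $\ell$-free component is where the argument genuinely bites. This is precisely where \Cref{lem:restriction} and \Cref{sigma-types-eq} (which one also needs, since $\Aset{\ell}(V,s)$ may only be typable after rewriting via the store axioms) do the real work. A secondary subtlety, shared with the Subject Reduction proof, is the correct handling of the families of indices introduced by repeated appeals to \Cref{lem:genLemma} and \Cref{lem:continuity}: one must be careful that the set $I$ produced by Part \ref{lem:genLemma-conf} can be processed uniformly and that the final intersection over $i \in I$ really lands below $\kappa$. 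The cases $\betaR$, $\getredR$ and $\BindR$ are, by contrast, essentially routine once the decomposition machinery is in place.
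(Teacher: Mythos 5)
Your plan is correct and follows essentially the same route as the paper's own proof: the same induction over the reduction relation with $\kappa \neq \omegaC$, the same decomposition of $\Gamma \der (N,t):\kappa$ via \Cref{lem:genLemma}, the use of the expansion half of \Cref{lem:SubstLemma} for the $\betaR$ and $\getredR$ cases, and exactly the same two-way split on $\ell \in \dom{\sigma_i}$ handled through \Cref{lem:restriction} in the $\setredR$ case. No substantive differences to report.
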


\begin{proof}
	The proof is by induction over  $(M,s) \Red (N,t)$, assuming that $\kappa \neq \omegaC$. The only interesting cases are the following.
	
\begin{description}
\item Case: $M\equiv \unit{V} \Bind (\lambda x. M')$ and $N\equiv M'\Subst{V}{x}$ and $s=t$. \\ 
	By the last part of Lemma \ref{lem:genLemma}, $\exists I, \sigma_i, \kappa_i.~ \forall i\in I. \;\Gamma \der N:\sigma_i \to \kappa_i \And$
	$\Gamma \der t: \sigma_i \And \bigwedge_{i\in I}\kappa_i\leq \kappa$. 
	By Lemma \ref{lem:SubstLemma} , for all $i\in I$ there exist $\delta_i$ such that $\Gamma \der V:\delta_i$ and
	$\Gamma, x: \delta_i \der M':\sigma_i\to \kappa_i$. Then $\Gamma \der \unit{V}:\sigma_i \to \delta_i\times \sigma_i$ by rule $(unit)$ and 
	$\Gamma \der \lambda x. M': \delta_i \to \sigma_i\to \kappa_i$ by rule $(\lambda)$.
	We conclude that  $\Gamma \der \unit{V} \Bind (\lambda x. M'): \sigma\to \kappa$ by rule $(\Bind)$ and $ \interR$.
	
\item Case $\SmallStep { \get{\ell}{\lambda x.M'} } {s} { M' \Subst{V}{x} } {s}$ where $ \Aget{\ell} (s) = V $. \\ As before, 
	by the last part of Generation Lemma \ref{lem:genLemma},  and by Lemma \ref{lem:SubstLemma}, 
	$\exists I, \sigma_i, \kappa_i.~ \forall i\in I.\;\Gamma \der N:\sigma_i \to \kappa_i \And$
	$\Gamma \der s: \sigma_i \And \bigwedge_{i\in I}\kappa_i\leq \kappa$, and for all $i\in I$ 
	there exist $\delta_i$ such that $\Gamma \der V:\delta_i$ and
	$\Gamma, x: \delta_i \der M':\sigma_i\to \kappa_i$.
	Since $\Gamma \der V:\delta_i$ we derive by rule $\updRa$ that $\Gamma \der s=\Aset{\ell}(V,s'):\tuple{\ell : \delta_i}$, 
	where we can assume w.l.o.g. that $s=\Aset{\ell}(V, s')$ where $l\not\in \dom{s'}$ by Definition \ref{def:store-terms-axioms}.
	By $\Gamma, x: \delta_i \der M':\sigma_i\to \kappa_i$ and by $\getR$ we obtain: 
	$ \Gamma \der \get{\ell}{\lambda x.M'} : (\tuple{\ell : \delta_i} \Inter \sigma_i) \to \kappa_i $. 
	By this and $\Gamma \der s: \tuple{\ell : \delta_i}\Inter \sigma_i$, we conclude that 
	$ \Gamma \der \get{\ell}{\lambda x.M'} : \kappa$ by $\confR$ and $\leqR$. 
	
\item	Case $\SmallStep {\set{\ell}{V}{M'}}{s}{ M'}{\Aset{\ell}(V,s)}$. \\
	By the last part of Lemma \ref{lem:genLemma}, $\exists I, \sigma_i, \kappa_i.~ \forall i\in I.\;\Gamma \der N:\sigma_i \to \kappa_i \And$
	$\Gamma \der \Aset{\ell}(V,s): \sigma_i \And \bigwedge_{i\in I}\kappa_i\leq \kappa$. 
	We distinguish two cases. Suppose $\ell\not\in \dom{\sigma_i}$, then 
	\[
	\prooftree
	\Gamma \der V : \omegaD
	\quad
	\Gamma \der M':  (\tuple{\ell : \omegaD} \Inter \sigma_i)  \to \kappa_i
	\justifies
	\Gamma \der \set{\ell}{V}{M'} : \sigma_i \to \kappa_i
	\using \setR
	\endprooftree
	\]
	
	By Lemma \ref{lem:restriction}.(i) we know that $\Gamma\der s:\sigma_i$. Hence, the thesis follows by $\confR$ and $(\Inter)$. \\
	Otherwise, suppose $\ell\in \dom{\sigma_i}$. By Lemma \ref{lem:restriction}.(ii), we have that there exist $\delta_i$ 
	such that $\sigma_i\leq \tuple{\ell :\delta_i}$ and $\Gamma\der V:\delta_i$. We can assume w.l.o.g. $\sigma_i=\tuple{\ell:\delta_i}\Inter \sigma'_i$ with $\ell\not\in\dom{\sigma'_i}$: 
	\[\begin{array}{rcll}
		\Gamma \der \Aget{\ell}(V,s):\sigma_i &\Then &\Gamma \der \Aget{\ell}(V,s):\sigma'_i &\\
		& \Then& \Gamma \der s: \sigma'_i & \text{since }\ell\not\in\dom{\sigma'_i} \mbox{ and by \ref{lem:restriction}.\ref{lem:restriction-ii}}\\
		& \Then & \Gamma \der (\set{\ell}{V}{M'}, s):\kappa_i& 
	\end{array}\]
	The thesis follows by application of $\confR$ and $\interR$.
\end{description}
\vspace{-2ex}
\end{proof}


\begin{example}\label{ex:invariance} In Example \ref{ex:reduction} we have seen that 
\[ (\set{\ell}{V}{\unit{W}} \, ; \, \get{\ell}{\lambda x.N} , s) \RedStar (N\Subst{V}{x}, \Aset{\ell}(V, s)) \]
where $M ; N \equiv M \Bind \mutelambda. N$. To illustrate subject reduction, 
let's first specialize the typing rule $\bindR$ to the case of $M;N$ as follows:

\[
\prooftree
	\Gamma \der M : \sigma \to \delta' \times \sigma'
	\quad
	\Gamma \der N : \sigma' \to \delta'' \times \sigma''
\justifies
	\Gamma \der M ; N : \sigma \to \delta'' \times \sigma''
\using \seqR
\endprooftree
\]

Now, consider the derivation of $ \Gamma \der \set{\ell}{V}{\unit{W}} \, ; \, \get{\ell}{\lambda x.N} : \sigma \to \kappa $
as in Figure \ref{fig:derivation}; therefore, assuming 
$\Gamma \der s : \sigma$ we conclude $\Gamma \der  (\set{\ell}{V}{\unit{W}} \, ; \, \get{\ell}{\lambda x.N} , s) : \kappa$
by rule $\confR$.

\begin{figure*}
{\scriptsize
$\begin{array}{c}
\prooftree
	\prooftree
		\Gamma \der V: \delta
		\quad
		\prooftree
			\Gamma \der W : \delta'
		\justifies
			\Gamma \der \unit{W} : (\tuple{\ell : \delta} \Inter \sigma) \to \delta' \times (\tuple{\ell : \delta} \Inter \sigma)
		\using \unitR
		\endprooftree
		 \ell \not \in \dom{\sigma}
	\justifies
		\Gamma \der \set{\ell}{V}{\unit{W}} : \sigma \to \delta' \times (\tuple{\ell : \delta} \Inter \sigma) 
	\using \setR
	\endprooftree
	\quad
	\prooftree
		\Gamma, x: \delta \der N : \sigma \to \kappa
	\justifies
		\Gamma \der \get{\ell}{\lambda x.N} : (\tuple{\ell : \delta} \Inter \sigma) \to \kappa
	\using  \getR
	\endprooftree
\justifies
	\Gamma \der \set{\ell}{V}{\unit{W}} \, ; \, \get{\ell}{\lambda x.N} : \sigma \to \kappa
\using \seqR
\endprooftree
\end{array}$}

\caption{Type derivation in Example \ref{ex:invariance}
}\label{fig:derivation}

\end{figure*}

The Substitution Lemma \ref{lem:SubstLemma}.1 implies that the following rule is admissible:
\[
\prooftree
	\Gamma \der V: \delta
	\quad
	\Gamma, x: \delta \der N : \sigma \to \kappa
\justifies
	\Gamma \der N\Subst{V}{x} : \sigma \to \kappa
\using 
\endprooftree
\]
On the other hand, we have that $\ell \not \in \dom{\sigma}$ implies that
$\sigma = \bigwedge_{i \in I} \tuple{\ell_i : \delta_i} $ and $\ell \neq \ell_i$ for all $i \in I$. Therefore,
by $\updRb$ and $\interR$, from the assumption that $\Gamma \der s$ we have
$\Gamma \der \Aset{\ell}(V, s) : \sigma$, hence
$\Gamma \der ( N\Subst{V}{x},  \Aset{\ell}(V, s) ) : \kappa$ by $\confR$.

Notice that the typing of $W$ does not take part
to the derivation of $\Gamma \der ( N\Subst{V}{x},  \Aset{\ell}(V, s) ) : \kappa$, which
corresponds to the fact that $W$ gets discarded in the reduction from $(\set{\ell}{V}{\unit{W}} \, ; \, \get{\ell}{\lambda x.N} , s)$ to
$( N\Subst{V}{x},  \Aset{\ell}(V, s) )$.
\end{example}


\section{The characterization theorem}
\label{sec:imp-characterization}

The main result of this section is Theorem \ref{thr:char-convergence} below, where convergent terms are characterized by
typability with a single type in the intersection type system from Section \ref{sec:imp-intersection}:
\[\forall M \in \ComTerm^0. ~ M \Downarrow ~ \iff ~ \der M : \omegaS \to \omegaD \times \omegaS\]
where by the precedence of $\times$ over $\to$ (see the paragraph after Definition \ref{def:theories})
the type in the statement above reads as $ \omegaS \to (\omegaD \times \omegaS)$.

To prove the only-if part it suffices the equivalence of the big-step and the small-step operational semantics, namely reduction,
established in Proposition \ref{prop:big-small}, and Theorems \ref{thr:subject-red} and \ref{thr:subject-exp}; more precisely, 
we need subject expansion when arguing that if $\Gamma \der (\unit{V},t): \kappa$ and $\SmallStepStar{M}{s}{\unit{V}}{t}$ then 
$\Gamma \der (M,s): \kappa$.

\begin{lemma}\label{lem:char-only-if}
\[\forall M \in \ComTerm^0. ~ M \Downarrow ~\, \Then~ \der M : \omegaS \to \omegaD \times \omegaS\]
\end{lemma}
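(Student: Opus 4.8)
The plan is to reduce the statement to the two substantive facts already available: the coincidence of big-step evaluation with reduction (\Cref{prop:big-small}) and Subject Expansion (\Cref{thr:subject-exp}). First I would instantiate the hypothesis at the empty store: by \Cref{def:convergence}, $M \Downarrow$ implies $(M,\emp)\Downarrow$, so there are $V \in \ValTerm^{\,0}$ and $t \in \StoreSort^0$ with $\BigStep{M}{\emp}{V}{t}$, which by \Cref{prop:big-small} is the same as $(M,\emp) \RedStar (\unit{V},t)$.

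Second I would type the terminal configuration, which costs nothing: $\der V : \omegaD$ by $\omegaR$, hence $\der \unit{V} : \omegaS \to \omegaD \times \omegaS$ by $\unitR$, and since also $\der t : \omegaS$ by $\omegaR$, rule $\confR$ gives $\der (\unit{V},t) : \omegaD \times \omegaS$. Then I would propagate this typing backwards along $(M,\emp) \RedStar (\unit{V},t)$: by induction on its length, each single step is handled by \Cref{thr:subject-exp}, yielding $\der (M,\emp) : \omegaD \times \omegaS$.

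Third I would descend from the configuration typing to a typing of $M$ alone. Applying the Generation Lemma (\Cref{lem:genLemma}, part \ref{lem:genLemma-conf}) to $\der (M,\emp) : \omegaD \times \omegaS$ gives a finite $I$ and types $\sigma_i,\kappa_i$ with $\der M : \sigma_i \to \kappa_i$, $\der \emp : \sigma_i$ and $\bigwedge_{i\in I} \kappa_i \leq \omegaD \times \omegaS$. Since $\dom{\emp} = \emptyset$, no $\Aget{\ell}(\emp)$ is defined, so part \ref{lem:genLemma-7'} forces every $\sigma_i$ to be equivalent to $\omegaS$; thus $\der M : \omegaS \to \kappa_i$ for each $i$ (by $\leqR$), whence $\der M : \omegaS \to \bigwedge_{i} \kappa_i$ by $\interR$ together with axiom \ref{def:type-preorders-7}, and finally $\der M : \omegaS \to \omegaD \times \omegaS$ by $\leqR$ using $\bigwedge_i \kappa_i \leq \omegaD \times \omegaS$. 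In the degenerate case $I=\emptyset$ the conclusion is already immediate from $\omegaR$ and $\leqR$ via axiom \ref{def:type-preorders-6}.

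The only real point of care is this last descent: Subject Expansion propagates only \emph{configuration} types, so to read off a type for $M$ one must exploit that the empty store has essentially only the type $\omegaS$, which is what pins down the domain of the arrow extracted for $M$. Everything else is routine bookkeeping with the Generation Lemma and subtyping.
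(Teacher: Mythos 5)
Your proposal is correct and follows essentially the same route as the paper's own proof: instantiate at $\emp$, pass to $(M,\emp)\RedStar(\unit{V},t)$ via \Cref{prop:big-small}, type $(\unit{V},t)$ trivially with $\omega$-types, pull the typing back with \Cref{thr:subject-exp}, and extract $\der M : \omegaS \to \omegaD\times\omegaS$ from the configuration typing using the fact that $\omegaS$ is the only type of $\emp$. The only difference is that you spell out the final descent via the Generation Lemma, which the paper leaves as a one-line remark.
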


\begin{proof} First note that
	\[\begin{array}{llll}
	M \Downarrow & \Then & (M, \emp) \Downarrow \\
	& \Then & \exists V, t. \; \BigStep{M}{\emp}{V}{t} \\
	& \Then & \exists V, t. \; (M, \emp) \RedStar (\unit{V},t) & \mbox{(*)}
	\end{array}\]
	by Proposition \ref{prop:big-small}.
	Now consider the type derivation:
	\[
	\prooftree
		\prooftree
			\prooftree
			\justifies
				\der V : \omegaD
			\using \omegaR
			\endprooftree
		\justifies
			\der \unit{V}: \omegaS \to \omegaD \times \omegaS
		\using \unitR
		\endprooftree
		\quad
		\prooftree
		\justifies
			\der t : \omegaS
		\using \omegaR
		\endprooftree
	\justifies
		\der (\unit{V},t) : \omegaD \times \omegaS
	\using \confR
	\endprooftree
	\]
By (*) and Theorem \ref{thr:subject-exp}, we have that $\der (M,\emp) : \omegaD \times \omegaS$ so that 
$\der M : \omegaS \to \omegaD \times \omegaS$ since
$\omegaD \times \omegaS \neq_C \omegaC$ and,
as observed in Section \ref{sec:imp-intersection}, $\der \emp : \omegaS$ is the only typing of $\emp$ in the empty context up to $=_{S}$.
\end{proof}

The proof of the if part of Theorem \ref{thr:char-convergence} is more difficult. We adapt to the present calculus
the technique of {\em saturated sets} used in \cite{Krivine-book'93} to denote certain sets of terms of ordinary $\lambda$-calculus, that
are closed by $\beta$-expansion. Saturated sets correspond to Tait's computable predicates (see \cite{Bakel-TCS'95} Definition 3.2.2) and are called ``stable'' 
in \cite{BarendregtDS2013}, \S 17.2.

\begin{definition}[Saturated sets]\label{def:comp-interp}

Let $\BotC$ be a new symbol and set $M(s) = \BotC $ if $(M,s)\Uparrow$. Then define $\sat{\cdot}{}$ as a map associating to each type a subset of closed terms, stores, or their combinations, depending on the their kinds, possibly including the symbol $\BotC$:
\begin{enumerate}


\item $\sat{\delta}{\II} \subseteq \ValTerm^{\,0}$ by $\sat{\omegaD}{\II} = \ValTerm^{\,0}$, and \\
	$\sat{\delta \to \tau}{\II} = \Set{ V \mid \forall \, W \in \sat{\delta}{\II}. \; \unit{W} \Bind V \in \sat{\tau}{\II} }$

\item \label{def:comp-interp-2}
	$\sat{\sigma}{\II} \subseteq \StoreSort^{\,0}$ by\\ $\sat{\omegaS}{\II} = \StoreSort^{\,0}$ and \\
	$\sat{ \tuple{\ell : \delta} }{\II} = \Set{s \mid \ell\in\dom{s}\And \exists V \in \sat{\delta}{\II}. \, \der \lkp{\ell}{s} = V}$

\item\label{def:comp-interp-3} $\sat{\kappa}{\II} \subseteq (\ValTerm^{\,0}  \times \StoreSort^{\,0}) \cup \Set{\BotC}$ by \\
	$\sat{\omegaC}{\II} = (\ValTerm^{\,0}  \times \StoreSort^{\,0}) \cup \Set{\BotC}$ and 
	$\sat{\delta \times \sigma}{\II} = \sat{\delta}{\II} \times  \sat{\sigma}{\II}$

\item\label{def:comp-interp-4} $\sat{\tau}{\II} \subseteq \ComTerm^{\,0}$ by $\sat{\omegaSD}{\II} = \ComTerm^{\,0}$ and \\
	$\sat{\sigma \to \kappa}{\II} = \Set{M  \mid  \forall s \in \sat{\sigma}{\II}. \ M(s) \in \sat{\kappa}{\II}}$

\item $\sat{\varphi \Inter \varphi'}{\II}  =  \sat{\varphi}{\II} \cap \sat{\varphi'}{\II}$ for $\varphi$ of any sort.
\end{enumerate}
\end{definition}

\begin{remark}\label{rem:type-interpretation}
Definition \ref{def:comp-interp} is close to Definition \ref{def:type-interp} of type interpretation;
the main difference is that types are just sets of (closed) terms and not subsets of domains.
	
	Differently from \cite{Krivine-book'93}, in the above definition, the interpretation of a type does not depend on a mapping $I$ 
	for type variables, since in this setting, the only atoms are the $\omega$'s.
	Nonetheless, the interpretation of types $\delta$ and $\tau$ are, in general, non trivial subsets of $\ValTerm^{\,0}$ and $\ComTerm^{\,0}$, respectively. 
	First, observe that $\sat{\omegaC}{\II}\neq \sat{\delta \times \sigma}{\II}$ for all $\delta$ and $\sigma$, 
	since $\BotC \not\in \sat{\delta \times \sigma}{\II}$. As a consequence, we have that 
	\[M\in \sat{\sigma\to \delta \times \sigma'}{\II} \Leftrightarrow \forall s \in \sat{\sigma}{\II}\,\exists (V,t)\in \sat{\delta\times\sigma'}{\II}.\, 
		(M,s)\Downarrow(V,t)\]
	For example, the interpretation of the type $\omegaD\to (\omegaS \to \omegaD \times\omegaS)$ is 
	\[\Set{V \mid \forall M, s. \, (\unit{W}\Bind V)(s) \in \sat{\omegaD\times\omegaS}{\II}}\] 
	therefore, $\lambda x.\unit{x} \in \sat{\omegaD\to (\omegaS \to \omegaD \times\omegaS)}{\II}$, but
	$\lam x.\Omega_c$ does not belong to such a set because $ (\unit{W}\Bind \lam x.\Omega_c)(s)  =\Omega_c (s) = \BotC$ for any $W$.
\end{remark}

The following lemma parallels Lemma \ref{lem:type-interpretation-leq}.

\begin{lemma}\label{lem:subseteq}
For any $\varphi, \psi \in \Lang_A$ of any kind, if $\varphi \leq_A \psi$ then 
$\sat{\varphi}{\II} \subseteq \sat{\psi}{\II}$.
\end{lemma}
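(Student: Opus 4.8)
The plan is to prove \Cref{lem:subseteq} by induction on the derivation of $\varphi \leq_A \psi$ from the axiomatic presentation of the preorders $\leq_A$ given in \Cref{def:type-preorders}, following the same strategy as \Cref{lem:type-interpretation-leq} but now working with the saturated-set interpretation $\sat{\cdot}{}$ of \Cref{def:comp-interp} instead of the domain-theoretic interpretation. The reflexivity and transitivity cases are trivial (the latter by composing inclusions), and the cases for $\omega$, for the two projections $\varphi \Inter \varphi' \leq \varphi$ and $\varphi \Inter \varphi' \leq \varphi'$, and for the monotonicity rule of $\Inter$ are immediate from clause~6 of \Cref{def:comp-interp}, since $\sat{\varphi \Inter \varphi'}{} = \sat{\varphi}{} \cap \sat{\varphi'}{}$ behaves as a genuine set-theoretic intersection, and $\sat{\omega_A}{}$ is by definition the whole ambient set of the appropriate kind.

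The substantive work is in the seven axioms \ref{def:type-preorders}.\ref{def:type-preorders-1}--\ref{def:type-preorders}.\ref{def:type-preorders-7} and the four congruence rules for $\to$, $\times$, $\tuple{\ell:\cdot}$. For each I would unfold the relevant clause of \Cref{def:comp-interp}. For instance, axiom~\ref{def:type-preorders-2}, $(\delta \to \tau) \Inter (\delta \to \tau') \leq_D \delta \to (\tau \Inter \tau')$, amounts to: if $V$ satisfies $\unit{W} \Bind V \in \sat{\tau}{}$ and $\unit{W} \Bind V \in \sat{\tau'}{}$ for all $W \in \sat{\delta}{}$, then $\unit{W} \Bind V \in \sat{\tau}{} \cap \sat{\tau'}{} = \sat{\tau \Inter \tau'}{}$, which is immediate. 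Axioms \ref{def:type-preorders-3}, \ref{def:type-preorders-5}, \ref{def:type-preorders-7} are similarly mechanical distributions of intersection over the record/product/arrow constructors. The axioms \ref{def:type-preorders-1}, \ref{def:type-preorders-4}, \ref{def:type-preorders-6} of the form $\omega_A \leq \cdots$ require checking that the right-hand side is also interpreted as the full ambient set: e.g. for \ref{def:type-preorders-4}, $\sat{\omega_D \times \omega_S}{} = \sat{\omega_D}{} \times \sat{\omega_S}{} = \ValTerm^{\,0} \times \StoreSort^{\,0}$, which is a subset of $\sat{\omega_C}{} = (\ValTerm^{\,0} \times \StoreSort^{\,0}) \cup \Set{\BotC}$; for \ref{def:type-preorders-6}, $\sat{\omega_S \to \omega_C}{} = \Set{M \mid \forall s \in \StoreSort^{\,0}.\ M(s) \in \sat{\omega_C}{}}$, and since $M(s)$ is always either a result pair or $\BotC$, this is all of $\ComTerm^{\,0} = \sat{\omega_{\GS D}}{}$. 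The congruence rules invoke the induction hypothesis contravariantly/covariantly exactly as in the proof of \Cref{lem:type-interpretation-leq}.

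The main obstacle — and it is a mild one — is bookkeeping around the role of $\BotC$, i.e. the fact that $\sat{\omega_C}{}$ contains the extra element $\BotC$ while $\sat{\delta \times \sigma}{}$ never does (as stressed in \Cref{rem:type-interpretation}). One must check that this asymmetry never breaks an inclusion: it only ever helps, because $\omega_C$ appears solely on the larger side of the relevant axioms, and the product types that appear on the smaller side of \ref{def:type-preorders-5} are compared with another product type, not with $\omega_C$. A second small point is that the clause for $\sat{\tuple{\ell:\delta}}{}$ carries the side condition $\ell \in \dom{s}$ and the provable-equality condition $\der \lkp{\ell}{s} = V$; in the congruence rule $\delta \leq_D \delta' \Rightarrow \tuple{\ell:\delta} \leq_S \tuple{\ell:\delta'}$ one simply reuses the same $s$, $V$ with $V \in \sat{\delta}{} \subseteq \sat{\delta'}{}$ by the induction hypothesis, and the domain condition is unchanged. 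So the proof is genuinely routine and I would state it as: \emph{By induction over the definition of the pre-orders $\leq_A$, unfolding \Cref{def:comp-interp} in each case; the argument is the same as for \Cref{lem:type-interpretation-leq}, with $\sat{\cdot}{}$ in place of $\Sem{\cdot}$, the only new point being that $\BotC \in \sat{\omega_C}{}$, which is harmless since $\omega_C$ occurs only on the right of the axioms it appears in.}
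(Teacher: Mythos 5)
Your overall strategy---induction over the definition of the preorders $\leq_A$, discharging the generic $\omega$, intersection, and congruence cases by unfolding \Cref{def:comp-interp}---is exactly the paper's, and your handling of the $\omega$-axioms and of the asymmetry introduced by $\BotC$ is correct and in fact more explicit than the paper's own write-up. There is, however, one genuine gap: you group axiom \ref{def:type-preorders}.\ref{def:type-preorders-3}, $\tuple{\ell:\delta}\Inter\tuple{\ell:\delta'}\leq_S\tuple{\ell:\delta\Inter\delta'}$, with axioms \ref{def:type-preorders}.\ref{def:type-preorders-5} and \ref{def:type-preorders}.\ref{def:type-preorders-7} as a ``mechanical distribution of intersection over the constructor'', and it is not. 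The clause \ref{def:comp-interp}.\ref{def:comp-interp-2} for $\sat{\tuple{\ell:\delta}}{\II}$ is existential, so from $s\in\sat{\tuple{\ell:\delta}}{\II}\cap\sat{\tuple{\ell:\delta'}}{\II}$ you obtain \emph{some} $V\in\sat{\delta}{\II}$ with $\der\lkp{\ell}{s}=V$ and \emph{some} $W\in\sat{\delta'}{\II}$ with $\der\lkp{\ell}{s}=W$, whereas membership in $\sat{\tuple{\ell:\delta\Inter\delta'}}{\II}$ demands a \emph{single} witness lying in $\sat{\delta}{\II}\cap\sat{\delta'}{\II}=\sat{\delta\Inter\delta'}{\II}$. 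Closing this requires $V\equiv W$, which is not a formal manipulation of intersections but the functionality of the derivable-lookup relation, i.e.\ \Cref{lem:store-unicity}. This identification is the only non-routine step in the entire lemma, and it is precisely the case the paper's proof singles out and writes in full. With that one invocation added, your argument is complete and coincides with the paper's.
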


\begin{proof}
By induction over the definition of type pre-orders. We only show that the property holds for the store types. First
\[ \sat{ \tuple{\ell : \delta} }{\II}\cap \sat{ \tuple{\ell : \delta'} }{\II} \subseteq \sat{ \tuple{\ell : \delta\Inter \delta'} }{\II}\]
If $s \in \sat{ \tuple{\ell : \delta} }{\II}\cap \sat{ \tuple{\ell : \delta'} }{\II}$ then 
$\ell \in \dom{s}$ and for certain $V \in \sat{\delta}{}$ and $W \in \sat{\delta'}{}$ we have $\lkp{\ell}{s} = V$ and $\lkp{\ell}{s} = W$. Here equality
means derivability from the axioms of Definition \ref{def:store-terms-axioms} therefore, by Lemma \ref{lem:store-unicity}, $V \equiv W$, 
namely they are the same value in $\sat{\delta}{\II} \cap \sat{\delta'}{\II} = \sat{\delta \Inter \delta'}{\II}$ as required.

Finally that  
\[ \delta \leq \delta' \Then \sat{\tuple{\ell:\delta}}{} \subseteq \sat{\tuple{\ell:\delta'}}{} \]
is immediate consequence of the induction hypothesis $\sat{\delta}{} \subseteq \sat{\delta'}{}$ and the Definition 
\ref{def:comp-interp}.\ref{def:comp-interp-2}.
\end{proof}

The saturated sets yield a sound interpretation of type judgments as stated in the next lemma:

\begin{lemma}\label{lem:compLemma}
Let $\Gamma \der M : \tau$ with $\Gamma = \Set{x_1 : \delta_1 , \ldots , x_n : \delta_n}$ and $M \in \ComTerm$. 
For any $V_1, \ldots , V_n \in \ValTerm^{\,0}$,
if $V_i \in \sat{\delta_i}{\II}$ for $i = 1, \ldots , n$ then
\[ M\Subst{V_1}{x_1} \cdots \Subst{V_n}{x_n} \in \sat{\tau}{\II} \]
\end{lemma}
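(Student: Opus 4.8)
The plan is to prove, by induction on the typing derivation, a statement that simultaneously covers all four judgment forms. Writing $\widehat N$ for the result of applying the closing substitution $\Subst{V_1}{x_1}\cdots\Subst{V_n}{x_n}$ to a (value, computation, store, or configuration) term $N$ with free variables in $\dom{\Gamma}$, and assuming $V_i\in\sat{\delta_i}{\II}$ for all $i$, the claim is that $\Gamma\der V:\delta$ implies $\widehat V\in\sat{\delta}{\II}$, that $\Gamma\der M:\tau$ implies $\widehat M\in\sat{\tau}{\II}$, that $\Gamma\der s:\sigma$ implies $\widehat s\in\sat{\sigma}{\II}$, and that $\Gamma\der(M,s):\kappa$ implies $(\widehat M,\widehat s)\in\sat{\kappa}{\II}$; the lemma is the second clause. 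Since the $V_i$ are closed we may, by $\alpha$-conversion, take the bound variables of the subject fresh, so that all substitutions in play commute with one another and with the extra substitution $\Subst{W}{x}$ used below. The rules $\omegaR$, $\interR$ and $\leqR$ are then discharged at once by the definition of $\sat{\omega_A}{\II}$, by the intersection clause of \Cref{def:comp-interp}, and by \Cref{lem:subseteq} respectively; $\varR$ holds because $\widehat{x_i}=V_i\in\sat{\delta_i}{\II}$.

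For the remaining rules the pivotal remark is that reduction is deterministic, so the (total, $\BotC$-valued on divergence) function $M(\cdot)$ is invariant along reduction: $(M,s)\Red(N,t)$ forces $M(s)=N(t)$. Together with \Cref{lem:big-small} this supplies all the expansion steps. For $\lambdaR$, given $W\in\sat{\delta}{\II}$ the induction hypothesis on the premise (with the closing substitution extended by $\Subst{W}{x}$) gives $\widehat M\Subst{W}{x}\in\sat{\tau}{\II}$; since $(\unit{W}\Bind\lambda x.\widehat M,s)\Red(\widehat M\Subst{W}{x},s)$ for every $s$, we get $(\unit{W}\Bind\lambda x.\widehat M)(s)=(\widehat M\Subst{W}{x})(s)$ for all $s$, hence $\unit{W}\Bind\lambda x.\widehat M\in\sat{\tau}{\II}$ and $\lambda x.\widehat M\in\sat{\delta\to\tau}{\II}$. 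For $\unitR$, if $s\in\sat{\sigma}{\II}$ then $(\unit{\widehat V})(s)=(\widehat V,s)\in\sat{\delta}{\II}\times\sat{\sigma}{\II}=\sat{\delta\times\sigma}{\II}$ by the induction hypothesis on $V$. For $\bindR$, fix $s\in\sat{\sigma}{\II}$; the induction hypothesis on $M$ gives $\widehat M(s)\in\sat{\delta'}{\II}\times\sat{\sigma'}{\II}$, which in particular excludes $\BotC$ (\Cref{rem:type-interpretation}), so $\widehat M(s)=(W,t)$ with $(\widehat M,s)\RedStar(\unit{W},t)$, hence $(\widehat M\Bind\widehat V,s)\RedStar(\unit{W}\Bind\widehat V,t)$ by \Cref{lem:big-small}, whence $(\widehat M\Bind\widehat V)(s)=(\unit{W}\Bind\widehat V)(t)$; the induction hypothesis on $V$ gives $\widehat V\in\sat{\delta'\to\sigma'\to\delta''\times\sigma''}{\II}$, so $\unit{W}\Bind\widehat V\in\sat{\sigma'\to\delta''\times\sigma''}{\II}$ and therefore $(\unit{W}\Bind\widehat V)(t)\in\sat{\delta''\times\sigma''}{\II}$, as required.

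The store-related rules go the same way. For $\getR$, if $s\in\sat{\tuple{\ell:\delta}\Inter\sigma}{\II}$ then $\ell\in\dom{s}$ and $\der\lkp{\ell}{s}=W$ for some $W\in\sat{\delta}{\II}$, so $(\get{\ell}{\lambda x.\widehat M},s)\Red(\widehat M\Subst{W}{x},s)$, the induction hypothesis on the premise gives $\widehat M\Subst{W}{x}\in\sat{\sigma\to\kappa}{\II}$, and evaluating at $s\in\sat{\sigma}{\II}$ lands in $\sat{\kappa}{\II}$. For $\setR$, fix $s\in\sat{\sigma}{\II}$; then $(\set{\ell}{\widehat V}{\widehat M},s)\Red(\widehat M,\Aset{\ell}(\widehat V,s))$, and $\Aset{\ell}(\widehat V,s)\in\sat{\tuple{\ell:\delta}\Inter\sigma}{\II}$: it lies in $\sat{\tuple{\ell:\delta}}{\II}$ because $\der\lkp{\ell}{\Aset{\ell}(\widehat V,s)}=\widehat V\in\sat{\delta}{\II}$ by axiom \ref{def:store-terms-axioms}.\ref{def:store-terms-axioms-a} and the induction hypothesis on $V$, and it lies in $\sat{\sigma}{\II}$ because, writing $\sigma=\bigwedge_i\tuple{\ell_i:\delta_i}$, the side condition $\ell\notin\dom{\sigma}$ gives $\ell_i\neq\ell$ and hence $\der\lkp{\ell_i}{\Aset{\ell}(\widehat V,s)}=\lkp{\ell_i}{s}$ by axiom \ref{def:store-terms-axioms}.\ref{def:store-terms-axioms-b}; the induction hypothesis on $M$ then yields $\widehat M\in\sat{(\tuple{\ell:\delta}\Inter\sigma)\to\kappa}{\II}$, so $\widehat M(\Aset{\ell}(\widehat V,s))\in\sat{\kappa}{\II}$. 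The cases $\updRa$, $\updRb$, $\lkpR$ and $\confR$ are routine, using only the matching clauses of \Cref{def:comp-interp} and the store axioms. The main obstacle is the $\bindR$ case: one must check that running the closed instance of $M$ on a good store actually terminates with a result whose store is again good for $\sigma'$, and then splice this run with that of $\unit{W}\Bind\widehat V$, which is legitimate precisely because reduction is deterministic and \Cref{lem:big-small} transports reductions under $\Bind$; a secondary point is the verification, via the store equational axioms, that $\Aset{\ell}(-,-)$ preserves membership in $\sat{\sigma}{\II}$ whenever $\ell\notin\dom{\sigma}$.
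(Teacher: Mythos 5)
Your proof is correct and follows essentially the same route as the paper's: induction on the typing derivation over the saturated sets of \Cref{def:comp-interp}, using \Cref{lem:subseteq} for $\leqR$, the determinism of reduction together with \Cref{lem:big-small} and \Cref{prop:big-small} to splice and expand reductions in the $\bindR$ case, and the store axioms to show $\Aset{\ell}(\widehat V,s)\in\sat{\tuple{\ell:\delta}\Inter\sigma}{\II}$ in the $\setR$ case. Making the mutual induction over all four judgment forms explicit, and packaging the expansion steps as invariance of $M(\cdot)$ along (deterministic) reduction, are presentational refinements of the paper's argument rather than a different method.
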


\begin{proof}
By induction over the derivation of $\Gamma \der M : \tau$. We abbreviate 
$\overline{M} \equiv M\Subst{V_1}{x_1} \cdots \Subst{V_n}{x_n}$.
The thesis is immediate if the derivation consists just of $\varR$, and it follows immediately by the induction hypothesis if the derivation
ends by either $\lambdaR$,  $\unitR$ or $\interR$. In case it ends by $\leqR$ we use Lemma \ref{lem:subseteq}. The following are the remaining cases.

\begin{description}

\item Case $\bindR$: then $M \equiv M' \Bind V$, $\tau \equiv  \sigma \to \delta'' \times \sigma''$ and the derivation ends by:
	\[
	\prooftree
		\Gamma \der M' : \sigma \to \delta' \times \sigma'
		\quad
		\Gamma \der V : \delta' \to \sigma' \to \delta'' \times \sigma''
	\justifies
		\Gamma \der M' \Bind V : \sigma \to \delta'' \times \sigma''
	\endprooftree
	\]
	By induction $\overline{M'} \in \sat{\sigma \to \delta' \times \sigma'}{\II}$,
	hence for all $s \in \sat{\sigma}{\II}$ there exists a result $(W,s')\in \sat{\delta' \times \sigma'}{\II}$ such
	that $\BigStep{\overline{M'}}{s}{W}{s'}$. 
	Now
	\[\begin{array}{llll}
	& &\BigStep{\overline{M'}}{s}{W}{s'} \\
	& \Then &  (\overline{M'}, s) \RedStar  (\unit{W}, s') & \mbox{by Proposition \ref{prop:big-small}} \\
	& \Then &  (\overline{M'} \Bind \overline{V}, s) \RedStar  (\unit{W} \Bind \overline{V}, s') & \mbox{(*)} \\
	\end{array}\]
	Also by induction, we know that $\overline{V} \in \sat{\delta' \to \sigma' \to \delta'' \times \sigma''}{\II}$, therefore
	there exists $(U, t) \in \sat{\delta'' \times \sigma''}{\II}$ such that $\BigStep { \unit{W} \Bind \overline{V}} {s'} {U}{t} $; on the other hand:
	\[\begin{array}{llll}
	& & \BigStep { \unit{W} \Bind \overline{V}} {s'} {U}{t} \\
	& \Then &  (\unit{W} \Bind \overline{V}, {s'}) \RedStar (\unit{U}, t)   & \mbox{by Proposition \ref{prop:big-small}} \\
	& \Then &  (\overline{M'} \Bind \overline{V}, s) \RedStar  (\unit{U}, t) & \mbox{by  (*)} \\
	& \Then &  \BigStep{\overline{M'} \Bind \overline{V}}{s}{U}{t} & \mbox{by Proposition \ref{prop:big-small}}
	\end{array}\]
	Then we conclude that $\overline{M' \Bind V} \equiv \overline{M'} \Bind \overline{V} \in \sat{\sigma \to \delta'' \times \sigma''}{\II}$.

\medskip
\item Case $\getR$: then $M \equiv \get{\ell}{\lambda x.M'}$, $\tau \equiv (\tuple{\ell : \delta} \Inter \sigma) \to \kappa$ and the derivation ends by:
	\[
	\prooftree
		\Gamma, \, x:\delta \der M' : \sigma \to \kappa
	\justifies
		\Gamma \der \get{\ell}{\lambda x.M'} : (\tuple{\ell : \delta} \Inter \sigma) \to \kappa
	\endprooftree
	\]
	Assume $s \in \sat{\tuple{\ell : \delta} \Inter \sigma}{\II} =  \sat{\tuple{\ell : \delta}}{\II} \cap  \sat{\sigma}{\II}$. 
	Since $x:\delta$ is in the premise context, we have $\delta\in \LangD$ by context definition. By induction hypothesis for all $V\in\sat{\delta}{\II}$ we have
	 $ \overline{M'}\Subst{V}{x}\in \sat{\sigma\to\kappa}{\II}$.
	
	We have to prove that $(\overline{\get{\ell}{\lam x.M'}})(s)\equiv(\get{\ell}{\lam x.\overline{M'}})(s)\in \sat{\kappa}{\II}$ if 
	$s\in \sat{\tuple{\ell:\delta}\Inter\sigma}{\II}=\sat{\tuple{\ell:\delta}}{\II}\cap\sat{\sigma}{\II}$
	
	So we have to handle just two cases: $\delta$ is an intersection or not.
	The most interesting case is when $\delta$ is not an intersection. By hypothesis we have:
	\begin{enumerate}[(a)]
		\item\label{casoA} $s\in \sat{\tuple{\ell : \delta}}{\II}\Then \ell \in \dom{s}\And \exists V \in \sat{\delta}{\II}. \ \lkp{\ell}{s}= V$
		\item\label{casoB} $s\in\sat{\sigma}{\II}\And V \in \sat{\delta}{\II}\Then (\overline{M'}\Subst{V}{x})(s)\in \sat{\kappa}{\II}$
	\end{enumerate}
	 
	 Since $\lkp{\ell}{s}= V$, $(\get{\ell}{\lam x.\overline{M'}},s)\Red (\overline{M'}\Subst{V}{x}, s)$, and by Proposition \ref{prop:big-small} we have 
	 
	 $ (\get{\ell}{\lam x.\overline{M'}})(s) = (\overline{M'}\Subst{V}{x})( s) \in \sat{\kappa}{\II} $

	Now consider the remaining case when $\delta$ is an intersection, that is $\delta = \bigwedge_{i\in I} \delta_i$, where every $\delta_i$ is not an intersection. 
	By definition, we know that 
	$ V\in \sat{\delta}{\II} \Iff \forall i\in I. \ V\in \sat{\delta_i}{\II} $.
	Reasoning as in the previous case,  we conclude that $ \get{\ell}{\lam x. \overline{M'}}\in \sat{\tuple{\ell_i : \delta_i}\Inter \sigma\to \kappa}{\II}$. 
	But $\tuple{\ell : \delta_i}\Inter \sigma\to \kappa \leq \tuple{\ell: \bigwedge_{i \in I}\delta_i}\Inter \sigma\to \kappa\equiv \tuple{\ell : \delta}\Inter \sigma\to \kappa$. 
	We conclude by Lemma \ref{lem:subseteq}.
\medskip
\item Case $\setR$: then $M \equiv \set{\ell}{V}{M'}$, $\tau \equiv \sigma \to \kappa$ and the derivation ends by:
	\[
	\prooftree
		\Gamma \der V : \delta
		\quad
		\Gamma \der M' :  (\tuple{\ell : \delta} \Inter \sigma)  \to \kappa
		\quad
		\ell \not \in \dom{\sigma}
	\justifies
		\Gamma \der \set{\ell}{V}{M'} : \sigma \to \kappa
	\endprooftree
	\]
	In this case $\overline{M} \equiv \set{\ell}{\overline{V}}{\overline{M'}}$. Now, let $s \in \sat{\sigma}{\II}$ be arbitrary: then
	\[ (\set{\ell}{\overline{V}}{\overline{M'}}, s) \Red (\overline{M'}, \upd{\ell}{\overline{V}}{s}) \]
	so that  $\set{\ell}{\overline{V}}{\overline{M'}}(s) = \overline{M'}(\upd{\ell}{\overline{V}}{s})$. 
	From the side condition $\ell \not \in \dom{\sigma}$ and the hypothesis $s \in \sat{\sigma}{\II}$
	we deduce that for some set of indexes $J$, labels in $\dom{\sigma} = \Set{\ell_j \mid j \in J}$ and types $\delta_j\in \LangD$:
	\[  \sat{\sigma}{\II} = \sat{ \bigwedge_{j \in J} \tuple{\ell_j : \delta_j} }{\II} = \bigcap_{j \in J} \sat{ \tuple{\ell_j : \delta_j} }{\II} \]
	where $\ell \neq \ell_j$ for all $j \in J$. By Definition \ref{def:store-terms-axioms}.\ref{def:store-terms-axioms-b} we have, for all $j \in J$:
	\[ \Aget{\ell_j}(\Aset{\ell}(\overline{V}, s)) = \Aget{\ell_j}(s) \in \sat{\delta_j}{\II} \mbox{ and } \ell_j \in \dom{s}\]
	hence $\upd{\ell}{\overline{V}}{s} \in  \sat{\sigma}{\II}$ as well. On the other hand 
	\[ \lkp{\ell}{\upd{\ell}{\overline{V}}{s}} = \overline{V} \in \sat{\delta}{\II} \]
	by induction, so that $\upd{\ell}{V}{s} \in \sat{\tuple{\ell : \delta}}{\II}$ and hence
	\[ \upd{\ell}{V}{s} \in \sat{\tuple{\ell : \delta}}{\II} \cap \sat{\sigma}{\II} = \sat{\tuple{\ell : \delta} \Inter \sigma}{\II} \]
	It follows that
	\[ \set{\ell}{\overline{V}}{\overline{M'}}(s) = \overline{M'}(\upd{\ell}{\overline{V}}{s}) \in \sat{\kappa}{\II}\]
	since $\overline{M'} \in \sat{(\tuple{\ell : \delta} \Inter \sigma)  \to \kappa}{\II}$ by induction.
		
\end{description}
\end{proof}

\begin{theorem}[Characterization of convergence]\label{thr:char-convergence}
\[\forall M \in \ComTerm^0. ~ M \Downarrow ~ \iff ~ \der M : \omegaS \to \omegaD \times \omegaS\]
\end{theorem}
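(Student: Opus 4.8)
The plan is to obtain the statement almost immediately from the results already established. The only-if direction is precisely Lemma~\ref{lem:char-only-if}, so nothing remains to be done there. For the if-direction I would argue as follows.

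Assume $M \in \ComTerm^0$ and $\der M : \omegaS \to \omegaD \times \omegaS$. Since $M$ is closed its typing context is empty, so Lemma~\ref{lem:compLemma}, applied with the empty list of value substitutions, yields $M \in \sat{\omegaS \to \omegaD \times \omegaS}{}$. Now unfold the definition of the saturated set: by clause~\ref{def:comp-interp-4} of Definition~\ref{def:comp-interp} together with $\sat{\omegaS}{} = \StoreSort^0$, this says that $M(s) \in \sat{\omegaD \times \omegaS}{}$ for every closed store $s$; and by clause~\ref{def:comp-interp-3}, $\sat{\omegaD \times \omegaS}{} = \sat{\omegaD}{} \times \sat{\omegaS}{} = \ValTerm^{\,0} \times \StoreSort^0$, a set that does not contain the divergence marker $\BotC$. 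Hence $M(s) \neq \BotC$, i.e. $(M,s)$ does not diverge, and so $(M,s)\Downarrow$, for every $s \in \StoreSort^0$; by Definition~\ref{def:convergence} this is exactly $M \Downarrow$. Combining with Lemma~\ref{lem:char-only-if} gives the equivalence.

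The real work is not in this last step but in the lemmas it invokes, and the main obstacle has already been overcome in Lemma~\ref{lem:compLemma}: the subtle part is the $\setR$ case, showing that the saturated-set interpretation survives an assignment to a location whose new value has a type unrelated to the old one --- here the side condition $\ell \not\in \dom{\sigma}$ of rule $\setR$ is exactly what places the updated store $\upd{\ell}{\overline{V}}{s}$ into $\sat{\tuple{\ell : \delta} \Inter \sigma}{}$. The only observation genuinely particular to this theorem is that $\BotC \notin \sat{\omegaD \times \omegaS}{}$ (cf. Remark~\ref{rem:type-interpretation}): this is what converts ``typable with $\omegaS \to \omegaD \times \omegaS$'' into ``defined on every store'', and it would break if $\omegaD \times \omegaS$ were identified with $\omegaC$ --- which is precisely why the type theory keeps $\omegaC$ strictly above the product types.
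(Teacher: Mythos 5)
Your proof is correct and follows essentially the same route as the paper's: the only-if direction is delegated to Lemma~\ref{lem:char-only-if}, and the if direction applies Lemma~\ref{lem:compLemma} to the empty context and then unfolds $\sat{\omegaS \to \omegaD \times \omegaS}{}$ to conclude $M(s) \in \ValTerm^{\,0} \times \StoreSort^{\,0}$ for every closed $s$, hence convergence. Your closing remark that the argument hinges on $\BotC \notin \sat{\omegaD \times \omegaS}{}$, and thus on keeping $\omegaC$ strictly above the product types, accurately identifies the point the paper itself flags in Remark~\ref{rem:type-interpretation} and in the discussion of the correction to the conference version.
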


\begin{proof}
The only-if part is Lemma \ref{lem:char-only-if}. To show the if part,
by Lemma \ref{lem:compLemma} $\der M : \omegaS \to \omegaD \times \omegaS$ implies $M \in \sat{\omegaS \to \omegaD \times \omegaS}{\II}$,
where the typing context $\Gamma = \emptyset$ and no substitution is considered since $M$ is closed.

Recall that $\sat{\omegaS}{\II} = \StoreSort^{\,0}$ and $\sat{\omegaD \times \omegaS}{\II} = \ValTerm^{\,0} \times \StoreSort^{\,0}$.
Therefore, for any $s \in \StoreSort^{\,0}$ we have $M(s) \in \sat{\omegaD \times \omegaS}{\II}$,
namely there exist $V \in \ValTerm^{\,0}$ and $t \in  \StoreSort^{\,0}$
such that $\BigStep{M}{s}{V}{t}$, hence $M \Downarrow$.
\end{proof}

We finish this section with some final remarks and examples. 
Together with the Soundness Theorem \ref{thr:soundness},
Theorem \ref{thr:char-convergence} implies that if $M\!\Downarrow$ then for any $\lamImp$-model $(D, \GS)$
the denotation $\SemSD{M} e \neq \bot_{\GS D}$ for any environment $e$. Indeed, dropping the $e$ which is immaterial as $M$ is closed, we have
\[\begin{array}{llll}
 \der M : \omegaS \to \omegaD \times \omegaS & \Then & \SemSD{M} \in \Sem{\omegaS \to \omegaD \times \omegaS} \\
 & \Then & \forall \storeMap \in \Sem{\omegaS}.\; \SemSD{M}\,\storeMap \in \Sem{\omegaD \times \omegaS} = D \times S \\
 & \Then & \forall \storeMap \in S.\,\SemSD{M}\,\storeMap \neq \bot_C \\
 & \Then & \SemSD{M} \neq \bot_{\GS D}
\end{array}
\]
where remember that $S = (D_\bot)^\Label$ and $C = (D \times S)_\bot$, hence $\bot_C \not \in D \times S$, and that
$\bot_{\GS D}  \in [S \to C]$ is the everywhere undefined function over $S$, namely $\bot_{\GS D}\; \storeMap = \bot_C$ for all
$\storeMap \in S$. 

Not surprisingly, $\Omega_c$ has no type strictly less
than $\omegaC$ by Theorem \ref{thr:char-convergence}. Indeed, $\Omega_c \!\Uparrow$, since $(\Omega_c, s) \to (\Omega_c, s)$
for any store term $s$. This corresponds to the fact that $\SemSD{\Omega_c}\, e\, \storeMap$ for any environment $e$ and $\storeMap \in S$,
namely $\SemSD{\Omega_c} = \bot_{\GS D}$.

However, the converse implication does not hold, because for example $\SemSD{\get{\ell}{\lambda x.\unit{x} }} \neq \bot_{\GS D}$ since
$\SemSD{\get{\ell}{\lambda x.\unit{x} }} \storeMap \neq \bot_C$ if and only if $\ell \in \dom{\storeMap}$, hence not for all stores. In fact, $\get{\ell}{\lambda x.\unit{x} } \Uparrow$ and this is the case of any (closed) $B$ yielding a blocked configuration $(B, s)$ for some store term $s$. By Theorem \ref{thr:char-convergence} we also derive
that $\der \get{\ell}{\lambda x.\unit{x} } : \sigma \to \kappa$ for some non-trivial $\kappa \neq \omegaD \times \omegaS$ only if $\ell \in \dom{\sigma}$,
for which it suffices to take $\sigma = \tuple{\ell : \omegaD}$. 

By this last remark we may argue that, taking $N \equiv  \unit{x}$ in 
$M \equiv \set{\ell}{V}{\unit{W}} \, ; \, \get{\ell}{\lambda x.N}$ from Figure \ref{ex:invariance}, we see that $M$
can be typed by $\omegaS \to \kappa$ for some non-trivial $\kappa$ even if $\ell \not \in \dom{\omegaS}$, since
the otherwise non-convergent term $\get{\ell}{\lambda x.\unit{x} }$ now has type $\tuple{\ell:\delta} \to \kappa$, showing
the dependency from a store defined on $\ell$, which is provided by the term $\set{\ell}{V}{\unit{W}}$. For the very same
reason the term $\get{\ell}{\lambda x.\unit{x}} \, ; \,  \set{\ell}{V}{\unit{W}}$ is not typeable by 
$\omegaS \to \kappa \leq \omegaS \to \omegaD \times \omegaS$ as 
it diverges.



\section{Discussion and Related Work}
\label{sec:imp-Related}

The present work is the full and revised version of 
\cite{deLiguoroT21a}.
But for the operators $\textit{get}_{\ell}$ and $\textit{set}_{\ell}$, the calculus syntax is the same as in \cite{deLiguoroTreglia20}, where we considered a pure untyped computational $\lambda$-calculus, namely without operations nor constants. The reduction relation in Section \ref{sec:imp-operational} is strictly included in that one considered there, which is
the compatible closure of the monadic laws from \cite{Wadler-Monads}, oriented from left to right. In contrast, here we just retain rule $\betaR$
and take the closure under rule $\BindR$; as a consequence the reduction is deterministic.
Nonetheless, as shown in Proposition \ref{prop:true-eq} the monadic laws are consistent with the convertibility relation induced by the reduction of $\lamImp$.

The algebraic operators $\textit{get}_{\ell}$ and $\textit{set}_{\ell}$ come from Plotkin and Power \cite{PlotkinP02,PlotkinP03,Power06}.
The algebra of store terms is inspired to \cite{PlotkinP02}, where the store monad in \cite{Moggi'91} is
generated by the update and lookup operations. Our construction, however, does not perfectly match with Plotkin and Power's one, because here the set $\Label$ of locations is infinite.

We have borrowed the notation for $\textit{get}_{\ell}$ and $\textit{set}_{\ell}$ from the ``imperative $\lambda$-calculus'' in chapter 3 of \cite{amsdottorato9075},
where also a definition of the convergence predicate of a configuration to a result is considered. Such a definition is a particular case of
the analogous notion in \cite{LagoGL17,LagoG19} for generic algebraic effects.
In these papers it is stated in semantic terms, while we preferred the syntactical treatment
in the algebra of store terms.

The type system is the same as in \cite{deLiguoroT21a}, but for the crucial strict inequalities 
$\tuple{\ell:\omegaD} < \omegaS$ which were erroneously postulated as equalities in the conference paper.
The effect of such an error was that even a blocked term may have type 
$\omegaS \to \omegaD \times \omegaS$, contradicting the characterization theorem.
The main changes in the present version are therefore in the definition of saturated sets in 
Definition \ref{def:comp-interp} and then in the proof of Lemma \ref{lem:compLemma}; besides their intrinsic complexity
they now convey a much clearer intuition, where $\BotC$ represents a divergent computation.

To the present paper we have added a denotational interpretation of both terms and types; as a matter of fact, types have been derived from the semantics, as shown in \cite{deLiguoroT23}. The reason 
for adding the denotational interpretations here is 
to motivate and to help the understanding of the type system and of the interpretation of the saturated set itself. The exposition of this part is short and self-contained. 
The study of the semantics of the calculus and of the type system can be found in \cite{deLiguoroT23}.

With respect to the pre-existing literature on imperative and effectful $\lambda$-calculi, 
let us mention Pfenning's work \cite{Davies-Pfenning'00}, which is further discussed in 
\cite{DR07}. In Pfenning's and others' paper, intersection types are added to the
Hindley-Milner type system for ML to enhance type expressivity and strengthen polymorphism. However, the resulting system is unsound, which forces the restriction
of intersection types to values and the loss of certain subtyping inequations which are crucial in any type system inspired to \cite{BCD'83}, including the present one. The issue is due to reference types in ML, where the type of a location is its ``intrinsic'' type in the sense of Reynolds \cite{Reynolds'00}.
In contrast, our store types are predicates over the stores, namely ``extrinsic'' types, 
telling what are the meanings of the values associated to the locations in the store.
Indeed our system enjoys type invariance under reduction and expansion, 
which does not hold for Pfenning's system.

A further line of research which seems relevant to us concerns 
the type and effect systems, which have
been introduced in \cite{GiffordL86} and pursued 
in \cite{TalpinJ94}. 
In the insightful paper \cite{WadlerT03} a type and effect judgement $\Gamma \der e:A, \varepsilon$ is translated into
and ordinary typing $\Gamma \der e: T^\varepsilon A$, where $T$ is a monad.
This has fostered the application of type systems with monadic types to static analysis for code transformation in \cite{BentonKHB06,BentonKBH09},
but also raised the question of the semantics of the types $T^\varepsilon A$.

In the papers by Benton and others, the semantics of monadic types with effect decorations is given in terms of PERs that are preserved by
read and write operations. Such semantics validates equations that do hold under assumptions about the effects of the equated programs;
e.g. only pure terms, neither depending on the store nor causing any mutation, can be evaluated in arbitrary order, 
or repeated occurrences of the same pure code can be replaced by a reference where the value of the code is stored after computing it just once.

Such properties are nicely reflected in our types: if $\lambda x.M$ has type $\delta \to \sigma \to \delta' \times \sigma$, and $\dom{\sigma}$
includes all the $\ell$ occurring in $M$, then we know that the function represented by $\lambda x.M$ is pure, as $\sigma$ is non-trivial and $\lambda x.M$ leaves the store unchanged; similarly if
$M:  \sigma \to \delta \times \sigma$ and $N:\sigma \to \delta' \times \sigma$ then for any $P:\sigma \to \kappa$ both
$M;N;P$ and $N;M;P$ will have the same types, and hence the same behaviour. In general, this suggests how to encode regions with store types
in our system.

As we wrote in \cite{deLiguoroT23}, an interesting development of our system is to consider non-idempotent intersection types,
also called quantitative types, where $\tau \neq \tau \Inter \tau$ for any $\tau$, for which see the survey \cite{BucciarelliKV17}.
The type system we have treated here is instead a classical intersection type system namely idempotent;
in the time elapsed between \cite{deLiguoroT21a} and the present writing, a quantitative type system inspired to the present one
has been introduced in \cite{AlvesKR23}  for a call-by-value $\lambda$-calculus with global store to capture exact measures of time and space in the reduction of terms. 

Finally, in \cite{GavazzoTV24} the authors have explored how intersection type systems can be used in the wider case of algebraic operations of
various kind, and in connection with abstract relational reasoning, hence going beyond the particular case of side-effects.

\section{Conclusion}
In this paper, we presented a type assignment system to study the semantics of an imperative computational $\lambda$-calculus equipped
with global store and algebraic operators. 
The system defines the semantics of the untyped calculus, and we obtained a type-theoretic characterization of convergence.

We see the present work as a case study, albeit relevant in itself, toward a type-theoretic analysis
of effectful $\lambda$-calculi based on Moggi's idea of modelling effects by means of monads.



\end{document}
\endinput